\def\BibTeX{{\rm B\kern-.05em{\sc i\kern-.025em b}\kern-.08emT\kern-.1667em\lower.7ex\hbox{E}\kern-.125emX}}
\newcommand{\cmark}{\textcolor{green!80!black}{\ding{51}}}
\newcommand{\xmark}{\textcolor{red}{\ding{55}}}
\renewcommand*{\arraystretch}{1.5}%
\definecolor{tabred}{RGB}{230,36,0}%
\definecolor{tabgreen}{RGB}{0,116,21}%
\definecolor{taborange}{RGB}{250,124,30}%
\definecolor{tabbrown}{RGB}{171,70,0}%
\definecolor{tabyellow}{RGB}{251,253,169}%
\newcommand*{\vcorr}{%
  \vadjust{\vspace{-\dp\csname @arstrutbox\endcsname}}%
  \global\let\vcorr\relax
}%
\newcommand{\maxval}{100}        
\newcommand{\maxbarwidth}{1.5}   
\newcommand{\cellbarACC}[1]{%
  \pgfmathparse{#1/\maxval*\maxbarwidth}%
  \edef\barlen{\pgfmathresult cm}%
  \begin{minipage}[c][1.5em][c]{\maxbarwidth cm}
    \colorbox{teal!25}{\makebox[\barlen][l]{}}%
    \hspace*{-\barlen}%
    \makebox[\maxbarwidth cm][c]{#1}%
  \end{minipage}%
}
\newcommand{\cellbarSE}[1]{%
  \pgfmathparse{#1/\maxval*\maxbarwidth}%
  \edef\barlen{\pgfmathresult cm}%
  \begin{minipage}[c][1.5em][c]{\maxbarwidth cm}
    \colorbox{magenta!15}{\makebox[\barlen][l]{}}%
    \hspace*{-\barlen}%
    \makebox[\maxbarwidth cm][c]{#1}%
  \end{minipage}%
}
\theoremstyle{plain}         
\newtheorem{thm}{Theorem}    
\newtheorem{lem}{Lemma}  
\newtheorem{defi}{Definition}
\theoremstyle{definition}
\newenvironment{packeditemize}{
	\begin{list}{$\bullet$}{
			\setlength{\labelwidth}{4pt}
			\setlength{\itemsep}{0pt}
			\setlength{\leftmargin}{\labelwidth}
			\addtolength{\leftmargin}{\labelsep}
			\setlength{\parindent}{0pt}
			\setlength{\listparindent}{\parindent}
			\setlength{\parsep}{0pt}
			\setlength{\topsep}{1pt}}}{\end{list}}
\newenvironment{circitemize}{
	\begin{list}{$\circ$}{
			\setlength{\labelwidth}{4pt}
			\setlength{\itemsep}{0pt}
			\setlength{\leftmargin}{\labelwidth}
			\addtolength{\leftmargin}{\labelsep}
			\setlength{\parindent}{0pt}
			\setlength{\listparindent}{\parindent}
			\setlength{\parsep}{0pt}
			\setlength{\topsep}{1pt}}}{\end{list}}
\definecolor{light_cyan}{rgb}{0.53, 0.75, 0.77}
\definecolor{light_blue}{rgb}{0.466, 0.655, 0.94}
\definecolor{light_pink}{rgb}{0.98, 0.55, 0.565}
\definecolor{light_yellow}{rgb}{0.98, 0.83, 0.32}
\renewcommand\footnotetextcopyrightpermission[1]{} 
\pgfplotsset{compat=1.9}
\DeclareMathAlphabet{\mathcal}{OMS}{cmsy}{m}{n}
\begin{document}

\title{\textsc{PoLO}: Proof-of-Learning and Proof-of-Ownership at Once \\with Chained Watermarking}


\author{Haiyu Deng$^{1}$, Yanna Jiang$^{1}$, Guangsheng Yu$^{1}$, Qin Wang$^{1,2}$,  Xu Wang$^{1}$, \\ Baihe Ma$^{1}$, Wei Ni$^{1,2}$, Ren Ping Liu$^{1}$}
\affiliation{
\textit{$^1$University of Technology Sydney} $|$ \textit{$^2$ CSIRO Data61, Australia} 
}


\begin{abstract}
Machine learning models are increasingly shared and outsourced, raising requirements of verifying training effort (Proof-of-Learning, PoL) to ensure claimed performance and establishing ownership (Proof-of-Ownership, PoO) for transactions. When models are trained by untrusted parties, PoL and PoO must be enforced \textit{together} to enable protection, attribution, and compensation. However, existing studies typically address them separately, which not only weakens protection against forgery and privacy breaches but also leads to high verification overhead.

We propose PoLO, a unified framework that simultaneously achieves PoL and PoO using \textit{chained watermarks}. PoLO splits the training process into fine-grained training shards and embeds a dedicated watermark in each shard. Each watermark is generated using the hash of the preceding shard, certifying the training process of the preceding shard. The chained structure makes it computationally difficult to forge any individual part of the whole training process. The complete set of watermarks serves as the PoL, while the final watermark provides the PoO. PoLO offers more efficient and privacy-preserving verification compared to the vanilla PoL solutions that rely on gradient-based trajectory tracing and inadvertently expose training data during verification, while maintaining the same level of ownership assurance of watermark-based PoO schemes. Our evaluation shows that PoLO achieves \textbf{99\%} watermark detection accuracy for ownership verification, while preserving data privacy and cutting verification costs to just \textbf{1.5–10\%} of traditional methods. Forging PoLO demands \textbf{1.1–4×} more resources than honest proof generation, with the original proof retaining over \textbf{90\%} detection accuracy even after attacks.

\end{abstract}

\keywords{Machine Learning, Proof-of-Learning/-Ownership, Watermark}

\maketitle


\section{Introduction}\label{sec:intro}
As machine learning (ML) models evolve from academic artifacts into economically valuable digital assets~\cite{idowu2022asset}, their exchange and commercialization increasingly take place through enforceable model marketplaces (EMMs)~\cite{emm-2}, as well as through federated platforms and outsourced pipelines. Provenance and ownership are established before models can be listed, licensed, or billed. These settings therefore require verification of both legitimate training effort and rightful model ownership, since such guarantees underpin trust, attribution, and compensation in real-world ML trade~\footnote{Non-EMM redistribution, grey markets, internal theft, and pure value-destruction attacks are out of scope.}.
Some examples are: (i) \textit{incentive-driven distributed learning}~\cite{10323597}: 
participants prove that models are properly trained to receive fair rewards; (ii) \textit{ML marketplaces}~\cite{9445602}: buyers need confidence that models are genuinely trained and transferable without dispute; (iii) \textit{outsourced training}~\cite{yu2024splitunlearning}: 
organizations ensure that externally developed models are both authentic and securely attributed. 
All this introduces a fundamental challenge:

\begin{figure}[!t]
        \centering
        \includegraphics[width=0.8\columnwidth]{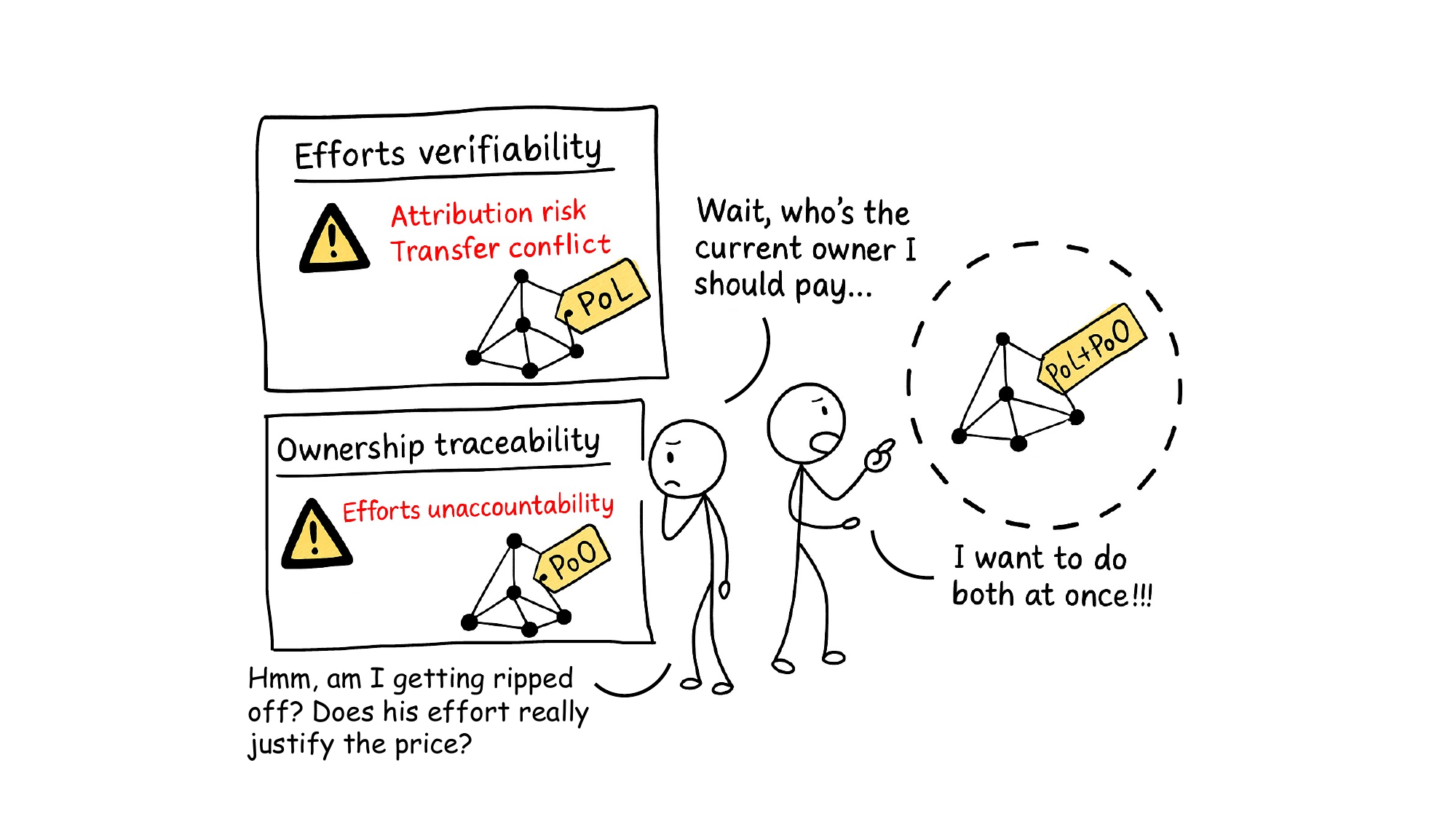}
        \caption{\textbf{Why at once?} PoL verifies training effort but lacks ownership tracking, while PoO ensures ownership but fails to justify training efforts. Separating PoL and PoO creates attribution risks and ownership conflicts.}  \label{why_polo}
\end{figure}

 \vspace{3pt}
\underline{\textit{How can a recipient verify model legitimacy in ML ecosystems?}}
 \vspace{3pt}
 
Legitimacy here has two facets~\cite{xie2025dataset}: Proof-of-Learning (PoL) verifies that the claimed computational effort was genuinely invested, while Proof-of-Ownership (PoO), typically implemented via watermarking, binds verifiable ownership information to the model in a tamper-resistant way.

We argue that \textit{PoL and PoO are intertwined and must be addressed jointly to establish a complete proof of legitimacy} (cf Fig.\ref{why_polo}). This is not just a design preference but a requirement emerging from regulatory regimes that already demand provenance and ownership checks for listing, billing, and dispute resolution in real distribution channels such as cloud providers,
model hubs,
and marketplaces~\cite{9445602,EUAIActArticle53}. 
These EMM-style platforms operate under notice and takedown and transparency duties (for example the U.S.\ DMCA safe harbor~\cite{USDMCASafeHarbor}
and EU AI Act Article~53~\cite{EUAIActArticle53}), which in practice require verifying both training effort and ownership together~\cite{EUCommissionGPAI2024}. However, existing solutions treat PoL and PoO as separate mechanisms. When naively stacked, the overall system degrades in security and effectiveness, failing primarily due to the weaknesses of current PoL approaches:

 \vspace{3pt}
\begin{packeditemize}
    \item \textit{Heightened vulnerability}: 
    As first framed by Jia et al. (S\&P’21~\cite{frist_pol}), existing PoLs require exposing intermediate training states and training data; follow-ups by Zhang et al. (S\&P’22~\cite{pol_attack1}) and Thudi et al. (USENIX’22~\cite{pol_forge}) showed these artifacts can be manipulated or fabricated to forge plausible trajectories.
    \item \textit{High verification costs}:
    In the Jia et al. formulation and its variants~\cite{frist_pol,pol_attack1,pol_attack2}, verifiers must replay or retrain from recorded snapshots, causing verification costs to scale with model size, dataset size, and update count. This approaches the full cost of training, making the approach impractical for large or high-capacity models.
\end{packeditemize}
Simply combining PoL and PoO on top of existing PoLs is ineffective, as the entire system just inherits PoL’s vulnerabilities and overhead without providing stronger legitimacy.
 \vspace{3pt}

Our goal is to \textit{redesign PoL from the ground up to accommodate these enforceable environments for ML models}, embedding PoL with PoO to provide a single, seamless framework that validates training effort and ownership together under \underline{\textit{two core assumptions}}:
\begin{packeditemize}
    \item Any attack whose total computational cost is no less than that of the corresponding honest training is considered \textit{economically unviable} and therefore irrational.
    \item Within EMMs, models lacking a unique, verifiable watermark are deemed invalid and cannot be monetised. A rational attacker must therefore both erase the incumbent legal watermark and embed a new forged watermark to assert ownership; attacks that only remove watermarks~\cite{287178} 
    merely destroy value and are \textit{economically irrelevant} for our security analysis.
\end{packeditemize}
 \vspace{3pt}
Under these assumptions, rational participants aim to earn rewards in EMMs. To profit, an attacker must invest in a removal-plus-re-embedding phase whose cost is comparable to, or exceeds, honest training. Any forgery whose cost is not strictly lower than legitimate training is thus treated as a failed attack. Our framework does \textit{not} cover grey markets, internal theft, or pure availability attacks where the goal is to destroy value rather than to claim it.
This presents the technical challenges of PoL+PoO and motivates the following research questions (RQs):

\begin{packeditemize}
\item \textbf{\textcolor{violet}{RQ1}: }\textit{How can PoO be extended to certify training effort for PoL?} Existing PoO schemes embed a watermark only in the final model, which cannot reflect the training trajectory and makes them incompatible with PoL under the premise that only complete and traceable effort can be rewarded.

\item \textbf{\textcolor{violet}{RQ2}: }
\textit{How can PoO-based PoL remain secure against rational adversaries who seek to forge ownership at minimal cost?}
In EMMs, the design must ensure that this removal-plus-re-embedding phase requires computational effort comparable to honest training, so that any economically rational forgery attempt is no cheaper than legitimate training and thus unattractive.
\end{packeditemize}

 \vspace{3pt}
\noindent\textbf{{Contributions.}} 
To answer these RQs, we propose a novel design, \textsc{PoLO}, which embeds \textit{chained watermarks} throughout the training process. Each watermark is deterministically derived using a hash function over partial model weights and auxiliary parameters from the previous training shard. This chaining cryptographically links training phases and embeds ownership information at every stage, rather than relying on gradient-based tracking that have been proven prone to forgery~\cite{pol_attack1,pol_attack2}. More importantly, \textsc{PoLO} traps attackers in an economic dilemma: they must either incur a cost comparable to honest training, making forgery \textit{economically impractical}, or suffer degraded model performance that fails verification. In either case, the attacker gains no benefit.


This insight highlights the necessity of \textsc{PoLO}’s chained watermarking: It shifts the defence from protecting isolated watermarks to securing the full training trajectory. By cryptographically linking each shard, \textsc{PoLO} ensures that any ownership claim must reflect genuine cumulative effort. Attacks targeting a single watermark are ineffective, as the proof depends on consistent verifiability across all stages, aligning security with rational incentives in EMMs.

 \vspace{3pt}
We achieve this in a stepwise manner:

\begin{packeditemize}
\item \textit{We formalize the concept of Proof-of-Anything (PoX)} ($\S$\ref{sec:related work}–$\S$\ref{sec:preliminaries}) as a unified framework for analyzing PoL and PoO. Inspired by Proof-of-Work (PoW), PoX provides a structured basis for systematically evaluating PoL methods in design, efficiency, and security. We identify key limitations in existing PoL methods, including inefficient proof generation, high verification costs, fragmented security, and privacy risks. Our findings highlight the need for an integrated solution that verifies both training effort (i.e., PoL) and ownership (i.e., PoO) efficiently and securely, motivating the design of \textsc{PoLO}.


\item \textit{We introduce \textsc{PoLO}} ($\S$\ref{sec:design}), a method that unifies PoL and PoO through chained watermarks. Rather than stacking PoL and PoO as independent modules, \textsc{PoLO} uses watermark embedding itself as the PoL verification primitive, so the ownership proof \emph{is} the training proof. Three properties follow: (i)~each watermark is derived from a cryptographic hash of the previous checkpoint, replacing gradient replay with a PoO primitive that certifies the full training trajectory; (ii)~shard-level watermark checking reduces verification cost to 1.5\%--10\% of training; and (iii)~the chain structure forces forgery cost to scale with chain depth, closing the gap that arises when PoL and PoO are treated independently. Because \textsc{PoLO} replaces gradient trajectories with hash-based chaining, it eliminates dataset exposure and reduces verification overhead in a single unified framework.

\item \textit{We conduct extensive experiments} ($\S$\ref{sec:experiment}) using ResNet, VGG, and BERT models on CIFAR10, TinyImageNet, and AG News datasets. We implement watermarking schemes with varying sizes to generate \textsc{PoLO} proofs, compare verification overhead with traditional PoL methods, and test resilience against two novel proof forgery attacks. Results show that \textsc{PoLO} achieves \textbf{99\%} watermark detection accuracy for ownership verification while fully preserving training data privacy. Verification requires only \textbf{1.5\%–10\%} of computational overhead compared to traditional PoL methods. Forging PoL proofs against \textsc{PoLO} demands \textbf{1.1x}–\textbf{4x} more resources than legitimate \textsc{PoLO} generation, with the original proof maintaining over \textbf{90\%} detection accuracy even after attacks.

\end{packeditemize}
\section{Formalizing Concurrent Works}\label{sec:related work}

We provide the formalization of PoX as a unified framework for analyzing PoL and PoO, outlining their core concepts and concurrent works. 
Tab.\ref{tab:notation} summarizes the notation.

\begin{table}[!]
\centering
\caption{Comparison with existing PoL.}
\label{tab_comp}
\renewcommand{\arraystretch}{1} 
\resizebox{\linewidth}{!}{
\begin{threeparttable}
\begin{tabular}{c|c|ccc|cc|cc}

\midrule

 \multicolumn{1}{c}{} 
 & \multicolumn{1}{c}{ \textbf{\makecell{Ownership\\(EMM)}}}
 & \multicolumn{3}{c}{\textbf{Security}} 
 & \multicolumn{2}{c}{\textbf{Privacy}} 
 & \multicolumn{2}{c}{\textbf{Low Ov.}}  \\
 
 \multicolumn{1}{c}{} &  \cellcolor{yellow!15}\ding{172} & \cellcolor{yellow!15}\ding{173} & \cellcolor{yellow!15}\ding{174} & \cellcolor{yellow!15}\ding{175} & \cellcolor{yellow!15}\ding{176} &  \cellcolor{yellow!15}\ding{177} & \cellcolor{yellow!15}\ding{178} & \cellcolor{yellow!15}\ding{179}  \\
 
\midrule

\makecell{PoL (GD)\\~\cite{frist_pol,pol_attack1,pol_attack2,pot-2026}}
& \xmark & \cmark & \xmark & \xmark & \xmark & \xmark  & \xmark & \xmark  \\

  PoL (hash)~\cite{incentive_pol}   
& \xmark  & \cmark & \cmark  &  \cmark & \xmark & \xmark  & \xmark & \xmark \\

 PoL (zkp)~\cite{zkf_pol}
& \xmark & \cmark  & \cmark & \cmark & \cmark & \cmark & \multicolumn{1}{c}{\xmark} & \cmark  \\

\midrule

\multicolumn{1}{c}{\textbf{PoLO} (ours)}
&\cellcolor{blue!10}   \cmark & \cellcolor{blue!10}  \cmark &\cellcolor{blue!10}   \cmark & \cellcolor{blue!10}  \cmark & \cellcolor{blue!10}  \cmark & \cellcolor{blue!10}  \cmark & \cellcolor{blue!10}  \cmark & \cellcolor{blue!10}  \cmark  \\
\cmidrule{5-9}
\end{tabular}
\begin{tablenotes}
        \small
        \item  \textbf{Notation:} \cmark~ for attack-resistance/property-held; \xmark~ vice versa; \textbf{Ov.} for overhead. Prevent:
        \item  \ding{172} Ownership enforcement in EMMs? \ding{173} Reverse reconstruction attacks?
        \item \ding{174} Synthetic trajectory attacks? \ding{175} Verification loophole attacks? 
        \item \ding{176} Avoid data sharing? \ding{177} Defend against gradient leakage? 
        \item \ding{178} Proof generation?  \ding{179} Verification?
\end{tablenotes}
\end{threeparttable}
}
\end{table}



\subsection{Proof-of-Anything and Proof-of-Learning}

\begin{defi}[Proof-of-anything, PoX]
A \textit{prover} $\mathcal{P}$ sends a proof $\mathbb{P}$ to a verifier $\mathcal{V}$, where the resources required for verifying whether $\mathcal{P}$ satisfies a certain condition $\Psi$ is negligible compared to the computational overhead incurred during the generation of $\mathbb{P}$. The core of this verification process is a function $\Theta$, which is irreversible in the sense that its output cannot be practically reproduced without complete knowledge of its input $\chi$, nor can a proof $\mathbb{P}(\mathcal{\chi}’,\Psi,\Theta)=\mathbb{P}(\mathcal{\chi},\Psi,\Theta)$ be forged when $\mathcal{\chi}’ \neq \mathcal{\chi}$.
\end{defi}

\begin{defi}[Proof-of-Work, PoW]
A proof-of-work has the verification function $\Theta$ as a cryptographic hash that maps an input $\chi$ (e.g., the $t$-th block body and a nonce) to an output that is computationally infeasible to invert, ensuring irreversibility. To generate a valid proof, the prover $\mathcal{P}$ must solve a computationally expensive puzzle so that the output of $\Theta(\chi)$ satisfies a difficulty condition $\Psi$. In contrast, verification by the verifier $\mathcal{V}$ is efficient, requiring only a single evaluation of $\Theta$ to check whether the resulting output meets $\Psi$.
\end{defi}

PoW~\cite{nakamoto2008bitcoin} is one of the earliest and most widely deployed PoX instantiations, used in decentralized ledger technologies to demonstrate computational effort.

 \smallskip
 \noindent\textbf{Proof-of-(more).} Beyond computational effort (as in PoW), various physical and virtual resources can serve as proof bases, including holdings (proof-of-stake, PoS)~\cite{8835275}, time (proof-of-elapsed-time, PoET)~\cite{wang2022multi}, storage and bandwidth (proof-of-space)~\cite{moran2019simple}, and reputation (proof-of-authority, PoA)~\cite{wang2022exploring}. In each case, the verification function $\Theta$ is typically closely tied to its input $\chi$.

 \vspace{3pt}
\begin{defi}[PoL]
\label{def:pol}
A \textit{prover} $\mathcal{P}$ constructs a PoL proof $\mathbb{P}$ by recording the complete training trajectory of ML model $W_T$ with $\mathbb{P} := (W_t, \mathbf{B}_t, A_t)_{t=0}^T$, where $W_t$ denotes model weights at the $t$-th epoch, $\mathcal{D}$ denotes the full training dataset, $\mathbf{B}_t \subseteq \mathcal{D}$ stands for mini-batches drawn from it, and $A_t$ denotes auxiliary training parameters. The verification condition $\Psi$ requires $\|W_{t+1}\! - \Theta(W_t, \mathbf{B}_t)\|\!\! \leq\!\! \epsilon$ for all $t$, where $\Theta$ is the model training function. 
\end{defi}

 \vspace{3pt}
\noindent\textbf{Gradient-based PoL.}
Gradient (GD)-based PoL~\cite{frist_pol} instantiates Definition~\ref{def:pol} through iterative parameter updates via $\Theta(W_t, \mathbf{B}_t) = W_t - \eta \nabla\mathcal{L}(W_t, \mathbf{B}_t)$, where $\eta$ is the learning rate and $\mathcal{L}$ is the loss function. 
Verification extracts consecutive pairs $(W_t, W_{t+1})$ from the trajectory $\mathbb{P}$ --- both entries are present in $\mathbb{P}$ as the snapshots at indices $t$ and $t+1$ --- together with the corresponding batches $\mathbf{B}_t$. 
Studies have revealed vulnerabilities in this approach including synthetic trajectory attacks~\cite{pol_attack1} and gradient matching exploits~\cite{pol_attack2}, driving the need for enhanced verification mechanisms.
We note that the \emph{natural non-reproducibility} of SGD (i.e., two honest runs from the same starting point may diverge due to hardware non-determinism and random batching) is a distinct phenomenon from \emph{adversarial trajectory forgery}, in which an attacker deliberately constructs a plausible-looking trajectory without performing the training.
Natural non-reproducibility does not imply immunity to adversarial forgery; indeed, the attacks above exploit exactly this gap.
Recent schemes for DNN ownership verification~\cite{pot-2026} also follow this log–and–replay design of training trajectories, so they inherit these privacy, overhead, and forgery limitations.

 \vspace{3pt}
\noindent\textbf{Hash-based PoL.} 
To enhance computational efficiency and reduce communication complexity, Zhao et al.~\cite{incentive_pol} introduced an authentication and verification protocol that achieves a better balance between computational overhead and security. In this scheme, the prover $\mathcal{P}$ generates a proof $\mathbb{P}$ during the model training process by saving the intermediate weight parameters $W_t$ as the input $\chi$ to the function $\Theta$. Here, $\Theta$ is instantiated as a hash function $h(W_t)$ to ensure irreversibility.
During verification, the verifier $\mathcal{V}$ recalculates the hash value $h(W_t)$ from the intermediate weights $W_t$ provided by $\mathcal{P}$ and compares it to the original hash in the proof. To confirm the training progression, $\mathcal{V}$ retrains the model from $W_{t-1}$ and checks whether the resulting state matches $W_t$ by evaluating the hash value, which serves as the condition $\Psi$. 

This method preserves proof integrity and mitigates synthetic trajectory attacks by enforcing exact hash matching for intermediate states. However, it remains an enhanced version of GD-PoL and lacks the ability to verify current ownership. Like GD-PoL, it depends on sharing data samples $\mathbf{B}$ for retraining in order to regenerate intermediate weights, raising serious privacy concerns and incurring significant computational overhead. Moreover, it suffers from a key limitation: the recomputed weights are unlikely to exactly match those generated during the original training process due to inherent randomness in optimization and hardware-level non-determinism. As a result, even if retraining is performed correctly, the resulting model may differ slightly, leading to hash mismatches and making the verification process unreliable.

 \vspace{3pt}
\noindent\textbf{ZK-PoL.}  
Zero-knowledge cryptographic commitments have been integrated into PoL to enhance privacy. ZK-PoL methods~\cite{zkf_pol,shamsabadi2022confidential} 
verify the training process in zero knowledge, ensuring the model is derived from the training data and a random seed. This requires the prover to work proportionally to the number of iterations, proving training efforts and resource use, thus promoting fairness for parties with limited resources. By verifying the entire process, the verifier ensures adherence to the training procedure.
Concretely, the prover commits to the dataset $\mathcal{D}$ (e.g., via a Merkle tree) and to the model weights $\sigma_{W_t}$, $\sigma_{W_{t+1}}$, then generates a succinct proof $\pi$ that each gradient-descent iteration was correctly executed on a batch sampled from the committed dataset, without revealing the raw data or model weights to the verifier. The verifier thus learns nothing beyond the validity of the training computation~\cite{zkf_pol}.

However, ZK-PoL faces severe \emph{efficiency} challenges that currently preclude practical adoption.
Proof generation for a single gradient-descent iteration requires approximately 15 minutes on VGG-11 (10M parameters) with batch size 16~\cite{zkf_pol}, and prover memory consumption reaches hundreds of gigabytes.
For a typical training run of thousands of iterations, the cumulative proving overhead exceeds the training cost by orders of magnitude.
Moreover, existing constructions assume the total number of iterations is fixed in advance, precluding incremental or fine-tuning scenarios common in modern ML pipelines.
These prohibitive costs mean that, while ZK-PoL offers strong cryptographic guarantees in theory, it remains impractical for all but the smallest models and datasets, motivating the need for lightweight alternatives.

 \vspace{3pt}
Existing PoL methods share three limitations (cf.\ Tab.\ref{tab_comp}): high verification cost from recomputing intermediate states, privacy risks from exposing training data during replay, and susceptibility to forgery via synthetic trajectories or adversarial examples. Our method addresses all three by embedding hash-based chained watermarks during training, so verification reduces to watermark extraction without dataset access or weight recomputation. The same watermark chain also establishes model ownership, unifying PoL and PoO in a single framework as detailed below.

\subsection{Proof-of-Ownership}

PoL verifies that training was performed, but alone it cannot establish \emph{who} owns the resulting model: trajectories can be forged~\cite{pol_attack1,pol_attack2}, verification requires costly replay, and in outsourced settings the trainer and the rights-holder are distinct entities~\cite{9384314}. PoL and PoO therefore address orthogonal questions, and a complete proof of legitimacy requires both.
In contrast, \textit{model ownership} is a legal construct governed by intellectual property laws, granting developers exclusive rights to control the model's usage, distribution, and modification~\cite{sai2024aitrulyyoursleveraging}.
A complete ownership proof can provide a more efficient means to address such disputes~\cite{shao2025explanation}, complementing PoL's technical validation to ensure comprehensive protection for machine learning models against both attribution disputes and ownership enforcement in EMMs.

\begin{defi}[PoO]
For a model owner acting as the prover $\mathcal{P}$, a valid ownership proof is denoted $\mathbb{P}_o(W_t,\mathcal{D},A,\Lambda,\Psi)$.
This is constructed using a neural network with weights $W_t$ at epoch $t$, trained on dataset $\mathcal{D}$ under specified training settings $A$ (including hyperparameters, model architecture, optimizer, and loss functions). 
During training, the prover $\mathcal{P}$ embeds personalized ownership information $\Lambda$ into the model. 
To verify ownership, a condition $\Psi$ is evaluated on the model weights $W_T$ at the final epoch $T$.
\end{defi}

The embedded ownership information $\Lambda$ needs to exhibit a high degree of robustness, ensuring that $\Lambda$ remains extractable or detectable for verification even after undergoing adversarial modifications, such as model fine-tuning~\cite{fine-tune_attck,fine-tune_embed}, pruning~\cite{prune_attack}, and overwriting attacks~\cite{watermark_attack1}. This guarantees the integrity of ownership verification, making it resistant to common attacks aimed at removing or obfuscating embedded ownership information.

\smallskip
\noindent\textbf{Watermarking.} Watermarking is a type of PoO technique developed to protect multimedia content such as images~\cite{IP_image}, text~\cite{IP_text}, and videos~\cite{IP_video} by embedding unique identifiers into the content. 
Uchida et al.~\cite{Uchida} proposed embedding watermarks into model weights via a regularization term added to the training loss, forming the EDNN scheme. The watermark $\Lambda$ is embedded into weights $W$ during training, and later extracted as $\hat{\Lambda}$ for verification by checking if $\Delta(\Lambda, \hat{\Lambda})$ satisfies condition $\Psi$, where $\Delta(\cdot,\cdot)$ denotes a similarity metric (e.g., bit-accuracy based on Hamming distance). HufuNet~\cite{HufuNet} embeds a tailored autoencoder into the DNN, using the encoder as the watermark and preserving the decoder for ownership verification. RIGA~\cite{riga} introduces a GAN-based approach where the model acts as a generator producing watermarked weights, aided by a discriminator and an embedder network. FedIPR~\cite{li2022fedipr} extends these ideas to federated learning, allowing all clients to embed ownership into the global model for copyright protection~\cite{shao2024fedtracker}.

Our method is designed to be compatible with all embedded watermarking methods for achieving PoO. By embedding a given watermark into the model parameters, we associate the watermark with the model's training process, allowing for the verification of PoL through watermark verification.

\begin{figure*}[!t]
        \centering
        \includegraphics[width=\textwidth]{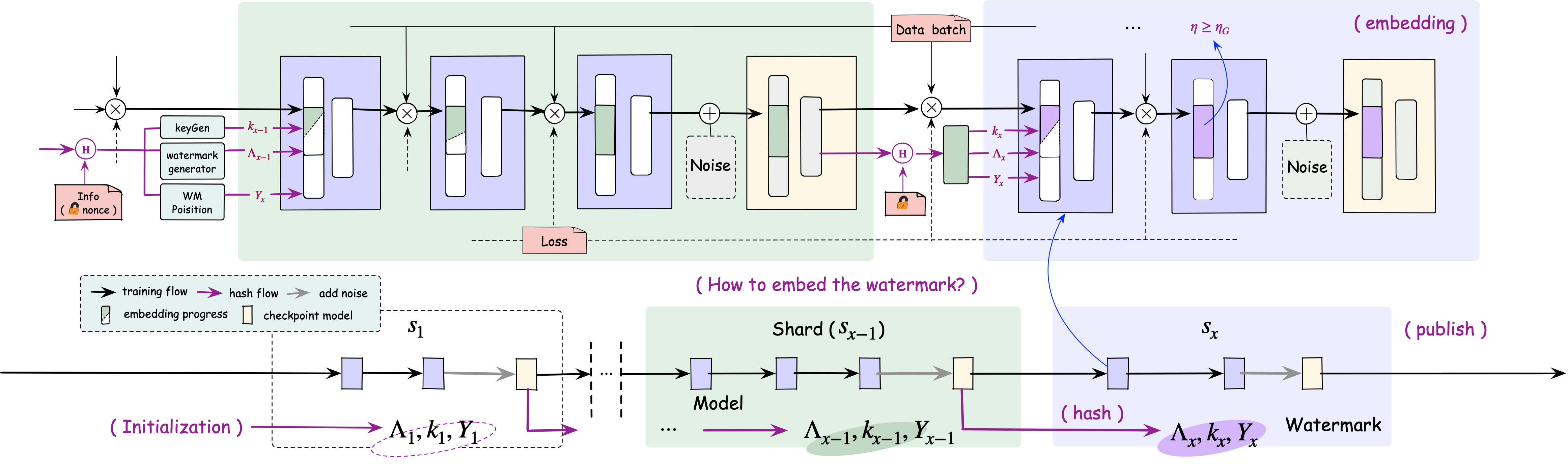}
        \caption{\textbf{\textsc{PoLO} design}: The verifier $\mathcal{V}$ shares a secret nonce with the prover  $\mathcal{P}$ to initialize watermark parameters ($\Lambda_{1}$, $k_1$, $Y_1$) for the first shard $s_1$. Prior to watermark embedding, the model owner computes the watermark $\Lambda_{x-1}$ and its corresponding embedding key $k_{x-1}$ and selection matrix $Y_x$ for shard $s_x$ using a hash function $\mathbb{H}(\cdot)$ over the previous model $W_{x-1}$, auxiliary information, and the secret nonce. During training, $\Lambda_{x-1}$ is embedded into the model using $k_{x-1}$, while monitoring the watermark detection rate $\eta$. Once $\eta$ exceeds the threshold $\eta_G$, Gaussian noise is applied to enhance robustness against inference attacks. Training proceeds to the next shard with new $\Lambda_x$, $k_x$ and $Y_x$. The process continues until the model converges.}
        \label{framwork}
\end{figure*}

\section{System Model}\label{sec:preliminaries}

\noindent\textbf{Problem definitions.}
Arguably, 
\textit{existing PoL methods fail to address scenarios where models are outsourced or exchanged for monetary assets.}  
They assume that the entity providing the PoL proof is 
the model owner, making them unsuitable for outsourced training where ownership and training efforts may belong to different entities. 
In ML marketplaces, current PoL methods cannot support ownership transfer, which prevents PoL from verifying rightful ownership when models are bought, sold, or reassigned. 

 \vspace{3pt}
\textit{Existing PoL methods become the weakest link when simply stacking PoL and PoO.}
Those solutions~\cite{pol_attack1,zkf_pol} are increasingly inefficient. The effort invested outweighs practical outcomes threefold. 

 \vspace{3pt}
\begin{packeditemize}
\item The training function  $\Theta$  in PoL is inherently less reliable than the cryptographic hash functions used as  $\Theta$  in PoW. Adversarial patterns may compromise the irreversibility of the training process~\cite{pol_attack1}, allowing spoofing attacks to forge training paths. 

\item The condition  $\Psi$ in PoL is defined as the model distance between intermediate weights, weaker than the difficulty metric in PoW. 

\item Maintaining a dynamic distance threshold to counter spoofing attacks adds further complexity. While techniques like ZKPs have been explored to verify training in zero knowledge~\cite{zkf_pol}, their real-world application remains constrained by prohibitive computational costs (e.g., ${\sim}15$mins of proving time per iteration for a 10M-parameter model), impractical for large-scale deployments.

\end{packeditemize}
 \vspace{3pt}
 


    

Our method instantiates the irreversible function $\Theta$ using a cryptographically secure hash function to chain the training path, transforming it from a simple GD trajectory into a rigorously linked sequence of watermarks--leveraging PoO to achieve PoL. Each step is securely linked to the previous one through model watermarks, establishing a structured verification path. This redefines the condition $\Psi$ (i.e., the criterion for validating a PoL proof) by incorporating the robustness of PoO, making it a more practical and resilient metric against attacks. Moreover, this design preserves data privacy by eliminating dataset sharing with verifiers, significantly reducing communication overhead.  
\textsc{PoLO} also integrates the efficiency of existing watermarking techniques~\cite{Uchida} with the undetectability properties of more advanced schemes~\cite{riga}.

\begin{defi}(\textsc{PoLO}):
A \textsc{PoLO} proof for a prover $\mathcal{P}$ is defined as $\mathbb{P}((W_{x-1}, W_x), \Psi, \Theta)$, which verifies the training of the $x$-th shard $s_x$ within a model iteration partitioned into $S$ shards.
The proof is considered valid if the ownership information $\Lambda_x$, extracted from the obfuscation-protected \underline{final weight} $W_x$ of $s_x$, satisfies the condition $\Psi$. Specifically, this requires verifying that $\Lambda_x$ matches the expected value $\hat{\Lambda}_x$, where $\hat{\Lambda}_x$ is computed using the hash-based function $\Theta$ over the prior shard’s weight $W_{x-1}$.
\end{defi}

\noindent\textbf{Architecture.}
\textsc{PoLO} features a chain-based embedded watermark structure to generate PoL with no need to share any training data and with superior resilience against gradient spoofing attacks~\cite{pol_attack2}.
Specifically, multiple watermarks are embedded throughout the model training iterations, forming a chain-based structure. Realizing the unique watermark embedded at a specific point in the chain requires a hash operation based on the knowledge of the model weights from the previous point. This ensures that an attacker intending to forge the watermark at any point would also need to forge all preceding watermarks in the chain, which incurs a computational cost equivalent to honestly training a new model.

The amount of effort in learning is reflected by the number of \textit{shards} formed throughout the iteration. The process of embedding each watermark naturally forms a shard, representing one point in the watermark chain. This indicates a sequential order between each shard. The size of a shard can vary to meet the accuracy threshold required to form a valid watermark. The weights of the final model within shard $s_x$ can be regarded as a \textit{checkpoint model}, reflecting the model performance.

 \vspace{3pt}
\noindent\textbf{Entities.} 
\textsc{PoLO} consists of two primary entities: 

\begin{packeditemize}
    \item \textbf{\textit{Prover ($\mathcal{P}$)}} is the original owner and trainer of a model, responsible for training the model, embedding watermarks, and generating proofs.
    The prover provides verifiable evidence of both training effort and model ownership, ensuring the integrity and authenticity of the proof.

    \item \textbf{\textit{Verifier ($\mathcal{V}$)}} is responsible for evaluating the validity of proofs by verifying the embedded watermarks and assessing the model’s main task performance. 
    As both the issuer and evaluator of the main task, the verifier ensures that the prover has completed the required workload (i.e., PoL) and retains legitimate ownership of the model (i.e., PoO). 
    To enforce fair verification, the verifier maintains a public test dataset for performance assessment and sets a watermark detection rate threshold. 
    Based on the verification results, $\mathcal{V}$ determines \emph{rewards} for successful validation and \emph{penalties} for fraudulent proofs. Rewards are proportional to the number of validated shards and the final model accuracy, creating an economic incentive for the prover to invest genuine training effort (cf.\ $\S$\ref{subsubsec: PoLO_cumulation}).

\end{packeditemize}

 \vspace{3pt}








\noindent\textbf{Threat model.} Three types of players are considered. More discussions refer to \textbf{Appendix~\ref{appendix_security_privacy}}.

 \vspace{3pt}
\begin{packeditemize}
\item \textbf{\textit{Economically rational adversaries.}} 
We consider adversaries operating in EMMs, where \textit{unique}, verifiable ownership is required for listing and payment. 
Structural attacks such as Neural Structural Obfuscation (NSO)~\cite{287178} can disrupt white-box verifiers but are reversible via canonicalization, and pure watermark removal without re-embedding yields a non-monetisable model that cannot be listed or rewarded. 
A rational adversary who wishes to impersonate ownership must therefore \emph{jointly} (i) erase the incumbent watermark, (ii) embed a forged replacement, and (iii) preserve task accuracy to pass \textsc{PoLO}'s verification. Our analysis focuses on such removal-plus-re-embedding forgery attacks; any attack whose total cost meets or exceeds honest training \textit{per shard} is treated as economically irrational.
The attackers seek to exploit the system through various attacks, including:
    \begin{circitemize}
        \item \textit{Training forgery (addressed in \textrm{$\S$\ref{subsubsec: threat_model_1})}.} The attacker may fabricate a fraudulent PoL to falsely represent a legitimate training process, such as by forging training trajectories or interpolating intermediate states, aiming to falsely claim training efforts.
        
        \item \textit{Ownership theft (addressed in $\S$\ref{subsubsec: threat_model_2}).} The attacker may attempt to steal or misuse the model by targeting its watermark through attacks such as fine-tuning, pruning, or overlapping~\cite{overlap_attack}, aiming to falsely claim model ownership.  
          Watermark removal attacks~\cite{287178} alone yield no benefit in trusted marketplaces, where only models with verifiable ownership are accepted. Such removal is only meaningful when combined with the embedding of a forged watermark.

            \item \textit{Inference attacks (addressed in $\S$\ref{subsubsec: threat_model_3}).} The attacker may exploit the information contained in the published model to infer the provers’ private training data. 
            
    \end{circitemize}

\item \textbf{\textit{Rational prover.}}  
The prover functions as both the model \textit{owner}, holding full rights to its creation and usage, and the PoL \textit{prover}, responsible for proving the authenticity and integrity of the training process.
A rational prover acts in a way that maximizes their utility. While they do not compromise the integrity of model training or watermark embedding, they may attempt to falsify reported workloads to gain additional rewards.

\item \textbf{\textit{Honest-but-curious verifier.}}
Verifiers act as auditors in EMMs. They challenge and validate the prover’s PoL to decide whether the claimed training effort merits rewards. They use a public test set to evaluate performance on the main task and configure system parameters such as shard-wise accuracy thresholds and watermark size. Verifiers are honest-but-curious: they follow the protocol, but may attempt to infer the prover’s private training data from the information revealed in the PoL.

\end{packeditemize}


\section{Design of \textsc{PoLO}}\label{sec:design}

\textsc{PoLO} is a unified framework that integrates PoL and PoO through a chained watermarking mechanism.
This section outlines its design and verification process (cf. Figs.\ref{framwork}--\ref{framwork_verify}), and explains how \textsc{PoLO} enables fair workload validation, strong ownership protection, and robustness against adversarial attacks, while preserving the integrity of the model’s primary task performance.

\subsection{Training with Chained Watermarking}
\label{subsec: training_watermark}



The core of \textsc{PoLO} is the \textit{chained watermarking} mechanism, which ensures progressive proof accumulation while maintaining model integrity, directly addressing \textbf{\textcolor{violet}{RQ1}} by binding ownership verification to the training process. 
Our framework links each training phase to its previous state, preventing forgery or tampering. 

In \textsc{PoLO}, a shard is a verifiable unit of training that spans one or more consecutive epochs. Unlike fixed epoch boundaries, a shard is defined by the successful embedding of a watermark, which may take multiple epochs. This aligns verification with actual training effort. Let $T_x$ denote the set of epochs in the $x$-th shard $s_x$. By partitioning training into shards, \textsc{PoLO} enables verifiable proof of effort. The prover’s training process with chained watermarking involves the following steps:

\smallskip
\noindent\textbf{Chained watermark generation.} 
To ensure cryptographic linkage between successive training phases, \textsc{PoLO} derives the watermark $\Lambda_{x}$ for shard $s_x$ from the final weights $W_{x-1}$ of the previous shard $s_{x-1}$.
This chained watermarking mechanism guarantees that each watermark is uniquely bound to its predecessor, preventing adversaries from fabricating or modifying individual watermarks without reconstructing the entire sequence.

Initialization begins with a verifier-provided nonce, which the prover uses to generate the initial watermark parameters: the first watermark $\Lambda_1$, the embedding key $k_1$ for shard $s_1$, and a selection matrix $Y_1$ that specifies the embedding positions within model weights. 
For each shard $s_x$, both $Y_x$ and the embedding key $k_x$ are deterministically derived from the verifier-provided secret nonce $\mu$:
\begin{equation}
\begin{aligned}
\Lambda_x &= \textsf{WMGen}(\mathcal{H}_{x-1}),\\
k_x &= \textsf{KeyGen}(\mathcal{H}_{x-1}),\\
Y_x &= \textsf{WMPosition}(\mathcal{H}_{x-1}), \text{ with} \\
\mathcal{H}_{x-1} &= \mathbb{H}(W_{x-1},\; x,\; \mu,\; id_\mathcal{P}),
\end{aligned}
\label{wm_k_gen}
\end{equation}
where $\textsf{WMPosition}(\cdot)$, $\textsf{WMGen}(\cdot)$, and $\textsf{KeyGen}(\cdot)$ are deterministic functions driven by $\mathcal{H}_{x-1}$. The shard index $x$ enforces sequential linkage. As a result, modifying one shard invalidates subsequent watermark tuples $(\Lambda_x,k_x,Y_x)$, so selective tampering requires reconstructing the chain rather than editing an isolated checkpoint.

Since modifying any individual shard would invalidate all subsequent watermarks, attackers cannot selectively alter a watermark without regenerating the entire chain, nor can they forge valid watermarks without reconstructing the full training sequence. 
This chained construction establishes a strong cryptographic binding, guaranteeing the authenticity of \textsc{PoLO} and ensuring that the training process remains verifiable and resistant to manipulation.

\smallskip
\noindent\textbf{Watermark embedding in model training.} 
During the training of $s_x$, the prover $\mathcal{P}$ selects specific layers in the model to embed a unique watermark $\Lambda_x$. 
The watermark $\Lambda_x$ is embedded into all model weights $W_{t_x}, \forall t_x \in T_x$, gradually forming the weights $W_{\bar{t}_x}$ of the final epoch in $s_x$ where $\Lambda_x$ has been successfully embedded. This process uses a secret key $k_x$ and selection matrix $Y_x$ to perform the embedding:
\begin{equation}
W_{\bar{t}_x} = \arg\min_{W} \left( l_w(W) + \lambda l_\Lambda(W) \right) 
\left. \vphantom{\sum} \right\rvert_{\substack{W_0 = W_{x-1}}} + \delta W,
\end{equation}
where $l_w(W)$ is the standard task loss (e.g., cross-entropy), $l_\Lambda(W)$ is a watermark embedding regularizer that drives the model weights toward satisfying $\Lambda_x$ at positions $Y_x$, and $\lambda > 0$ is a balancing hyperparameter. The minimisation is initialised at $W_{x-1}$ (the final checkpoint of the previous shard). The additional term $\delta W = \mathbb{E}(W_{x-1}, \Lambda_x, k_x, Y_x)$ is a post-training explicit perturbation applied to the selected weight positions $Y_x$ to finalize the watermark embedding; it is computed once per shard after training converges. This two-stage design, regularizer-guided training followed by a targeted perturbation, ensures that the watermark reaches the required detection rate $\eta_G$ while preserving main-task accuracy. The perturbation $\delta W$ is applied once at the end of each shard rather than at every epoch. This final adjustment yields the watermarked weights $W_{\bar{t}_x} = W_{\text{trained}} + \delta W$.
Both components are necessary: removing $l_\Lambda$ prevents the watermark from converging during training (detection rate stalls near 50\%), while omitting $\delta W$ slows shard completion by 2--4$\times$ in epochs because the regularizer alone cannot reliably push $\eta$ past $\eta_G^{\text{emb}}$ in a single pass.


This embedding process is dynamically monitored, with training continuing until the watermark detection rate $\eta$, which is computed using the following  based on Hamming distance:
\begin{equation}
    \eta = 1-\frac{ \sum_{i}^{n} (\mathbb{C}(W_{t_x},Y_x,k_x
    )[i]\ne \Lambda_x[i])}{n},
\end{equation}
reaches a predefined embedding threshold $\eta_G^{\text{emb}}$ (i.e., $\eta\geq \eta_G^{\text{emb}}$). Therein, $\mathbb{C}(\cdot)$ is the watermark extraction function and  $n$ is the size of $\Lambda_x$.
By enforcing this controlled, sufficient embedding process, \textsc{PoLO} ensures ownership traceability while preserving the main task’s performance.
    
\smallskip
\noindent\textbf{Obfuscation-based privacy protection and shard formation.}  
\textsc{PoLO} applies Gaussian noise by randomly selecting a subset of weights from the non-watermarked region $ W_{t_x} \setminus \{W_{t_x}Y_x\} $ and injecting obfuscation noise. This selective Gaussian mechanism introduces statistical uncertainty, protecting sensitive training information in non-watermarked areas from gradient leakage,
while preserving the ownership integrity carried by the embedded watermark.
The point of successful watermark embedding, denoted as $ W_{\bar{t}_x} $, yields a obfuscation-protected version $ W^{\text{Gaussian}}_{\bar{t}_x} $, which marks the completion of shard $ s_x $. Therefore, we refer to this checkpoint model as $ W_x $ that serves as a secure and verifiable training checkpoint, ensuring tamper resistance and preserving privacy throughout each phase.
\begin{equation}
    W_x=W^{\text{Gaussian}}_{\bar{t}_x} = W_{\bar{t}_x} + \mathbb{G}(\sigma , Z, W_{\bar{t}_x} ),
\end{equation}
where $\bar{t}_x = \max_{\substack{t_x \in T_x}} (t_x) $ and $ W_{\bar{t}_x}$ represents the model weights at the last epoch of $s_x$. The term $\mathbb{{G}}(\sigma, Z, W_{\bar{t}_x})$ denotes the addition of $\sigma$-Gaussian noise to the subset of weights in $W_{\bar{t}_x}$ specified by a selection matrix $Z$ for Gaussian.
The storage of $W_x$ marks the completion of shard $s_x$, allowing the training process to transition to the next phase.



\smallskip
\noindent\textbf{Iterative training with watermark propagation.} 
The new watermark $\Lambda_{x+1}$, derived from $\mathbb{H}(\cdot)$, is then embedded into the training of shard $s_{x+1}$, continuing the chained watermarking process. This cycle repeats until the model either converges to a stable state or meets the performance objectives of the main task. 
With each iteration in \textsc{PoLO}, the previous watermark propagates forward, ensuring strong sequential dependency across shards. 
This incremental accumulation of watermarks makes modification attacks impossible, as any tampering would require reconstructing the entire sequence while preserving model performance. 
By enforcing hash-based chaining and progressive watermark propagation, \textsc{PoLO} provides a cryptographically verifiable proof of both training effort and ownership, making it resistant to tampering or synthetic trajectory attacks.

\smallskip
\noindent\textbf{Ownership transfer.}
In scenarios requiring ownership transfer, such as ML marketplaces~\cite{9445602} and outsourced training~\cite{9384314}, an additional fine-tuning shard is appended to the training process to reassign ownership.  
Specifically, a new or modified owner identifier $id_\mathcal{P'}$ is used to generate a fresh watermark $\Lambda_{S+1}$ based on the latest model $W_S$. This watermark is then embedded through fine-tuning, producing a new model instance $W_{S+1}$ that remains tied to the rightful owner while preserving the integrity of existing \textsc{PoLO} proofs. The resulting \textsc{PoLO} not only verifies the legitimacy of the final model owner but also retains the entire historical chain of PoO and PoL, ensuring full traceability.

\subsection{Verification in \textsc{PoLO}}
\label{subsec: verification_watermark}


The verification mechanism in \textsc{PoLO} is designed to ensure training effort accountability and ownership authenticity by validating both the watermark chain and the model performance. 
The verifier $\mathcal{V}$ assesses the prover's \textsc{PoLO} proof $\mathbb{P}$ by sequentially verifying the stored obfuscation-protected model checkpoints and their extracted watermarks from the latest shard to the earliest.
In \textsc{PoLO}, verification is performed at the \textit{shard level}, rather than per epoch, ensuring that provers cannot overclaim their training workload while maintaining a provable, tamper-resistant sequence of training. 
The verifier can selectively challenge any shard to verify both training efforts and ownership.

\subsubsection{\underline{Generating proof}}
\label{subsubsec: PoLO_generating_proof}

When the verifier $\mathcal{V}$ requests validation for shard $s_x$, the prover $\mathcal{P}$ submits the obfuscation-protected weights $W_x$ and $W_{x-1}$ from shards $s_x$ and $s_{x-1}$, respectively, along with the hash function $\mathbb{H}(\cdot)$ used to derive the watermark $\Lambda_x$, key $k_x$ and selection matrix $Y_x$ from $W_{x-1}$ using the verifier-recorded nonce $\mu$. 
 Moreover, $\mathcal{P}$ also provides the function that deterministically derives the selection matrix $Y$ from $\mu$. 
These elements enable $\mathcal{V}$ to verify both model integrity and the sequential linkage of watermarks (cf. Algorithm~\ref{algo:polo_generation} in \textbf{Appendix}~\ref{appendix_algo}).

Unlike gradient-based PoL methods~\cite{frist_pol, pol_attack1, pol_attack2, incentive_pol} that require the prover $\mathcal{P}$ to disclose training data for verification, \textsc{PoLO} ensures that verification can be conducted without exposing any training data to the verifier $\mathcal{V}$, and does so without the prohibitive computational overhead of zero-knowledge proof systems~\cite{zkf_pol}.
Our design significantly enhances data privacy, eliminating the risk of unintended data leakage while allowing the verifier to assess both training effort and ownership authenticity.

\begin{figure}[!t]
        \centering
        \includegraphics[width=\linewidth]{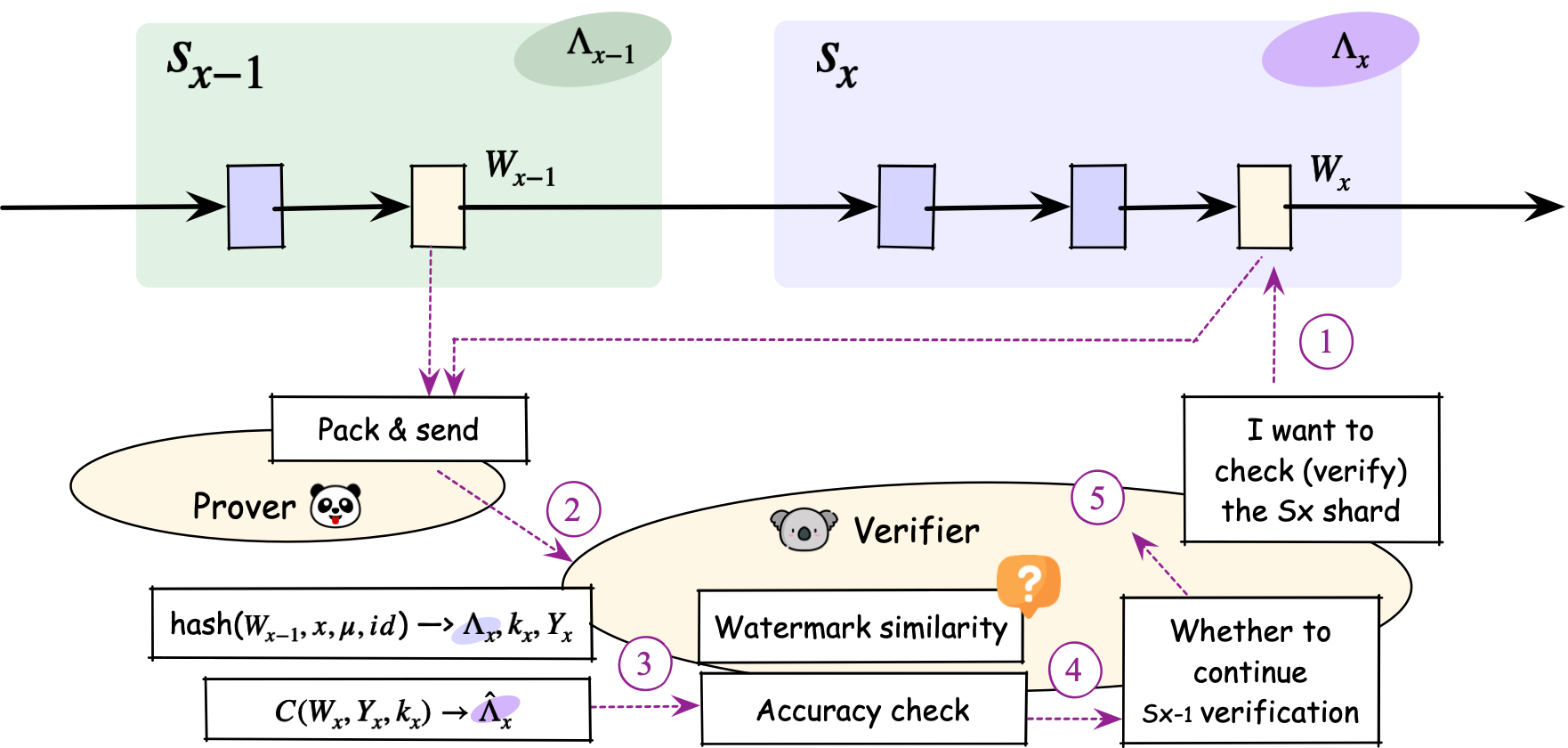}
        \caption{\textbf{Verification of chained watermarking for PoL:}
\ding{172} (\textcolor{violet}{verifier}) selects shard $x$ and requests verification; 
\ding{173} (\textcolor{violet}{prover}) returns checkpoints $W_{x-1}$ and $W_x$; 
\ding{174} (\textcolor{violet}{verifier}) computes $(\Lambda_x, k_x, Y_x)$ from $W_{x-1}$, then extracts $\hat{\Lambda}_x$ from $W_x$ using $k_x$ and $Y_x$; 
\ding{175} (\textcolor{violet}{verifier}) checks watermark similarity and main-task accuracy; 
\ding{176} (\textcolor{violet}{verifier}) returns results and repeats earlier shards. 
        }
        \label{framwork_verify}
\end{figure}

\subsubsection{\underline{Verifying proof}}
\label{subsubsec: PoLO_verifying_proof}

Verification begins by checking whether the checkpoint $W_x$ meets the required accuracy on a public test set. This ensures the model has achieved meaningful task-specific performance, reflecting real-world expectations where only accurate models hold value in ML marketplaces~\cite{9445602}. If $W_x$ fails to meet the threshold, the proof is immediately rejected and the prover $\mathcal{P}$ is penalized, discouraging watermark embedding in inadequately trained models.

Once the model passes the main task performance evaluation, the verifier $\mathcal{V}$ proceeds to chained watermark verification. 
First, the verifier reconstructs the expected watermark $\Lambda_x$, the selection matrix $Y_x$, and the key $k_x$ for shard $s_x$ using the stored obfuscation-protected weights from shard $s_{x-1}$ (cf. \eqref{wm_k_gen}).
This confirms that the watermark $\Lambda_x$ must have been generated from $W_{x-1}$ and cannot be arbitrarily forged or altered. 

The verifier $\mathcal{V}$ extracts the watermark $\hat{\Lambda}_x$ from the checkpoint $W_x$ using the selection matrix $Y_x$ and secret key $k_x$:
\begin{equation}
    \hat{\Lambda}_x = \mathbb{C}(W_x,Y_x,k_x).
\end{equation}
By comparing $\hat{\Lambda}_s$ with the watermark $\Lambda_s$ derived from (\ref{wm_k_gen}), the verifier $\mathcal{V}$ determines whether the watermark detection rate exceeds the predefined threshold $\eta^{ver}_{G}$. The watermark detection rate of the $x$-th shard is computed as:
\begin{equation}
    \eta_x = 1-\frac{ \sum_{i}^{n} (\hat{\Lambda}_x[i]\ne \Lambda_x[i])}{n}.
\end{equation}
If the extracted watermark satisfies $\eta_x \geq \eta^{ver}_{G}$, the \textsc{PoLO} proof is accepted. Otherwise, the verification fails due to potential tampering, watermark absence, or inconsistency in the \textsc{PoLO} proof (cf. Algorithm~\ref{algo:polo_verification} in \textbf{Appendix}~\ref{appendix_algo}).

\subsubsection{\underline{Cumulating security confidence}}
\label{subsubsec: PoLO_cumulation}

To reinforce the integrity of the \textsc{PoLO} proof, the verifier $\mathcal{V}$ may conduct backward verification by recursively validating previous shards, moving from $s_x$ to $s_{x-1}$, $s_{x-2}$, and so forth, until a sufficiently verifiable proof chain is established.
Since each watermark is cryptographically linked to its predecessor, backward verification prevents partial proof forgery. No malicious prover can manipulate only recent checkpoints while leaving earlier training shards unverifiable.
Hash-based chaining combined with cumulative verification ensures that the cost of forgery scales with depth and exceeds rational benefits, making partial manipulation economically\&computationally impractical.

 \vspace{3pt}
Upon successful verification, the verifier $\mathcal{V}$ can determine the correct rewards for the prover based on the number of validated shards $S$ and the performance on the main task.
A higher final model accuracy on the test set results in greater rewards, incentivizing the provers to optimize model performance continuously.
Moreover, each shard serves as a verifiable unit of training effort, making the total compensation directly proportional to the number of recognized shards $S$. 
This shard-based reward mechanism prevents the provers $\mathcal{P}$ from inflating their reported training workload by falsely claiming an excessive number of epochs for watermark embedding.
Such manipulation would not only fail to improve the main task but also distort fair reward distribution, making it impractical to assign credit or rewards based solely on individual epochs, making an epoch-based system impractical.

 \vspace{3pt}
The number of shards $S$ primarily depends on the number of training epochs required for watermark embedding. 
A higher learning rate enables faster watermark embedding, leading to more shards. 
However, an excessively high learning rate in later training stages can hinder the model's ability to improve its accuracy on the main task. 
As \textsc{PoLO} ties rewards directly to measurable contributions, it introduces a natural incentive mechanism that aligns computational effort with economic return.
Since rewards are determined by both the number of validated shards and the final model accuracy, a rational prover strategically balances the learning rate to maximize shard count and accuracy for optimal rewards.

 \vspace{3pt}
The verification mechanism in \textsc{PoLO} provides strong security guarantees. 
The chained watermarks ensure that forging a valid \textsc{PoLO} proof requires reconstructing the entire training sequence while preserving the model performance, making forgery computationally infeasible. 
Modifying any single shard invalidates all subsequent watermarks, rendering selective tampering impractical. 
Moreover, the shard-based verification strategy ensures fairness by tying rewards to the number of validated shards, preventing the provers $\mathcal{P}$ from inflating their workload claims.

 \vspace{3pt}
\noindent\textbf{Communication overhead.}
Verifying shard $s_x$ requires the prover to transmit two checkpoint models ($W_{x-1}$ and $W_x$), so the per-challenge bandwidth is $2 \times |W|$, where $|W|$ denotes the model size.
In practice, the verifier need not check every shard: random shard sampling (cf.\ $\S$\ref{subsubsec: PoLO_cumulation}) keeps the number of challenged shards small (e.g., $\mathcal{O}(\log S)$ challenges suffice for high confidence), making the total communication cost $\mathcal{O}(|W| \log S)$ rather than $\mathcal{O}(|W| \cdot S)$.
For settings where even two full checkpoints are prohibitively large, standard techniques such as checkpoint compression or delta encoding (transmitting $W_x - W_{x-1}$) can further reduce bandwidth.

\subsection{Security and Privacy Analysis}


\textsc{PoLO} combines hash-based watermark chaining, backward verification, and shard-level performance checks to ensure training effort and ownership are provable, tamper-resistant, and cryptographically verifiable.
Building on the threat models in $\S$\ref{sec:preliminaries}, we address \textbf{\textcolor{violet}{RQ2}} by demonstrating how \textsc{PoLO} unifies PoL and PoO under economically rational and dual-legitimacy premises.
Key takeaways are summarized in \textbf{Appendix}~\ref{appendix_security_privacy}.

\subsubsection{\underline{Training efforts}}\label{subsubsec: threat_model_1}
\textsc{PoLO} counters adversarial attempts to manipulate or falsify the training process.

 \vspace{3pt}
\noindent\textbf{Forging \textsc{PoLO}.} 
\textsc{PoLO} introduces a chained watermarking framework, where each shard’s watermark is securely linked to the previous shard’s output via a hash function $\mathbb{H}(\cdot)$. This chaining enforces that any tampering or forgery attempt requires reconstructing the entire sequence of valid watermarks while maintaining main task performance. Leveraging the immutability of cryptographic hashes, 
\textsc{PoLO} ensures that honest training remains the only viable path, as attackers must either bear retraining costs or fail verification, making all shortcuts ineffective and unprofitable.


 \vspace{3pt}
\noindent\textbf{Exaggerated workload claims.} 
A rational prover may attempt to inflate training workload by increasing the number of claimed shards. \textsc{PoLO} counters this by verifying at the shard level, where each shard is a complete unit defined by successful watermark embedding rather than by a claimed epoch count. Higher learning rates may produce more shards by accelerating embedding, but they can also harm task performance; because rewards depend on both verified shards and model accuracy, the prover is incentivized to balance efficiency and fidelity.

A prover might also try to fabricate additional shards without performing the required computation. However, $(k_{x+1},Y_{x+1})$ are deterministically tied to the previous checkpoint and verifier nonce $\mu$, so valid watermark placement and extraction cannot be chosen arbitrarily. This binds shard formation to genuine trainings and makes shard inflation computationally\&economically unattractive.

\subsubsection{\underline{Ownership}}\label{subsubsec: threat_model_2}
\textsc{PoLO} is compatible with existing watermarking techniques and does not compromise their inherent security guarantees. On the contrary, the chaining structure introduced by \textsc{PoLO} can enhance robustness against certain attack vectors, such as watermark forgery, by cryptographically binding each watermark to prior authenticated states.

 \vspace{3pt}
\noindent\textbf{Ownership enforcement in EMMs.}
Within EMMs, \textsc{PoLO} provides verifiers with strong tools to enforce ownership and training provenance. By integrating advanced watermarking methods such as RIGA~\cite{riga} and FedIPR~\cite{li2022fedipr}, \textsc{PoLO} keeps watermarks stealthy and robust against adversarial modifications, enabling reliable ownership verification and resilience against watermark-related attacks within the EMM verification pipeline.
\textsc{PoLO}'s ownership guarantees apply specifically to environments that enforce watermark-based verification (i.e., EMMs). Redistribution through channels that do not perform such verification, including public model hubs, private sharing, API serving, and model distillation, lies outside the EMM threat model and is not claimed as a protection target.

 \vspace{3pt}
\noindent\textbf{Ownership evasion and impersonation.}
Structural watermark-removal strategies, including weight shifting and Neural Structural Obfuscation (NSO)~\cite{287178}, do not address the attacker's fundamental requirement to re-embed a forged watermark in EMMs; moreover, NSO transforms can be reversed via canonicalization (cf.\ \textbf{threat model}). Consequently, the relevant attack surface in \textsc{PoLO} is the full \emph{removal-plus-re-embedding} phase. To impersonate ownership and profit, an attacker must \emph{erase} the incumbent legal watermark and \emph{embed} a forged watermark (or set of watermarks) that preserves task performance and passes verification, typically via full-model fine-tuning. Within a shard (from $s_{x-1}$ to $s_x$), the cost of such full-model fine-tuning plus re-embedding is empirically comparable to, or exceeds, honest per-shard training. As a result, ownership impersonation through watermark removal and replacement is economically unattractive for rational adversaries in EMMs.


 \vspace{3pt}
\noindent\textbf{On the chained selection matrix $Y_x$.}
\textsc{PoLO} derives a shard-specific position mask $Y_x=\textsf{WMPosition}(\mathcal{H}_{x-1})$, unlike static-position designs. This reduces cross-shard transferability of position inference: even if an attacker partially infers one shard's marked region, subsequent shards use different positions tied to different chain states. Overwriting must preserve task accuracy while defeating the legitimate watermark.

\vspace{3pt}
Our experimental results (\S\ref{sec:experiment}) provide empirical support for this ownership argument: an attack may preserve utility but often leaves the legitimate watermark detectable, while a cheaper attack typically fails the task-admissibility requirement.

\vspace{3pt}
Tab.\ref{tab:pol-attack-applicability} (cf.~\textbf{Appendix}~\ref{appendix_reply-pol_attack}) gives an analytical (non-experimental) applicability check for representative replay-PoL attacks. The key point is that \textsc{PoLO} verifies shard receipts and chained watermark extraction instead of replaying gradient trajectories, so those replay-specific exploit surfaces are removed by design.

\vspace{3pt}
\noindent\textbf{Unforgeability game.}
We formalize the security goal via the following game between a challenger $\mathcal{C}$ and an adversary $\mathcal{A}$:

\vspace{3pt}

\begin{packeditemize}
\item \textbf{Setup.} $\mathcal{C}$ generates a nonce $\mu$ and prover identity $id_\mathcal{P}$, trains a model with \textsc{PoLO} producing $S$ shards with checkpoints $\{W_0, \ldots, W_S\}$, and gives all checkpoints to $\mathcal{A}$.
\item \textbf{Forgery.} $\mathcal{A}$ outputs a forged proof $\mathbb{P}' = (W'_{x-1}, W'_x)$ for some shard $s_x$, along with the total computational cost $C_\mathcal{A}$.
\item \textbf{Win condition.} $\mathcal{A}$ wins if: (i)~the verifier $\mathcal{V}$ accepts $\mathbb{P}'$ (i.e., the forged watermark $\hat{\Lambda}'_x$ extracted from $W'_x$ satisfies $\eta'_x \geq \eta^{ver}_{G}$ against the expected watermark derived from $W'_{x-1}$, and the main task accuracy meets the threshold), (ii)~$C_\mathcal{A} < C_{\text{honest}}$, the cost of honestly training shard $s_x$, and (iii)~$(W'_{x-1}, W'_x) \neq (W_{x-1}, W_x)$ (non-triviality: the forged proof must differ from the honest one).
\end{packeditemize}

\vspace{3pt}
\noindent\textsc{PoLO} is \emph{economically unforgeable} if no polynomial-time adversary can win the above game with non-negligible probability, assuming the cryptographic hash function $\mathbb{H}$ is collision-resistant and preimage-resistant. The chained structure ensures that producing a valid $W'_{x-1}$ requires either (a)~possessing the genuine checkpoint (no cost advantage) or (b)~reconstructing the chain from scratch (cost $\geq C_{\text{honest}}$), while the per-shard key derivation prevents shortcutting individual shards.

\subsubsection{\underline{Dataset privacy}}\label{subsubsec: threat_model_3}
\textsc{PoLO} enhances data privacy by removing the need for provers to share training data during verification. Unlike prior PoL methods that rely on dataset disclosure, \textsc{PoLO} achieves verifiability through obfuscation-protected model checkpoints and chained watermarks. To mitigate inference risks from gradient leakage, Gaussian noise is applied at the end of each shard, ensuring the prover’s dataset remains secure. 


\section{Experiments}\label{sec:experiment}

We evaluate \textsc{PoLO} to demonstrate its practical viability. 
Our experiments focus on three key aspects: 
\textit{Compatibility} (broad applicability), 
\textit{Overhead} (low computational costs), 
and \textit{Unforgeability} (strong resistance to forgery).

\subsection{Experimental Settings}
To ensure comprehensive evaluation and broad applicability, we conduct experiments across diverse model architectures and multiple datasets.
All experiments are performed on a dedicated micro-server infrastructure, providing a standardized and controlled environment for consistent model evaluation and comparison.

\smallskip
\noindent\textbf{Implementation environment.}
The system features dual Intel Xeon Gold 6126 CPUs (12 cores each), 192GB of 2666MHz ECC DDR4 RAM across six channels, and 2×1.2TB 10,000 RPM SAS II drives in RAID 1. It is powered by two NVIDIA Tesla V100 GPUs with 5,120 CUDA cores, 640 Tensor cores, and 32GB memory each, running on 64-bit Ubuntu 18.04.

\begin{table}[t]
\centering
\caption{Analytical mapping of watermark-attack classes under \textsc{PoLO}'s admissibility rules.}
\label{tab:attack-taxonomy}
\renewcommand{\arraystretch}{1.05}
\resizebox{\linewidth}{!}{
\begin{threeparttable}
\begin{tabular}{l|cc|c}
\midrule
\multicolumn{1}{c}{\textbf{Watermark Attack}} 
  & \multicolumn{2}{c}{\textbf{Attack class}} 
  & \multicolumn{1}{c}{\textbf{Applicable in PoLO?}} \\

\multicolumn{1}{c}{} 
  & \cellcolor{yellow!15}OFA 
  & \cellcolor{yellow!15}WMA 
  & \cellcolor{yellow!15}Yes/No \\
\midrule

Fine tuning overwrite~\cite{fine-tune_attck,fine-tune_embed} 
  & \cmark & \xmark & \cmark \\

Pruning or weight masking~\cite{prune_attack} 
  & \xmark & \cmark & \cmark \\

Watermark overlap~\cite{overlap_attack} 
  & \xmark & \cmark & \cmark \\

Collusion or ensemble stripping~\cite{9728937} 
  & \xmark & \cmark & \cmark\tnote{a} \\

Key recovery or position guessing~\cite{li2021securewatermarkdeepneural} 
  & \xmark & \cmark & \cmark\tnote{b} \\
\midrule

\rowcolor{blue!10}
\textbf{Removal (no re-embedding)~\cite{287178,watermark_attack1}}  
  & \xmark & \xmark & \xmark\tnote{c} \\

\rowcolor{blue!10}
\textbf{Removal+re-embedding (unknown positions)} 
  & \cmark & \cmark & \cmark\tnote{d} \\
\cmidrule{1-4}
\end{tabular}
\begin{tablenotes}

\item[a] Collusion acts as a weight-manipulation attack: colluders average or tweak weights, then re-embed a watermark, so we place it in the WMA class.
\item[b] Key recovery / position guessing is only a pre-step; an attacker still needs OFA or WMA to submit an admissible forged claim.
\item[c] Removal without re-embedding is non-monetizable, out of target for impersonation analysis.
\item[d] Removal+re-embedding remains in scope; feasibility depends on jointly passing receipt, chain, and task-admissibility checks.
\end{tablenotes}
\end{threeparttable}
}
\end{table}

\vspace{3pt}
\noindent\textbf{Datasets.} We evaluate on three image classification datasets and one text classification dataset. CIFAR-10 and CIFAR-100 each contain 60{,}000 color images of size $32 \times 32$ pixels (50{,}000 training / 10{,}000 test), with 10 and 100 classes respectively. TinyImageNet comprises 100{,}000 training images across 200 classes, resized to $64 \times 64$ pixels. AG News provides 120{,}000 training and 7{,}600 test samples across 4 news categories.

\vspace{3pt}
\noindent\textbf{Model architectures and training.} We evaluate six architectures spanning vision and NLP: \textit{AlexNet}, \textit{ResNet18}/\textit{34}, \textit{WideResNet}, \textit{VGG16}, \textit{TextCNN}, and \textit{MiniBert}. Specifically, we pair AlexNet and ResNet18 with CIFAR-10, ResNet18 and WideResNet with CIFAR-100, VGG16 and ResNet34 with TinyImageNet, and TextCNN and MiniBERT with AG News. All models are trained with SGD (momentum 0.9) and a batch size of 256, using learning rates of 0.1 for CIFAR-10/100, 0.05 for TinyImageNet, and 0.005 for AG News. The watermark regularization coefficient in Eq.~2 is set to $\lambda = 0.01$ uniformly across all experiments and watermarking schemes.

\vspace{3pt}
\noindent\textbf{Attack settings.}
Assuming that the attacker $\mathcal{A}$ has access to all checkpoint models $W_x$, they may attempt to forge a \textsc{PoLO} proof $\mathbb{P}^{'}$.  As summarized in Tab.\ref{tab:attack-taxonomy}, most attacks studied in the watermarking literature instantiate either of the two representative attack families, while pure removal attacks fall into the economically irrational bucket that is excluded by the EMM context.

\vspace{3pt}
\begin{packeditemize}
\item \textit{Overhaul Fine-tuning Attack (OFA) with training data.}  
OFA attempts to replace the original watermark and falsely assert ownership by fine-tuning. 
Given access to checkpoint models but not the legitimate keys ($k_x$, $Y_x$), the attacker $\mathcal{A}$ forges illicit proofs $\mathbb{P}^{'}$ by fine-tuning all model weights to replace the authentic proof $\mathbb{P}$. 
When training data is available, $\mathcal{A}$ additionally embeds fake watermarks $\Lambda^{'}_{x}$ during fine-tuning to overwrite legitimate ones. To balance cost and impact, five fine-tuning rounds are applied for 2048-bit and three rounds for 1024-bit versions.

\item \textit{Weight Manipulation Attack (WMA) without training data.}
WMA abstracts pruning-style and overlap attacks~:
without access to training data, the attacker $\mathcal{A}$ selects a target watermark $\Lambda’_x$ and embedding keys, directly manipulates chosen weights in each intermediate model to overwrite them with $\Lambda’_x$.
\end{packeditemize}
If an attacker $\mathcal{A}$ can generate a forged proof $\mathbb{P}^{'}$ with lower computational cost than a legitimate proof $\mathbb{P}$ while successfully corrupting the original watermarks $\Lambda_x$, the forgery may pass verification.

\smallskip
\noindent\textbf{Evaluation metrics.}
We consider three metrics:
\begin{packeditemize}
\item \textit{Compatibility} evaluates the applicability of various watermarking schemes for \textsc{PoLO} implementation by assessing the work transition efficiency, measured as the \textit{shard rate} $|s| = \frac{\text{shards}}{\text{epochs}}$. An optimal shard rate falls within the range $0 < |s| < 1$, indicating efficient shard generation relative to training epochs.
\item \textit{Overhead} measures computational efficiency: proof generation should add only a small fraction of training cost, and verification should be much cheaper than generation. We also convert wall-clock time to monetary cost (USD) via a cost-based pricing model.
\item \textit{Unforgeability} evaluates whether the computational overhead required to forge a \textsc{PoLO} proof $\mathbb{P}^{'}$ surpasses that of generating a legitimate proof $\mathbb{P}$, while simultaneously verifying that any forgery attempt results in the destruction of the legitimate proof $\mathbb{P}$. A forged proof $\mathbb{P}^{'}$ can only pass verification if both these conditions are satisfied.
\end{packeditemize}

\begin{table*}[!ht]
\centering
\caption{The comparison of shard rate $|s|$, main task performance $Acc_{main}$, and PoLO generation time (+training), using different watermarks.} 
\label{shard rate}
\renewcommand{\arraystretch}{1}
\resizebox{\textwidth}{!}{
\begin{threeparttable}
\begin{tabular}{ccc cc |cc cc |cc}
\toprule
\makecell{Watermark \\ \textbf{methods}} & \makecell {Watermark \\ \textbf{size}(bits)} &  \textbf{Training} & \makecell{CIFAR-10 \\ AlexNet} & \multicolumn{1}{c}{\makecell{CIFAR-10 \\ ResNet18}} & \makecell{CIFAR-100 \\ ResNet18} & \multicolumn{1}{c}{\makecell{CIFAR-100 \\ WideResNet} } & \makecell{TinyImageNet \\ VGG16} &\multicolumn{1}{c}{ \makecell{TinyImageNet \\ ResNet34}} & \makecell{AG News \\ TextCNN} & \makecell{AG News \\ MiniBert} \\ 

             \midrule
                        \multirow{9}{*}{RIGA~\cite{riga}}  & \multirow{3}{*}{2048} & \multicolumn{1}{c|}{\makecell{$|s|$(\%)}}   & \cellbarSE{37.07} & \multicolumn{1}{c}{ \cellbarSE{37.62}} & \cellbarSE{38.61} & \cellbarSE{35.83} & \cellbarSE{74.51} & \multicolumn{1}{c}{\cellbarSE{73.08}} & \cellbarSE{43.64} & \cellbarSE{64.15} \\ 
                        & & \multicolumn{1}{c|}{\makecell{$Acc_{main}$(\%)}}  & \cellbarACC{89.22} & \multicolumn{1}{c}{\cellbarACC{91.89}} & \cellbarACC{74.81} & \cellbarACC{72.83} & \cellbarACC{73.00} & \multicolumn{1}{c}{\cellbarACC{72.46}} & \cellbarACC{91.04} & \cellbarACC{91.88} \\
                        & & \multicolumn{1}{c|}{Time(s)}  & 990.74 & \multicolumn{1}{c}{1618.78} & 1707.59 & 9357.71 & 25051.72 & 7332.32 & 1070.12 & 2573.38 \\ 
                        
                        \cline{2-3} 
                        
                        & \multirow{3}{*}{1024}  & \multicolumn{1}{c|}{\makecell{$|s|$(\%)}}  & \cellbarSE{64.04} & \multicolumn{1}{c}{\cellbarSE{64.60}} & \cellbarSE{71.29} & \cellbarSE{63.48} & \cellbarSE{96.15} & \cellbarSE{96.15} & \cellbarSE{71.29} & \cellbarSE{93.07} \\
                        & & \multicolumn{1}{c|}{\makecell{$Acc_{main}$(\%)}} & \cellbarACC{88.94} & \multicolumn{1}{c}{\cellbarACC{91.96}} & \cellbarACC{74.93} & \cellbarACC{72.96} & \cellbarACC{73.33} & \cellbarACC{73.31} & \cellbarACC{90.94} & \cellbarACC{92.06} \\
                        & & \multicolumn{1}{c|}{Time(s)}  & 1066.78 & \multicolumn{1}{c}{1861.62} & 1681.42 & 8968.80 & 24934.03 & 7238.95 & 1067.23 & 2687.89 \\ 
                        
                        \cline{2-3} 
                        
                        & \multirow{3}{*}{512}  & \multicolumn{1}{c|}{\makecell{$|s|$(\%)}}   & \cellbarSE{87.74} & \multicolumn{1}{c}{\cellbarSE{92.08}} & \cellbarSE{86.79} & \cellbarSE{87.74} & \cellbarSE{96.15} & \cellbarSE{96.15} & \cellbarSE{74.29} & \cellbarSE{95.05} \\ 
                        & & \multicolumn{1}{c|}{\makecell{$Acc_{main}$(\%)}}  & \cellbarACC{89.54} & \multicolumn{1}{c}{\cellbarACC{92.22}} & \cellbarACC{75.18} & \cellbarACC{72.64} & \cellbarACC{73.91} & \cellbarACC{73.22} & \cellbarACC{90.65} & \cellbarACC{91.82} \\ 
                        & & \multicolumn{1}{c|}{Time(s)}  & 993.12 & \multicolumn{1}{c}{1783.50} & 1798.00 &  \multicolumn{1}{c|}{8433.43} & 25442.66 & 6775.54 & 1051.52 & 2486.51 \\
                        
              \cline{1-3}
                        
                        \multirow{9}{*}{EDNN~\cite{Uchida}} & \multirow{3}{*}{2048}  & \multicolumn{1}{c|}{\makecell{$|s|$(\%)}}  & \cellbarSE{35.64} & \multicolumn{1}{c}{\cellbarSE{35.64}} & \multicolumn{1}{c}{\cellbarSE{35.64}} & \multicolumn{1}{c|}{\cellbarSE{36.27}} & \cellbarSE{70.00} & \cellbarSE{70.00} & \cellbarSE{37.62} & \cellbarSE{47.06} \\
                        & & \multicolumn{1}{c|}{\makecell{$Acc_{main}$(\%)}}  & \cellbarACC{89.88} & \multicolumn{1}{c}{\cellbarACC{92.07}} & \cellbarACC{74.32} & \multicolumn{1}{c|}{\cellbarACC{72.65}} & \cellbarACC{73.61} & \cellbarACC{71.89} & \cellbarACC{90.29} & \cellbarACC{91.71} \\ 
                        & & \multicolumn{1}{c|}{Time(s)}  & 880.91 & \multicolumn{1}{c}{1518.95} & 1689.89 & \multicolumn{1}{c|}{8423.67} & 24492.53 & 6955.89 & 896.11 & 2333.33 \\ 
                        
                        \cline{2-3} 
                        
                        & \multirow{3}{*}{1024} & \multicolumn{1}{c|}{\makecell{$|s|$(\%)}}  & \cellbarSE{39.17} & \multicolumn{1}{c}{\cellbarSE{45.54}} & \cellbarSE{45.54} & \multicolumn{1}{c|}{\cellbarSE{44.12}} & \cellbarSE{83.00}  & \multicolumn{1}{c}{\cellbarSE{78.00}} & \cellbarSE{46.60} & \cellbarSE{60.00} \\
                        & & \multicolumn{1}{c|}{\makecell{$Acc_{main}$(\%)}} & \cellbarACC{89.67} & \multicolumn{1}{c}{\cellbarACC{91.55}} &  \cellbarACC{73.98} &  \multicolumn{1}{c|}{\cellbarACC{72.86}} &  \cellbarACC{73.91} &  \multicolumn{1}{c}{\cellbarACC{72.01}} &  \cellbarACC{90.78} &  \cellbarACC{91.99} \\
                        & & \multicolumn{1}{c|}{Time(s)} & 923.03 & \multicolumn{1}{c}{1493.05} & 1708.98 & \multicolumn{1}{c|}{7876.67} & 24342.67 & \multicolumn{1}{c}{7501.34} & 956.65 & 2435.32 \\ 
                         
                        \cline{2-3}
                         
                        & \multirow{3}{*}{512} & \multicolumn{1}{c|}{\makecell{$|s|$(\%)}}   & \cellbarSE{57.27} &  \cellbarSE{60.95} & \cellbarSE{52.00} & \multicolumn{1}{c|}{\cellbarSE{54.24}} & \multicolumn{1}{c}{\cellbarSE{96.00}} & \multicolumn{1}{c}{\cellbarSE{96.00}} & \cellbarSE{61.39} & \cellbarSE{73.79} \\ 
                        & & \multicolumn{1}{c|}{\makecell{$Acc_{main}$(\%)}}  & \cellbarACC{89.89} & \cellbarACC{91.62} & \cellbarACC{73.89} & \multicolumn{1}{c|}{\cellbarACC{72.16}} & \cellbarACC{73.30} & \multicolumn{1}{c}{\cellbarACC{71.88}} & \cellbarACC{90.56} & \cellbarACC{91.89} \\ 
                        & & \multicolumn{1}{c|}{Time(s)}  & 949.75 & 1711.73 & 1665.76 &  7966.58 & 24689.07 & \multicolumn{1}{c}{7683.83} & 998.02 & 2534.32 \\

             \cline{1-3}
                        
                       \multirow{9}{*}{FedIPR~\cite{li2022fedipr}} & \multirow{3}{*}{2048} & \multicolumn{1}{c|}{\makecell{$|s|$(\%)}} & \cellbarSE{40.78} &  \cellbarSE{40.59} &  \cellbarSE{58.00} & \cellbarSE{32.67} & \cellbarSE{70.00} & \multicolumn{1}{c}{\cellbarSE{80.39}} & \cellbarSE{65.69} & \cellbarSE{35.64} \\ 
                       & & \multicolumn{1}{c|}{\makecell{$Acc_{main}$(\%)}} & \cellbarACC{88.92} & \cellbarACC{91.79} & \cellbarACC{74.67} & \cellbarACC{72.93} & \cellbarACC{73.23} & \multicolumn{1}{c}{\cellbarACC{71.99}} & \cellbarACC{90.32} & \cellbarACC{91.43} \\
                        & & \multicolumn{1}{c|}{Time(s)} & 993.91 & 1779.48 & 1594.15 & 9330.49 & 25089.67 & \multicolumn{1}{c}{6941.68} & 1050.88 & 2691.72 \\ 
                        
                        \cline{2-3}
                        
                        & \multirow{3}{*}{1024} & \multicolumn{1}{c|}{\makecell{$|s|$(\%)}}   & \cellbarSE{74.26} & \cellbarSE{86.27} & \cellbarSE{67.33} & \cellbarSE{88.12} & \cellbarSE{92.31} & \multicolumn{1}{c}{\cellbarSE{96.15}} & \cellbarSE{71.29} & \cellbarSE{69.31} \\ 
                        & & \multicolumn{1}{c|}{\makecell{$Acc_{main}$(\%)}} & \cellbarACC{89.38} & \cellbarACC{91.43} & \cellbarACC{74.30} & \cellbarACC{71.54} & \cellbarACC{74.30} & \multicolumn{1}{c}{\cellbarACC{71.76}} & \cellbarACC{90.37} & \cellbarACC{91.87} \\ 
                        & & \multicolumn{1}{c|}{Time(s)}  & 996.87  & 1693.35 & 1643.65 & 8901.83 & 24810.68 & \multicolumn{1}{c}{6891.76} & 994.08 & 2499.38 \\

                        \cline{2-3} 
                        
                        & \multirow{3}{*}{512}  & \multicolumn{1}{c|}{\makecell{$|s|$(\%)}}   & \cellbarSE{92.08} & \multicolumn{1}{c}{\cellbarSE{93.14}} & \cellbarSE{81.37} & \cellbarSE{91.18} & \multicolumn{1}{c}{\cellbarSE{96.15}} & \multicolumn{1}{c}{\cellbarSE{96.15}} & \cellbarSE{94.12} & \cellbarSE{93.14} \\ 
                        & & \multicolumn{1}{c|}{\makecell{$Acc_{main}$(\%)}}  & \cellbarACC{89.04} & \multicolumn{1}{c}{\cellbarACC{92.10}} & \cellbarACC{74.09} & \cellbarACC{72.01} & \cellbarACC{73.87} & \multicolumn{1}{c}{\cellbarACC{72.01}} & \cellbarACC{90.50} & \cellbarACC{91.87} \\ 
                        & & \multicolumn{1}{c|}{Time(s)}  & 1004.12 & \multicolumn{1}{c}{1750.65} & 1674.89 & 9354.67 & 25102.65 & \multicolumn{1}{c}{7354.67} & 1059.54 & 2652.31 \\

\bottomrule
\end{tabular}
\begin{tablenotes} 
    \item $\bullet$ A larger shard rate $|s|$ means less time spent embedding watermarks, indicating weaker security but reduced waste of intra-shard training effort, as effort is accounted for at the shard level regardless of how much is invested within each shard.

\end{tablenotes}
\end{threeparttable}
}
\end{table*}

\subsection{Experimental Results}

\begin{table*}[t]
\centering
\caption{The comparison between \textsc{PoLO} and existing PoL in terms of the time consumption of \textsc{PoLO} generation (+training) and verification.}
\label{tab_gen-ver}
\renewcommand{\arraystretch}{1}
\resizebox{\linewidth}{!}{
\begin{threeparttable}
\begin{tabular}{cc|c| cccccccc}
\toprule
\multicolumn{1}{l}{\makecell{Watermark\\ \textbf{size}(bit)}} & \multicolumn{1}{l}{\makecell{\textbf{Time} (s)}} & \multicolumn{1}{c}{\textbf{Methods}} & \makecell{CIFAR-10 \\ AlexNet} & \makecell{CIFAR-10\\ResNet18} & \makecell{CIFAR-100\\ResNet18} & \makecell{CIFAR-100\\WideResNet} & \makecell{TinyInageNet\\ VGG16} & \makecell{TinyInageNet\\ResNet34} & \makecell{AG News\\TextCNN} & \makecell{AG News\\Bert} \\ 

\toprule

\multirow{5}{*}{2048} & Training & \multicolumn{1}{c|}{Baseline}   & 970.45 & 1598.37 & 1691.38 & 9340.34 & 25031.45 & 7312.89 & 1052.38 & 2551.45      \\

\cline{2-3}

 &  \multirow{3}{*}{ \makecell{Training+$\mathbb{P}$ gen \\  (v.s.) \textcolor{violet}{$\mathbb{P}$ gen}}}  & \multicolumn{1}{c|}{PoLO} & \cellcolor{blue!8}990.74/\textcolor{violet}{20.29} &\cellcolor{blue!8}1618.78/\textcolor{violet}{20.41} & \cellcolor{blue!8}1707.59/\textcolor{violet}{16.21 } & \cellcolor{blue!8}9357.91/\textcolor{violet}{17.57 } & \cellcolor{blue!8}25051.72/\textcolor{violet}{20.27 } & \cellcolor{blue!8}7332.32/\textcolor{violet}{19.43 } & \cellcolor{blue!8}1070.12/\textcolor{violet}{17.74 } & \cellcolor{blue!8}2573.38/\textcolor{violet}{21.93 } \\ 
 &  & \multicolumn{1}{c|}{PoL (Vanilla)} & 975.33/\textcolor{violet}{4.88 } & 1603.47/\textcolor{violet}{5.10 } & 1695.58/\textcolor{violet}{4.20 } & \multicolumn{1}{c|}{9343.83/\textcolor{violet}{3.49 }} & 25035.45/\textcolor{violet}{4.00 } & 7316.12/\textcolor{violet}{3.23 } & 1055.23/\textcolor{violet}{2.85 } & 2556.15/\textcolor{violet}{4.70 } \\ 
 &   & \multicolumn{1}{c|}{PoL (hash)} & 979.13/\textcolor{violet}{8.68} &  \multicolumn{1}{c}{1604.67/\textcolor{violet}{6.30}} & 1696.78/\textcolor{violet}{5.40} &  \multicolumn{1}{c|}{9347.43/\textcolor{violet}{7.09}} & 2504.65/\textcolor{violet}{11.20} & 7317.42/\textcolor{violet}{4.53} & 1056.63/\textcolor{violet}{4.25} & 2557.55/\textcolor{violet}{6.10} \\
 
\cline{2-2}
 
 &  \multirow{3}{*}{$\mathbb{P}$ verify} & \multicolumn{1}{c|}{PoLO} & \cellcolor{yellow!15}83.37 & \cellcolor{yellow!15}75.96 & \cellcolor{yellow!15}82.73 & \multicolumn{1}{c|}{\cellcolor{yellow!15}211.06} & \cellcolor{yellow!15}342.17 & \cellcolor{yellow!15}135.43 & \cellcolor{yellow!15}44.54 & \cellcolor{yellow!15}56.29 \\
 &   & \multicolumn{1}{c|}{PoL (Vanilla)} & 891.99 & 1530.55 & 1612.91 & \multicolumn{1}{c|}{9132.82} & 24693.31 & 7180.73 & 1010.71 & 2495.89   \\ 
 &   & \multicolumn{1}{c|}{PoL (hash)} & 895.79 & 1531.75 & 1614.10 & \multicolumn{1}{c|}{9136.42} & 24700.51 & 7182.03 & 1012.11 & 2497.29   \\

\cline{1-3}\cline{10-11} 

\multirow{5}{*}{1024} & Training & \multicolumn{1}{c|}{Baseline}   & 1043.32 & 1832.45 & 1653.78 & \multicolumn{1}{c|}{8945.56} & 24910.32 & 7215.56 & 1040.67 & 2653.75 \\

\cline{2-3}

& \multirow{3}{*}{\makecell{Training+$\mathbb{P}$ gen \\  (v.s.) \textcolor{violet}{$\mathbb{P}$ gen}}} & \multicolumn{1}{c|}{PoLO} & \cellcolor{blue!8}1066.78/\textcolor{violet}{23.46} &  \multicolumn{1}{c|}{\cellcolor{blue!8}1861.62/\textcolor{violet}{29.17}} & \cellcolor{blue!8}1681.42/\textcolor{violet}{27.64} &  \multicolumn{1}{c|}{\cellcolor{blue!8}8968.80/\textcolor{violet}{23.24}} & \cellcolor{blue!8}24934.03/\textcolor{violet}{23.71} & \cellcolor{blue!8}7238.95/\textcolor{violet}{23.39} & \cellcolor{blue!8}1067.23/\textcolor{violet}{26.56} & \cellcolor{blue!8}2687.89/\textcolor{violet}{34.14} \\
 &   & \multicolumn{1}{c|}{PoL (Vanilla)} & 1047.57/\textcolor{violet}{4.25} &  \multicolumn{1}{c|}{1837.13/\textcolor{violet}{4.68}} & 1658.59/\textcolor{violet}{4.81} &  \multicolumn{1}{c|}{8948.85/\textcolor{violet}{3.29}} & 24914.52/\textcolor{violet}{4.20} & 7219.32/\textcolor{violet}{3.76} & 1044.31/\textcolor{violet}{3.64} & 2658.57/\textcolor{violet}{4.82} \\
  &   & \multicolumn{1}{c|}{PoL (hash)} & 1051.37/\textcolor{violet}{8.05} &  \multicolumn{1}{c|}{1838.32/\textcolor{violet}{5.88}} & 1659.78/\textcolor{violet}{6.00} &  \multicolumn{1}{c|}{8952.45/\textcolor{violet}{6.89}} & 24921.71/\textcolor{violet}{11.40} & 7220.61/\textcolor{violet}{5.57} & 1045.71/\textcolor{violet}{5.04} & 2659.97/\textcolor{violet}{6.22} \\
 
\cline{2-2}

& \multirow{3}{*}{$\mathbb{P}$ verify}  & \multicolumn{1}{c|}{PoLO} & \cellcolor{yellow!15}157.60 &  \multicolumn{1}{c|}{\cellcolor{yellow!15}149.07} & \cellcolor{yellow!15}149.63 & \cellcolor{yellow!15}380.86 & \cellcolor{yellow!15}456.77 & \cellcolor{yellow!15}172.15 & \cellcolor{yellow!15}67.90 & \cellcolor{yellow!15}78.57 \\
&     & \multicolumn{1}{c|}{PoL (Vanilla)} & 887.03 &  \multicolumn{1}{c|}{1688.14} & 1509.04 & 8566.07 & 24455.79 & 7045.22 & 974.43 & 2580.03 \\
&   & \multicolumn{1}{c|}{PoL (hash)} & 890.83 & \multicolumn{1}{c|}{1689.34} & 1510.24 & \multicolumn{1}{c}{8569.67} & 24462.98 & 7046.51 & 975.83 & 2581.43   \\

\cline{1-3}\cline{10-11}  

\multirow{5}{*}{512}  & Training    & \multicolumn{1}{c|}{Baseline} & 963.13 &  \multicolumn{1}{c|}{1754.67} & 1763.65 & 8402.83  & 25419.39 & 6751.64 & 1023.67 & 2451.54 \\ 

\cline{2-3} 

 & \multirow{3}{*}{\makecell{Training+$\mathbb{P}$ gen \\ (v.s.) \textcolor{violet}{$\mathbb{P}$ gen}}} & \multicolumn{1}{c|}{PoLO} & \cellcolor{blue!8}993.12/\textcolor{violet}{29.99} &  \multicolumn{1}{c|}{\cellcolor{blue!8}1783.50/\textcolor{violet}{28.83}} & \cellcolor{blue!8}1798.00/\textcolor{violet}{34.35} & \cellcolor{blue!8}8438.43/\textcolor{violet}{35.60} & \cellcolor{blue!8}25442.66/\textcolor{violet}{23.27} & \cellcolor{blue!8}6775.54/\textcolor{violet}{23.90} & \cellcolor{blue!8}1051.52/\textcolor{violet}{27.85} & \cellcolor{blue!8}2486.51/\textcolor{violet}{34.97} \\
 &  & \multicolumn{1}{c|}{PoL (Vanilla)} & 967.13/\textcolor{violet}{4.00} &  \multicolumn{1}{c|}{1756.94/\textcolor{violet}{2.27}} & 1768.85/\textcolor{violet}{5.20} & 8407.34/\textcolor{violet}{4.51} & 25425.03/\textcolor{violet}{5.64} & 6754.39/\textcolor{violet}{2..75} & 1026.55/\textcolor{violet}{2.88} & 2456.71/\textcolor{violet}{5.17}     \\ 
 &   & \multicolumn{1}{c|}{PoL (hash)} & 970.93/\textcolor{violet}{7.80} &  \multicolumn{1}{c|}{1758.14/\textcolor{violet}{3.47}} & 1770.05/\textcolor{violet}{6.40} &  \multicolumn{1}{c}{8410.94/\textcolor{violet}{8.11}} & 25432.23/\textcolor{violet}{12.84} & 6755.69/\textcolor{violet}{4.05} & 1027.95/\textcolor{violet}{4.28} & 2458.43/\textcolor{violet}{6.57} \\
 
\cline{2-2} 
 
& \multirow{3}{*}{$\mathbb{P}$ verify}   & \multicolumn{1}{c|}{PoLO}  & \cellcolor{yellow!15}193.79 & \cellcolor{yellow!15}213.85 & \cellcolor{yellow!15}188.58 & \cellcolor{yellow!15}495.44 & \cellcolor{yellow!15}445.18 & \cellcolor{yellow!15}167.15 & \cellcolor{yellow!15}76.69 & \cellcolor{yellow!15}81.88 \\
 &   & \multicolumn{1}{c|}{PoL (Vanilla)} & 768.41 & 1541.18 & 1580.35 & 7912.02 & 24975.89 & 6587.29 & 949.88 & 2374.87      \\ 
 &   & \multicolumn{1}{c|}{PoL (hash)} & 722.21 & 1542.38 & 1581.55 & \multicolumn{1}{c}{7915.62} & 24983.09 & 6588.59 & 951.28 & 2376.27   \\
                     
\bottomrule
\end{tabular}
\begin{tablenotes} 
    \item $\bullet$ The baseline in this table refers to pure model training without applying any PoL or PoO techniques.
    \item $\bullet$ PoL (hash)~\cite{incentive_pol} follows a similar PoL process to that of PoL (Vanilla)~\cite{frist_pol,pol_attack1,pol_attack2}, with the key difference being the hash computation, which takes only a negligible amount of time.
    \item $\bullet$ PoL (zkp)~\cite{zkf_pol} incurs around $15$ mins of proof generation time per iteration on VGG-11 with CIFAR-10 (excluding training time); in contrast, \textsc{PoLO} completes proof generation for all iterations in $\sim 20$s on VGG-16 with TinyImageNet.
\end{tablenotes}
\end{threeparttable}
}
\vspace{-0.1in}
\end{table*}

\subsubsection{\underline{Compatibility}}
\textsc{PoLO} supports multiple watermarkings, including RIGA~\cite{riga}, EDNN~\cite{Uchida}, and FedIPR~\cite{li2022fedipr}. During training, watermark convergence directly tracks optimisation progress: driving the detection rate $\eta$ from near zero to $\eta_G^{\text{emb}} = 0.99$ within a shard requires many gradient-descent steps and cannot be shortcut by direct weight manipulation without corrupting task accuracy, making per-shard embedding cost a reliable proxy for genuine effort. We use the shard rate $|s|$ to measure computational efficiency. To evaluate compatibility, we instantiate \textsc{PoLO} with the three schemes above, each with watermark sizes of 2048 bits and 1024 bits, and report $|s|$, main task accuracy $Acc_{main}$, and total time for training plus $\mathbb{P}$ generation.
As shown in Tab.\ref{shard rate}, $|s|$ varies significantly, ranging from 32.67\% to 96.15\%. A clear inverse relationship is observed between watermark size and $|s|$, where a shorter watermark size results in a proportionally higher $|s|$ across all watermarking schemes. This trend is particularly evident in the TinyImageNet dataset, which consistently exhibits higher $|s|$ under all watermarking configurations. Notably, the $Acc_{main}$ values remain stable across different watermarking approaches, with minimal deviation from baseline model performance. 
Total training-plus-generation time also stays stable across watermarking schemes, confirming that the choice of method does not affect computational overhead.

\vspace{3pt}
\noindent\textbf{Shard rate and security trade-off.}
High shard rates (e.g., $|s|=96.15\%$ on TinyImageNet) indicate that watermark convergence is reached within roughly one epoch, which might suggest that per-shard forgery is cheap.
However, this conflates \emph{watermark convergence speed} with \emph{forgery difficulty}.
To succeed, a forgery adversary must both embed a new watermark ($\Lambda'$) and \emph{suppress} the existing watermark $\Lambda_x$ within one epoch. Our experiments show that $\Lambda_x$ persists with $\eta$ between 64\% and 98\% even after full-model fine-tuning across all evaluated configurations (Tab.\ref{table-attack}), making single-epoch suppression infeasible in practice. Stronger fine-tuning either degrades $Acc_{main}$ or raises costs above the honest per-shard training baseline.
Moreover, even if a single shard were cheaply forgeable in isolation, an attacker must forge \emph{all} preceding shards to produce a valid chain, so total forgery cost grows linearly with chain depth.

\vspace{3pt}
\begin{center}
\begin{tcolorbox}[
    colback=blue!5,
    colframe=blue!75!black,
    sharp corners,
    boxrule=0.25pt,
    width=\linewidth,
    enhanced jigsaw,
    left=2mm, right=2mm, top=0.5mm, bottom=0.5mm, boxsep=0.5mm,
    before skip=4pt, after skip=4pt,
    drop shadow=black!50
]
    \textbf{Takeaway (high compatibility).} \textsc{PoLO} supports a wide range of watermarkings and can maintain stable main task accuracy and  computational overhead, enabling seamless integration into different implementations without compromising performance.
\end{tcolorbox}
\end{center}

\begin{figure*}[!t]
    \centering
    \subfigure[AlexNet\_CIFAR10\_2048]{
    \begin{minipage}[t]{0.23\textwidth}
    \centering
    \includegraphics[width=1.5in]{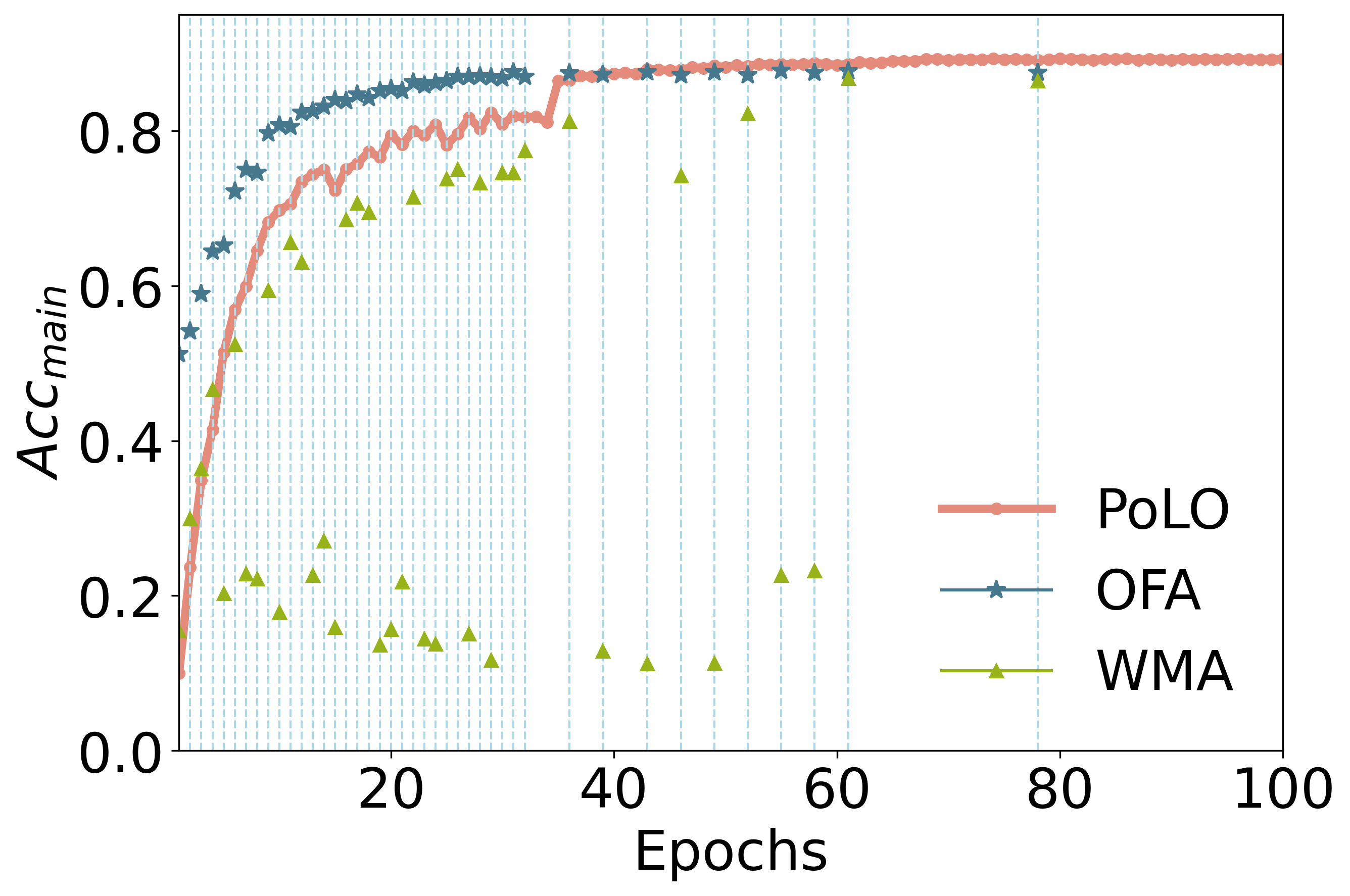}
    \end{minipage}
    \label{alexnet_cifar10_2048_attack}
    }
    \subfigure[ResNet18\_CIFAR10\_2048]{
    \begin{minipage}[t]{0.23\textwidth}
    \centering
    \includegraphics[width=1.5in]{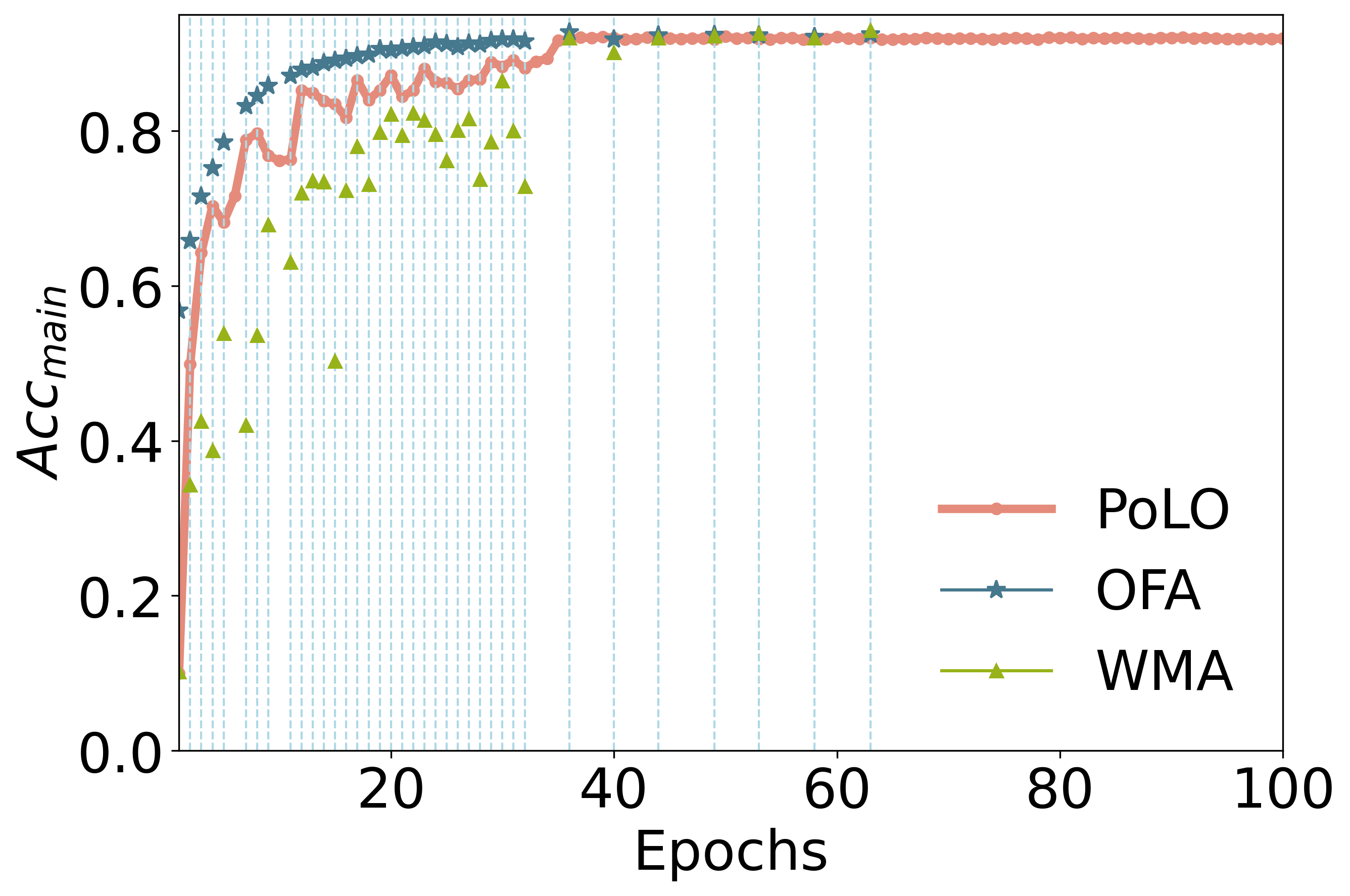}
    \end{minipage}
    \label{resnet18_cifar10_2048_attack}
    }
    \subfigure[ResNet18\_CIFAR100\_2048]{
    \begin{minipage}[t]{0.23\textwidth}
    \centering
    \includegraphics[width=1.5in]{images/attacks/c10_alexnet_2048_attack.png}
    \end{minipage}
    \label{resnet18_cifar100_2048_attack}
    }
    \subfigure[WideResNet\_CIFAR100\_2048]{
    \begin{minipage}[t]{0.23\textwidth}
    \centering
    \includegraphics[width=1.5in]{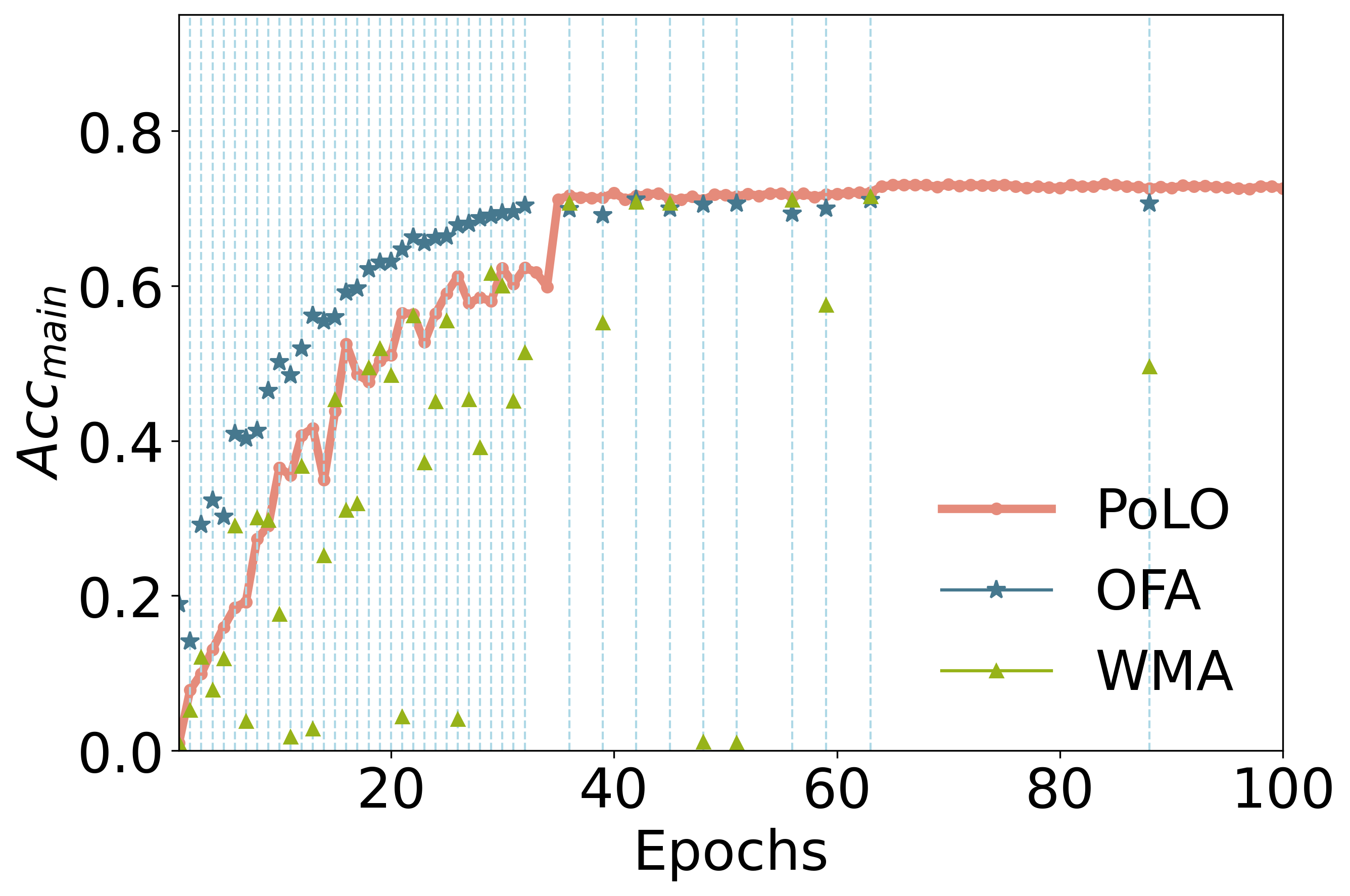}
    \end{minipage}
    \label{wideresnet_cifar100_2048_attack}
    }
    
    \subfigure[VGG16\_TinyImageNet\_2048]{
    \begin{minipage}[t]{0.23\textwidth}
    \centering
    \includegraphics[width=1.5in]{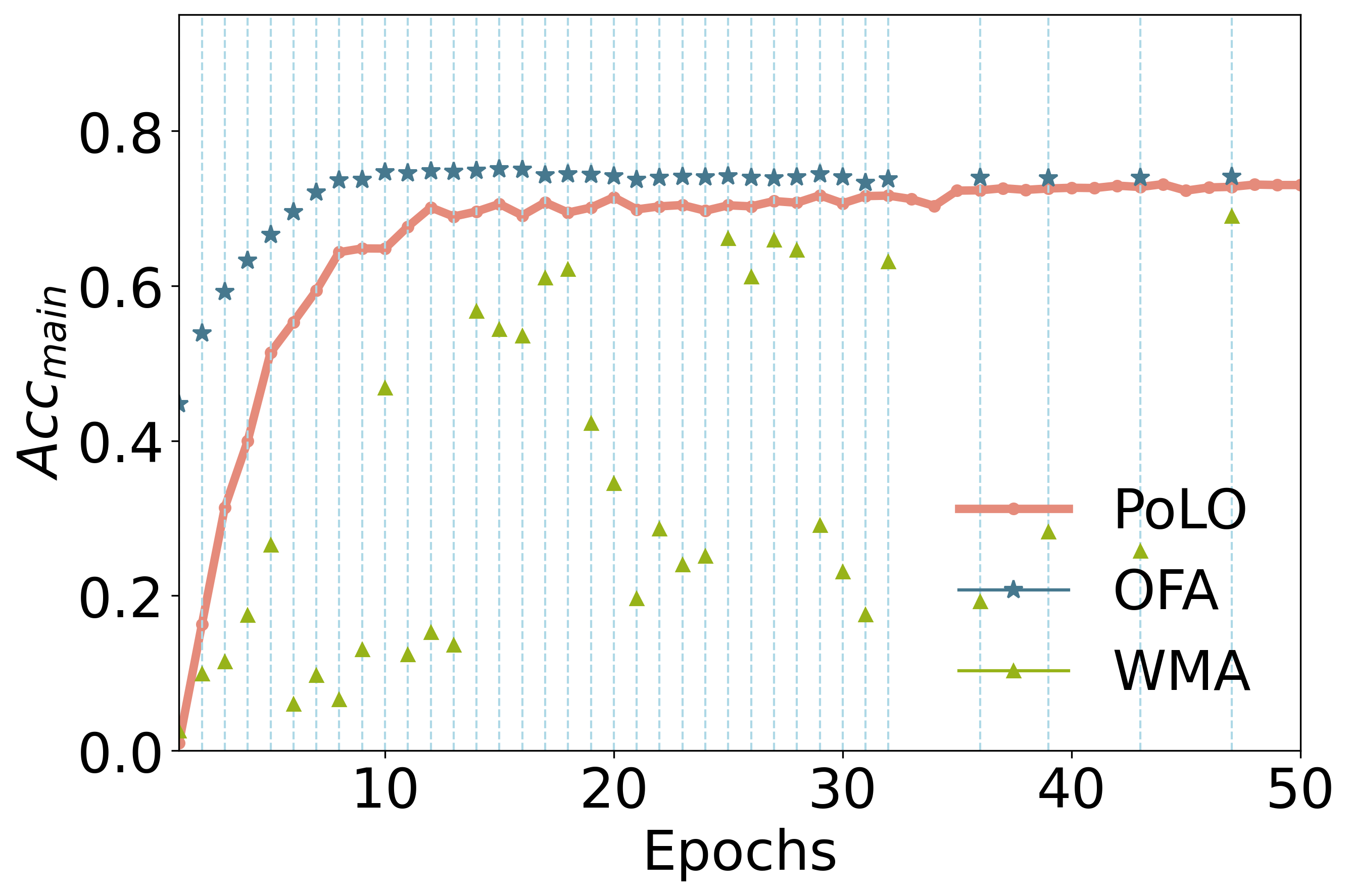}
    \end{minipage}
    \label{vgg16_image_2048_attack}
    }
    \subfigure[ResNet34\_TinyImageNet\_2048]{
    \begin{minipage}[t]{0.23\textwidth}
    \centering
    \includegraphics[width=1.5in]{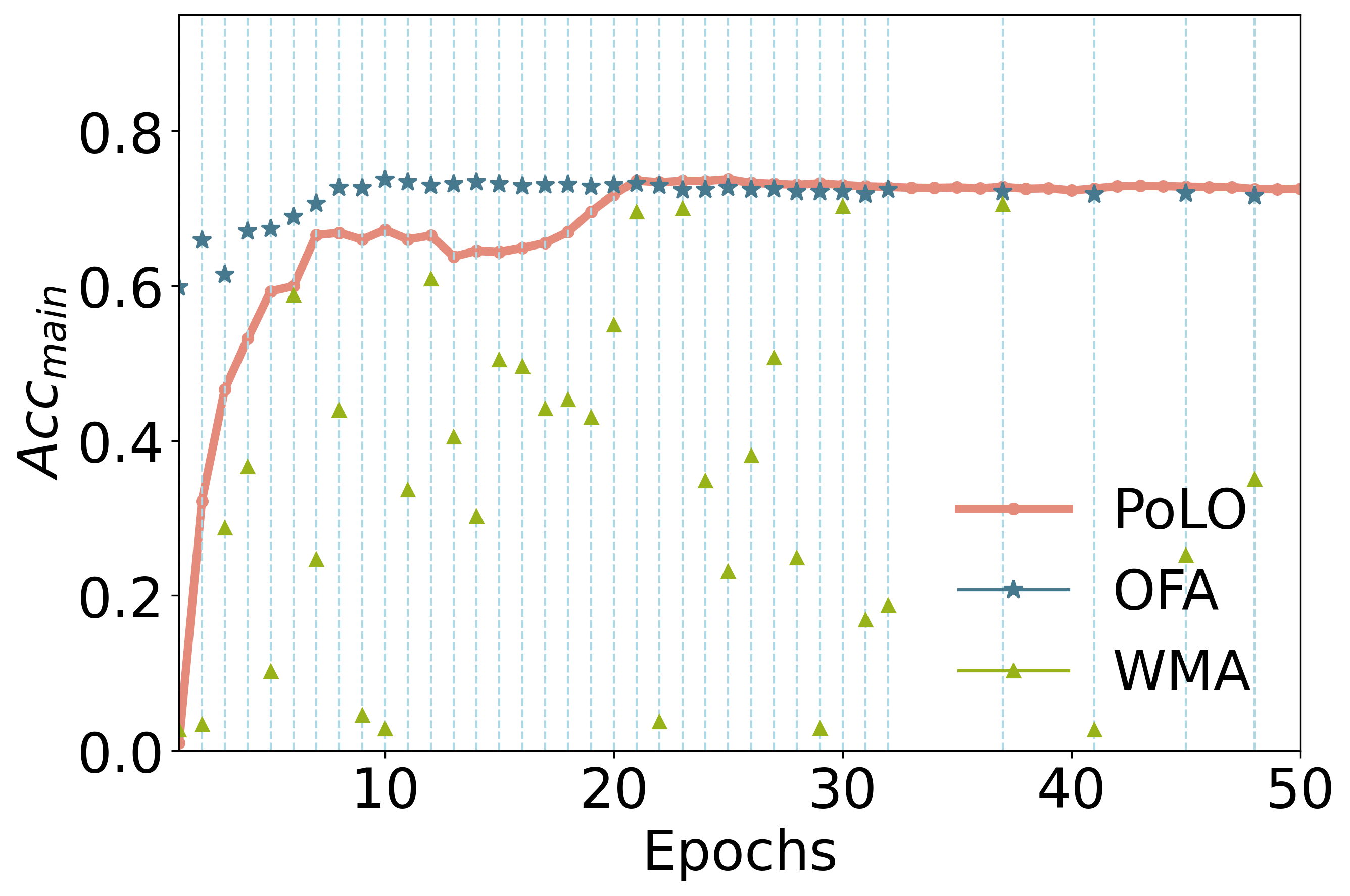}
    \end{minipage}
    \label{resnet_image_2048_attack}
    }
    \subfigure[TextCNN\_AG News\_2048]{
    \begin{minipage}[t]{0.23\textwidth}
    \centering
    \includegraphics[width=1.5in]{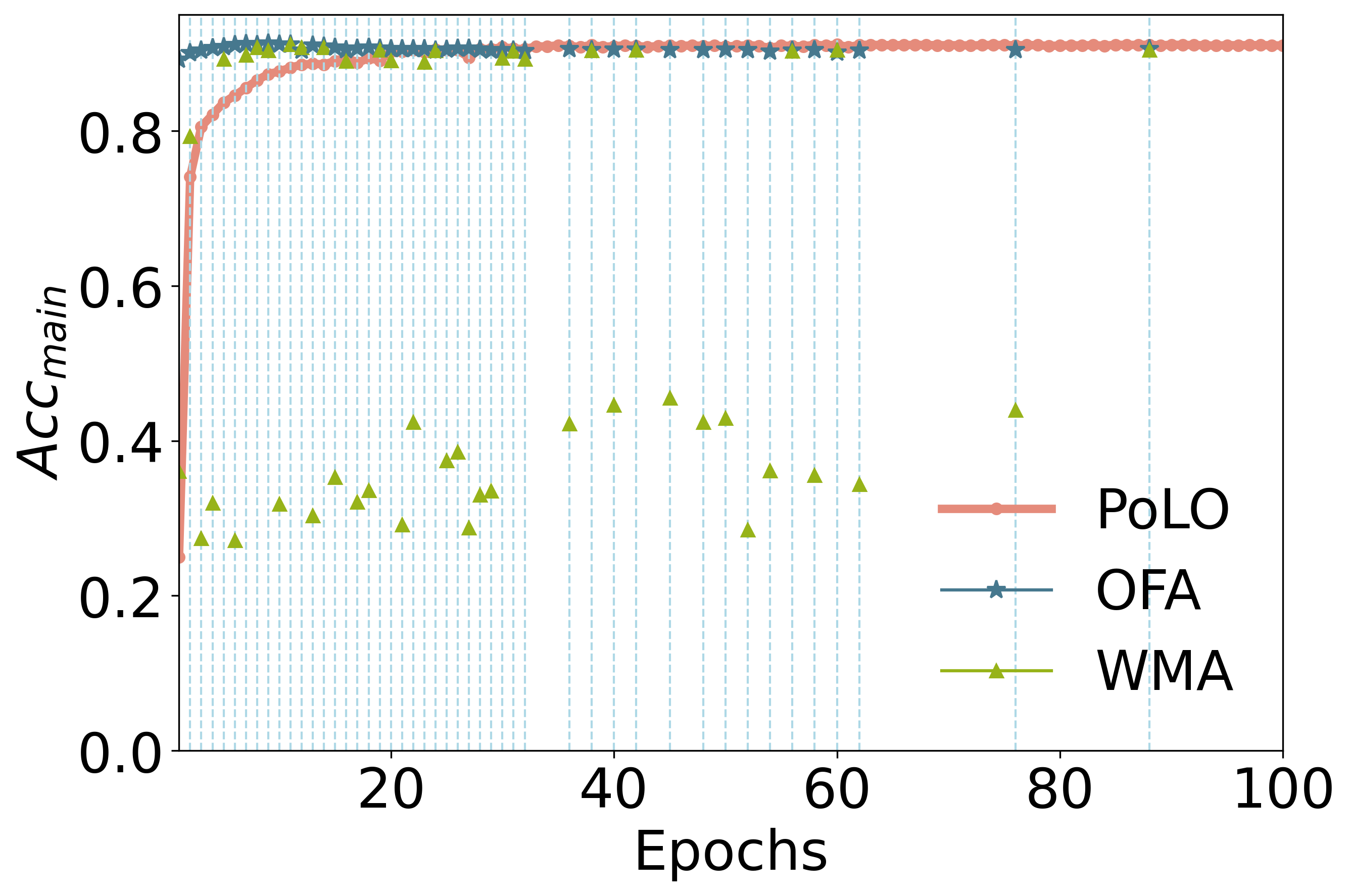}
    \end{minipage}
    \label{textcnn_news_2048_attack}
    }
    \subfigure[MiniBert\_AG News\_2048]{
    \begin{minipage}[t]{0.23\textwidth}
    \centering
    \includegraphics[width=1.5in]{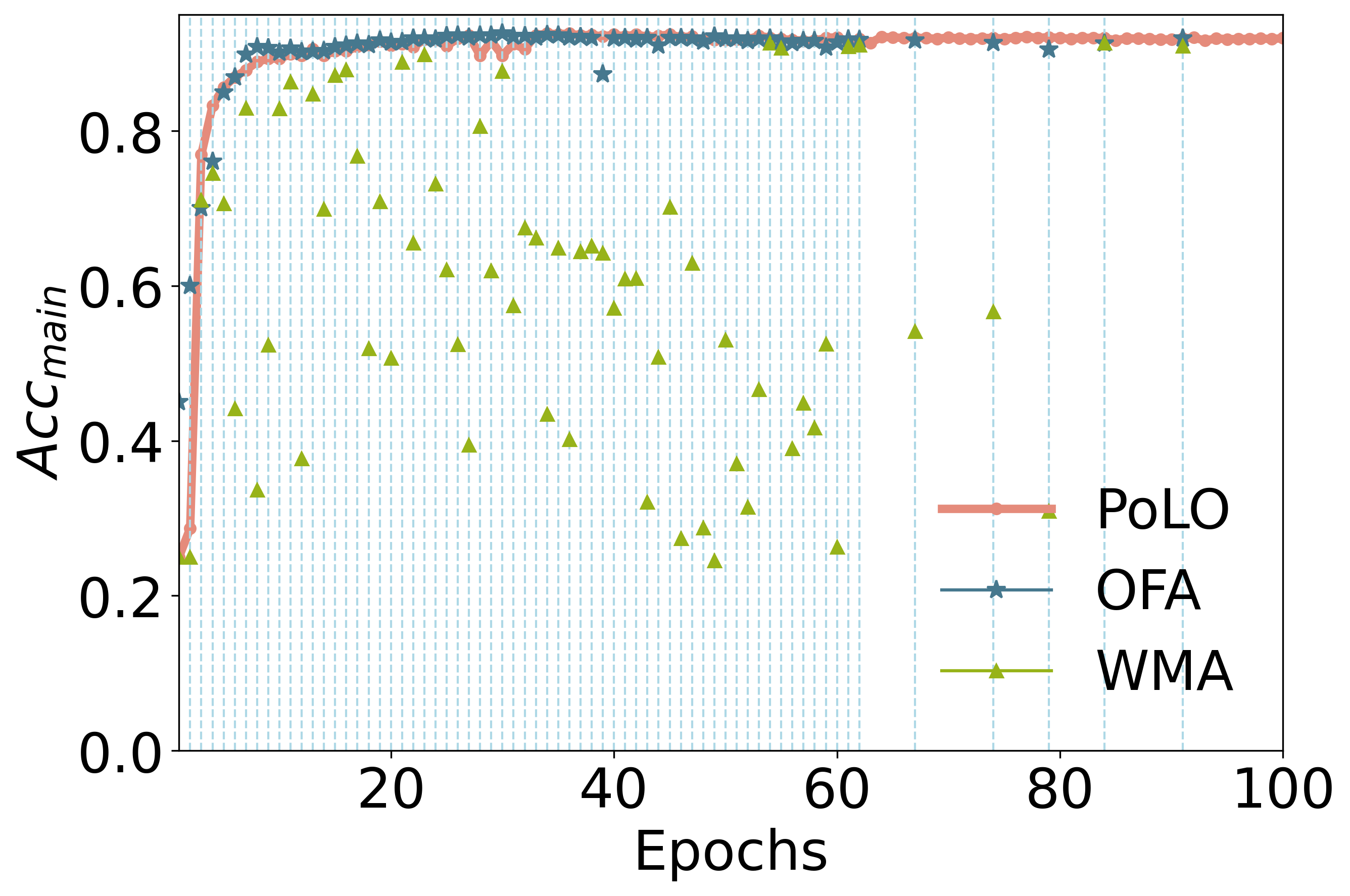}
    \end{minipage}
    \label{bert_news_2048_attack}
    }

\caption{
The $Acc_{main}$ comparison of the two $\mathbb{P}$ forge attacks with baseline - \textsc{PoLO}. The number followed by the dataset is the watermark size. In each subplot, the vertical lines perpendicular to the x-axis represent the completion times for each shard.
}
\label{attack}
\end{figure*}

\subsubsection{\underline{Overhead}}

A critical consideration in our evaluation is ensuring that $\mathbb{P}$ generation does not impose substantial computational overhead beyond baseline model training costs. 
Furthermore, for practical implementation, the verification overhead must be significantly lower than $\mathbb{P}$ generation costs.
To validate the efficiency of our \textsc{PoLO} framework, we conduct comparative analyses against traditional PoL schemes~\cite{frist_pol,pol_attack1,pol_attack2,incentive_pol,zkf_pol} and baseline training approaches.
Our experimental setup incorporates three distinct watermark sizes for PoLO generation.
To maintain experimental rigor and ensure fair comparison, we standardize computational resources and parameter settings across all trials. 
The verification process is designed to validate each intermediate $\mathbb{P}$, providing a comprehensive assessment of the framework's efficiency.

Tab.\ref{tab_gen-ver} compares \textsc{PoLO} with existing PoLs in proof generation and verification time. Because \textsc{PoLO} embeds watermarking directly into training, proof generation adds only modest overhead (e.g., 2.09\% on CIFAR-10/AlexNet with a 2048-bit watermark), whereas traditional PoLs incur generation costs comparable to a full training run. The gap is wider during verification: \textsc{PoLO}'s verification-to-generation ratio ranges from 1.3\% (VGG16, TinyImageNet) to under 12\% across all models, while conventional PoL verification reaches 91-98\% of generation time as it must replay intermediate states.

\vspace{3pt}
\begin{center}

\begin{tcolorbox}[
    colback=blue!5, 
    colframe=blue!75!black, 
    sharp corners,
    boxrule=0.25pt, 
    width=\linewidth,
    enhanced jigsaw,
    left=2mm, right=2mm, top=0.5mm, bottom=0.5mm, boxsep=0.5mm,
    before skip=4pt, after skip=4pt,
    drop shadow=black!50
]
\textbf{Takeaway (faster, privacy-preserving verification).} \textsc{PoLO} reduces verification time to 1.3–1.8\% of generation time, versus up to 98\% in conventional PoLs, by avoiding retraining of intermediate models.
\end{tcolorbox}
\end{center}

\subsubsection{\underline{Unforgeability}}

To assess \textsc{PoLO}’s resistance to forgery, we evaluate two attack strategies: OFA and WMA, under identical computational conditions. 
The legitimate model embeds a 2048-bit watermark, while attackers attempt to insert 2048 or 1024-bit illicit ones. Effectiveness is evaluated via computational overhead, post-attack detection rate $\eta$ of the legitimate watermark, and main task accuracy $Acc_{main}$. In OFA, attackers reuse RIGA-style embedding and fine-tune all model weights with a reduced learning rate (10\% of the original) 
to inject their watermark while degrading the original. In WMA, they directly overwrite selected weights using values derived from their illicit watermark and key.

Fig.\ref{attack} shows the impact of \textsc{PoLO} forgery attacks across different shard stages.
The watermark embedding speed (shard changing rate) is sensitive to the learning rate. Fig.\ref{attack} shows that the shard indices lines (vertical blue lines) vary in density every 30 epochs due to the scheduled learning rate decay by a factor of $10$ 
Models under WMA exhibit noticeable instability, with $Acc_{main}$ fluctuating significantly during the attack. This degradation causes the forged proof $\hat{\mathbb{P}}$ to fail verification.
In contrast, OFA maintains a comparable $Acc_{main}$ to that of the legitimate \textsc{PoLO}-protected model.

However, 
Tab.\ref{table-attack} shows that forging \textsc{PoLO} is economically impractical under rational assumptions. OFA (full-model fine-tuning to erase and re-embed) costs 113{,}771.31s (\$20.22) on VGG16 with TinyImageNet, compared with 25{,}051.72s (\$4.43) for honest proof generation, roughly a 5$\times$ increase. Despite this cost, the legitimate watermark survives with $\eta$ between 64.53\% and 98.43\%. WMA is much cheaper (11.32–344.94s, \$0.01–\$0.06) but likewise ineffective: legitimate watermark detection remains above 91.48\% (up to 99.81\%), and Fig.\ref{attack} shows that WMA significantly degrades task accuracy, causing models to fail \textsc{PoLO} verification.


\begin{table*}[!ht]
\centering
\caption{The time consumption, legitimate watermark detection rate $\eta$, and $Acc_{main}$ of launching attacks under different watermark sizes.}
\label{table-attack}
\renewcommand{\arraystretch}{1}
\resizebox{\textwidth}{!}{
\begin{threeparttable}
\begin{tabular}{cc|c cc  cc | cc cc}
\toprule
\makecell{Watermark \\ \textbf{size}(bits)} & \multicolumn{2}{c}{\textbf{\makecell{PoLO \\ forgery attacks}}} & \makecell{CIFAR-10 \\ AlexNet} & \makecell{CIFAR-10 \\ ResNet18} & \makecell{CIFAR-100 \\ ResNet18} & \multicolumn{1}{c}{\makecell{CIFAR-100 \\ WideResNet}} & \makecell{TinyInageNet \\ VGG16} & \makecell{TinyInageNet \\ ResNet34} & \makecell{AG News \\ TextCNN} & \makecell{AG News \\ MiniBert} \\ 
\toprule
                      \multirow{9}{*}{2048} & \multirow{3}{*}{Time(s)}& \multicolumn{1}{c|}{{PoLO}} &  \cellcolor{blue!8} 990.74 &  \cellcolor{blue!8} 1618.78 & \cellcolor{blue!8}  1707.59 & \multicolumn{1}{c}{ \cellcolor{blue!8} 9357.91 } &  \cellcolor{blue!8} 25051.72 & \cellcolor{blue!8}  7332.32 & \cellcolor{blue!8}  1070.12 &  \cellcolor{blue!8} 2573.38 \\
                      & & \multicolumn{1}{c|}{launch OFA} & 1696.91 & 2723.17 & 2381.83 & \multicolumn{1}{c}{15498.77} & 113771.31 & 30420.92 & 2275.26 & 7954.11    \\
                      & & \multicolumn{1}{c|}{launch WMA} & 56.51 & 62.04 & 64.37 & \multicolumn{1}{c}{218.91} & 344.94 & 158.55 & 11.33 & 30.46 \\ 
                      
                       \cline{2-3}
                      & \multirow{3}{*}{\makecell{$\eta$(\%)}} & \multicolumn{1}{c|}{\makecell{{PoLO}}} &  \cellcolor{blue!8} 99.24 &  \cellcolor{blue!8} 99.02 &  \multicolumn{1}{c}{\cellcolor{blue!8}99.07} &  \multicolumn{1}{c}{\cellcolor{blue!8}99.07} &  \cellcolor{blue!8} 99.46 &  \multicolumn{1}{c|}{\cellcolor{blue!8}99.71} &  \cellcolor{blue!8} 99.21 & \cellcolor{blue!8} 99.41 \\
                      & & \multicolumn{1}{c|}{launch OFA} & 93.62 & 77.82 & 77.52 & \multicolumn{1}{c}{96.83 }& 86.37 & \multicolumn{1}{c|}{75.27} & 69.84 & 78.09     \\
                      & & \multicolumn{1}{c|}{launch WMA} & 98.31 & 94.17 & 94.75 & \multicolumn{1}{c}{99.78} & 95.78 & \multicolumn{1}{c|}{95.75} & 91.48 & 96.61 \\ 
                      
                       \cline{2-3} 
                       
                      & \multirow{3}{*}{\makecell{$Acc_{main}$(\%)}} & \multicolumn{1}{c|}{{PoLO}} &  \cellcolor{blue!8} 89.22 &  \cellcolor{blue!8} 91.89 &  \cellcolor{blue!8} 74.81 &  \multicolumn{1}{c}{\cellcolor{blue!8}72.83} &  \cellcolor{blue!8} 73.00 &  \multicolumn{1}{c|}{\cellcolor{blue!8}72.46} &  \cellcolor{blue!8} 91.04 &  \cellcolor{blue!8} 91.88 \\
                      & & \multicolumn{1}{c|}{launch OFA} & 87.47 & 91.72 & 70.00 & \multicolumn{1}{c}{69.84} & 73.91 & \multicolumn{1}{c|}{71.94} & 90.37 & 91.45 \\
                      & & \multicolumn{1}{c|}{launch WMA} & 47.52 & 80.37 & 36.84 & \multicolumn{1}{c}{46.56} & 43.25 & \multicolumn{1}{c|}{39.81} & 58.69 & 55.20 \\
                      
            \cline{1-3}
                      
                      \multirow{9}{*}{1024} & \multirow{3}{*}{Time(s)} & \multicolumn{1}{c|}{{PoLO}} &  \cellcolor{blue!8} 1066.78 &  \cellcolor{blue!8} 1861.62 &  \cellcolor{blue!8} 1681.42 &  \multicolumn{1}{c}{\cellcolor{blue!8} 8968.8} &  \cellcolor{blue!8} 24934.03 &  \multicolumn{1}{c|}{\cellcolor{blue!8}7238.95} &  \cellcolor{blue!8} 1067.23 &  \cellcolor{blue!8} 2687.89 \\
                      & & \multicolumn{1}{c|}{launch OFA} & 1136.86 & 2106.95 & 1802.52 & \multicolumn{1}{c}{9355.78} & 61437.63& \multicolumn{1}{c|}{17808.86} & 1355.37 & 4545.51 \\
                      & & \multicolumn{1}{c|}{launch WMA} & 52.21 & 62.98 & 63.66 & \multicolumn{1}{c}{208.04} & 343.46 & \multicolumn{1}{c|}{148.24} & 11.32 & 30.33 \\ 
                      
                      \cline{2-3} 
                      
                      & \multirow{3}{*}{\makecell{$\eta$(\%)}} & \multicolumn{1}{c|}{{PoLO}} & \cellcolor{blue!8}  99.02 & \cellcolor{blue!8} 99.12 & \cellcolor{blue!8} 99.51 & \cellcolor{blue!8} 92.21 &\cellcolor{blue!8}  100.00 & \cellcolor{blue!8} 100.00 & \cellcolor{blue!8} 99.22 & \cellcolor{blue!8} 99.32 \\
                      & & \multicolumn{1}{c|}{launch OFA} & 96.93 & 82.28 & 75.91 & 98.43 & 92.23 & 84.11 & 73.52 & 64.53 \\
                      & & \multicolumn{1}{c|}{launch WMA} & 97.01 & 94.34 & 95.41 & 99.81 & 95.08 & 94.21 & 92.69 & 96.11 \\ 
                      
                      \cline{2-3} 

                      & \multirow{3}{*}{\makecell{$Acc_{main}$(\%)}} & \multicolumn{1}{c|}{{PoLO}} &  \cellcolor{blue!8} 88.94 &  \cellcolor{blue!8} 91.96 &  \cellcolor{blue!8} 74.93 &  \cellcolor{blue!8} 72.96 &  \cellcolor{blue!8} 73.33 &  \cellcolor{blue!8} 73.31 &  \cellcolor{blue!8} 90.94 &  \cellcolor{blue!8} 92.06 \\
                      & & \multicolumn{1}{c|}{launch OFA} & 87.45 & 91.68 & 69.15 & 69.38 & 73.65 & 72.00 & 90.36 & 91.62 \\
                      & & \multicolumn{1}{c|}{launch WMA} & 39.07 & 79.30 & 34.35 & 40.52 & 45.31 & 31.09 & 61.19        & 58.96 \\
                      
             \cline{1-3}
                       
                      \multirow{9}{*}{512}  & \multirow{3}{*}{Time(s)} & \multicolumn{1}{c|}{{PoLO}} & \cellcolor{blue!8} 993.12 & \cellcolor{blue!8} 1783.50 & \cellcolor{blue!8} 1798.00 & \cellcolor{blue!8} 8433.43 & \cellcolor{blue!8} 25442.66 &\cellcolor{blue!8}  6775.54 & \cellcolor{blue!8} 1051.52 & \cellcolor{blue!8} 2486.51 \\
                      & & \multicolumn{1}{c|}{launch OFA} & 1077.32 & 1839.35 & 1967.88 & 9369.91 & 61578.36 & 18359.39 & 1355.33 & 4575.48 \\
                      & & \multicolumn{1}{c|}{launch WMA} & 52.71 & 66.92 & 63.92 & 206.76 & 341.63 & 144.61 & 11.39 & 30.42 \\ 
                      
                     \cline{2-3} 
                     
                      & \multirow{3}{*}{\makecell{$\eta$(\%)}} & \multicolumn{1}{c|}{{PoLO}} & \cellcolor{blue!8}  99.20 &\cellcolor{blue!8} 99.22 & \cellcolor{blue!8} 99.21 & \multicolumn{1}{c}{\cellcolor{blue!8} 99.41} & \cellcolor{blue!8} 100.00 & \cellcolor{blue!8} 100.00 & \cellcolor{blue!8} 100.00 & \cellcolor{blue!8} 99.81 \\
                      & & \multicolumn{1}{c|}{launch OFA} & 96.98 & 75.03 & 75.99 & \multicolumn{1}{c}{98.47} & 84.24 & 63.52 & 79.67 & 64.27 \\
                      & & \multicolumn{1}{c|}{launch WMA} & 96.35 & 93.25 & 92.84 & \multicolumn{1}{c}{99.79} & 96.08 & 93.95 & 91.15 & 93.47 \\ 
                      
                      \cline{2-3} 
                      
                      & \multirow{3}{*}{\makecell{$Acc_{main}$(\%)}} & \multicolumn{1}{c|}{{PoLO}} & \cellcolor{blue!8} 89.54 & \cellcolor{blue!8}  92.22 & \cellcolor{blue!8}  75.18 & \multicolumn{1}{c}{\cellcolor{blue!8}72.64} & \cellcolor{blue!8}  73.91 & \cellcolor{blue!8}  73.22 & \cellcolor{blue!8}  90.65 & \cellcolor{blue!8}  91.82 \\
                      & & \multicolumn{1}{c|}{launch OFA} & 87.49 & 91.80 & 69.61 & \multicolumn{1}{c}{69.41} & 73.76 & 72.17 & 90.38 & 91.60 \\
                      & & \multicolumn{1}{c|}{launch WMA} & 40.50 & 79.51 & 38.88 & \multicolumn{1}{c}{40.86} & 37.97 & 38.41 & 65.39 & 65.19 \\
                      \bottomrule
\end{tabular}
\begin{tablenotes} 
    \item $\bullet$ For launching attacks, a smaller $\eta$ indicates greater attack effectiveness against the legitimate watermark, while a higher $Acc_{main}$ reflects better fidelity to the main task performance (i.e., less impact on accuracy).
\end{tablenotes}
\end{threeparttable}
}
\end{table*}

Both attack families face per-shard limitations. OFA cannot reliably erase the legitimate watermark even with full-model fine-tuning; pushing harder raises the per-shard cost to match or exceed honest training, violating economic rationality. WMA is cheaper but imprecise: it cannot target the owner's watermark positions and typically degrades $Acc_{main}$ enough to fail verification. In either case, the attacker pays costs comparable to honest training or produces an unverifiable model. A shard-wise visualization in \textbf{Appendix}~\ref{attack-curve} provides additional empirical evidence for the unforgeability of \textsc{PoLO} against economically rational attackers.

\smallskip
\noindent\textbf{Attacker verification outcome.}
To clarify whether forged proofs actually pass \textsc{PoLO} verification, we report the attacker's forged watermark detection rate $\eta'$ and the resulting verification outcome. Under OFA, the attacker's forged watermark achieves high $\eta'$ (typically $>$95\%) because the attacker controls the fine-tuning objective; however, the \emph{legitimate} watermark simultaneously persists with $\eta$ between 64\% and 98\%, meaning the EMM verifier can still attribute the model to the original owner. The attacker thus fails to \emph{exclusively} claim ownership. Under WMA, although the attacker can embed a forged watermark in selected weights, the main-task accuracy $Acc_{main}$ drops below the verification threshold in most configurations, causing the forged proof to be rejected outright. In neither case does the attacker succeed in both passing verification and suppressing the legitimate proof.

\smallskip
\noindent\textbf{Watermark detection threshold.}
Throughout our experiments, we set the verification threshold $\eta_G^{\text{ver}} = 0.70$, distinct from the embedding completion threshold $\eta_G^{\text{emb}} = 0.99$ used to determine when a shard is finished (Algorithm~1; cf.\ \textbf{Appendix}~\ref{appendix_algo}). This value is empirically grounded: legitimate watermarks consistently achieve $\eta > 0.95$ after embedding (Tab.\ref{shard rate}), while the lowest surviving detection rate under OFA is $\eta = 64.27\%$ (MiniBert, 512-bit, Tab.\ref{table-attack}). Setting $\eta_G^{\text{ver}} = 0.70$ therefore remains well below the legitimate post-embedding rate yet above the worst-case attacked rate, maximising the false-rejection-free operating range. Tab.\ref{tab:merged_lr_eta} in \textbf{Appendix}~\ref{appendix_config_shard} further confirms that varying $\eta_G^{\text{emb}}$ across 0.85, 0.95, and 0.99 does not affect $Acc_{\text{main}}$, indicating stable shard dynamics. A full parametric sweep of $\eta_G^{\text{ver}}$ is left to future work.

\begin{center}
\begin{tcolorbox}[
colback=blue!5,
colframe=blue!75!black,
sharp corners,
boxrule=0.3pt,
width=\linewidth,
enhanced jigsaw,
left=2mm, right=2mm, top=0.5mm, bottom=0.5mm, boxsep=0.5mm,
before skip=4pt, after skip=2pt,
drop shadow=black!50
]
\textbf{Takeaway (robust against forgery with practical cost barriers).} Both OFA and WMA fail to compromise the legitimacy of \textsc{PoLO}: OFA incurs up to 5× the cost of proof generation while still yielding detection rates between 64.27\% and 98.47\%, and WMA, though cheaper, remains ineffective with detection consistently above 90\%.
\end{tcolorbox}
\end{center}

\vspace{0.1in}

\section{Concluding Remarks}
\label{sec_con}


We introduced \textsc{PoLO}, a unified framework that integrates PoL and PoO via chained watermarking in EMMs. For \textbf{\textcolor{violet}{RQ1}}, we turn PoO into PoL by embedding evolving watermarks throughout training, enabling verifiable effort tracking over the full trajectory. For \textbf{\textcolor{violet}{RQ2}}, we replace gradient-based proofs with cryptographic hash chains, improving tamper resistance and forgery robustness. In economically driven settings of EMMs where ownership gates monetization, \textsc{PoLO} raises an economic barrier: forgery requires effort comparable to honest training or yields unverifiable models. Unlike prior PoL schemes that expose data or require expensive recomputation, \textsc{PoLO} supports efficient, privacy-aware verification. Experiments show \textsc{PoLO} achieves \textbf{99\%} watermark detection accuracy, reduces verification cost to \textbf{1.5\%–10\%} of training, and demands \textbf{1.1x–4x} more effort to forge than to generate legitimately, while retaining over \textbf{90\%} accuracy under attack.



\bibliographystyle{unsrt}
\bibliography{bib(short)}

\renewcommand{\thesubsection}{\arabic{subsection}}

\section*{Appendix}
\subsection{Notation (Tab.\ref{tab:notation})}
\label{appendix_notation}


\begin{table}[!hbt]
    \renewcommand\arraystretch{1}
    \small
    \caption{List of Notation}
    \label{tab:notation}
    \begin{center}
    \begin{threeparttable}
    \resizebox{\linewidth}{!}{
    \begin{tabular}{c|l|}
    \toprule
    \multicolumn{1}{c}{\textbf{Notation}} &  \multicolumn{1}{c}{\textbf{Definition}} \\
    \midrule
    \multicolumn{1}{c}{\cellcolor{yellow!15}$\mathbb{P}$} &\multicolumn{1}{l}{ Proof-of-Anything (PoX) proof }\\
   \multicolumn{1}{c}{\cellcolor{yellow!15}$\mathcal{P}$/$\mathcal{V}$/$\mathcal{A}$} & \multicolumn{1}{l}{Prover/Verifier/Attacker} \\
     \multicolumn{1}{c}{\cellcolor{yellow!15}$\Theta$} & \multicolumn{1}{l}{Irreversible function with input $\chi$} \\
    \cellcolor{yellow!15}$\Psi$ & \multicolumn{1}{l}{Verification condition required for proof validity} \\

    \cellcolor{blue!8}$t$ & \multicolumn{1}{l}{The $t$-th training epoch out of $T$ total epochs} \\
    \cellcolor{blue!8}$W_t$ & \multicolumn{1}{l}{Model weights at epoch $t$} \\
    \cellcolor{blue!8}$\mathbf{B}_t$ &  \multicolumn{1}{l}{Data batch sampled from dataset $\mathcal{D}$ at epoch $t$} \\
    \cellcolor{blue!8}$H_t, A_t$ & \multicolumn{1}{l}{Signature and auxiliary info at epoch $t$} \\
    \cellcolor{blue!8}$\Lambda$ & \multicolumn{1}{l}{Ownership information embedded in the model} \\
    
    \cellcolor{yellow!15}$s_x$ & \multicolumn{1}{l}{The $x$-th shard out of $S$ shards} \\
    \cellcolor{yellow!15}$T_x$ & Set of epochs in $s_x$, $\bigcup_{x=1}^{S} T_x = \{1, 2, \dots, T\}$, \\ 
    \cellcolor{yellow!15}  & \text{and} $T_x \cap T_{x'} = \emptyset$ \text{for } $x \ne x'$ \\
    {\cellcolor{yellow!15}$t_x$} & The $t$-th training epoch, which belongs to shard $s_x$ \\
    {\cellcolor{yellow!15}$\bar{t}_x$} & The $t$-th training epoch, which is the final epoch of shard $s_x$ \\
    {\cellcolor{yellow!15}$W_x$} & The checkpoint model of shard $s_x$ \\
    \multicolumn{1}{c}
    {\cellcolor{yellow!15}$\Lambda_x$} & Embedded watermark in $s_x$ \\
    \multicolumn{1}{c}{\cellcolor{yellow!15}$k_x$} & Embedding key for $\Lambda_x$ \\
    \multicolumn{1}{c}{\cellcolor{yellow!15}$Y_x$} & Shard-specific selection matrix derived as $Y_x=\textsf{WMPosition}(\mathcal{H}_{x-1})$ \\
    \multicolumn{1}{c}{\cellcolor{yellow!15}$Z$} & Selection matrix for weights subject to the Gaussian obfuscation \\
    \multicolumn{1}{c}{\cellcolor{yellow!15}$\mu_x$} & Unique secret nonce used in hash computation \\
    \multicolumn{1}{c}{\cellcolor{yellow!15}$\eta$} & Watermark detection rate \\
    \multicolumn{1}{c}{\cellcolor{blue!8}$\mathbb{E}(\cdot)$/$\mathbb{C}(\cdot)$} & \multicolumn{1}{l}{Watermark embedding/extraction function} \\
    \multicolumn{1}{c}{\cellcolor{blue!8}$l_w(\cdot)$/$l_\Lambda(\cdot)$} & \multicolumn{1}{l}{The loss function for main task and embedding watermark} \\
     \multicolumn{1}{c}{\cellcolor{blue!8}$\sigma$} & \multicolumn{1}{l}{The privacy  budget of  Gaussian noise} \\
    \bottomrule
    \end{tabular}
    }
    \end{threeparttable}
    \end{center}
    \end{table}


\begin{algorithm}[b]
    \linespread{1.1}
    \footnotesize
    \caption{\textsc{PoLO} generation}
    \begin{algorithmic}[1]
    \Require $\mathbb{D}$ (\textcolor{violet}{dataset}), $\mu$/$id$ (\textcolor{violet}{secret nonce/information}), 
    $\sigma$ (\textcolor{violet}{privacy budget of Gaussian noise}),
    $t_{max}$ (\textcolor{violet}{maximum training iterations}),
    key$: Z$ (\textcolor{violet}{selection matrix for weights subject to Gaussian noise}), $\eta_G^{\text{emb}}$ (\textcolor{violet}{threshold for successful watermark embedding}), $l_\Lambda$ (\textcolor{violet}{watermark embedding regularizers}).
    \Ensure \textsc{PoLO} proof $\mathbb{P}$
    
    \State $W_0$, $\mu$, $x$ \Comment{initial model, and shard number.}
    \State Initialize $\Lambda_1, k_1, Y_1 = \textsf{WMGen}(\mathcal{H}_0), \textsf{KeyGen}(\mathcal{H}_0), \textsf{WMPosition}(\mathcal{H}_0)$,
    
     where $\mathcal{H}_0=\mathbb{H}(W_0, 1, \mu, id_\mathcal{P})$.


    \While{ ($\lnot$ converging) $\lor$ ($t_{max}$ is reached)} \Comment{until convergence}

        \State $W_{t_x} = \arg\min_{W} \left( l_w(W) + \lambda l_\Lambda(W) \right) 
            \left. \vphantom{\sum} \right\rvert_{\substack{W_0 = W_{x-1}}} + \delta W$ 
            \State \quad where $\delta W = \mathbb{E}(W_{x-1}, \Lambda_x, k_x, Y_x)$ \Comment{update weights}

            \State $\eta = 1 - \frac{\sum_{i}^{n} \left( \mathbb{C}(W_{t_x}, Y_x, k_x)[i] \ne \Lambda_x[i] \right)}{n}$ \Comment{update watermark detection rate}

        \If{$\eta \geq \eta_G^{\text{emb}}$}

            \State  Save $W_x = W_{t_x} + \mathbb{G}(\sigma, Z, W_{t_x} )$ \Comment{apply Gaussian noise}
            \State $\Lambda_{x+1}, k_{x+1}, Y_{x+1} = \textsf{WMGen}(\mathcal{H}_x), \textsf{KeyGen}(\mathcal{H}_x),$
            \State $ \textsf{WMPosition}(\mathcal{H}_x)$,
            where $\mathcal{H}_x=\mathbb{H}(W_x, x+1, \mu, id_\mathcal{P})$ \Comment{hash anchored at completed shard $x$; consistent with Eq.~\ref{wm_k_gen} via $\mathcal{H}_{x+1}^{\text{Eq}} = \mathcal{H}_x^{\text{Alg}}$}
            \State $x += 1$
        \Else
            $\quad W_{\text{prev}} = W_{t_x}$ \Comment{warm-start next training step within shard $s_x$; does \emph{not} alter the chain anchor $W_{x-1}$, which remains the last committed checkpoint of shard $s_{x-1}$}
        \EndIf
        
    \EndWhile

    \label{alg_1:polo generation}
    \State \textbf{RETURN} $\mathbb{P} \in \{W_1,W_2,...,W_S\}$
    \end{algorithmic}
    \label{algo:polo_generation}
    \end{algorithm}

\noindent\textbf{Note on chain integrity (Algorithm~\ref{algo:polo_generation}, Else branch).}
When the watermark detection rate has not yet reached $\eta_G$, the algorithm continues training from the current weight state $W_{t_x}$. In the pseudocode, $W_{\text{prev}}$ is a local training cursor used to warm-start the next gradient step; it is distinct from the \emph{chain anchor} $W_{x-1}$, which always refers to the final Gaussian-protected checkpoint saved at the end of the \emph{previous} shard $s_{x-1}$. The chain anchor $W_{x-1}$ is never overwritten during shard $s_x$'s training loop — it is only updated when a new shard is committed (line after $x \mathrel{+}= 1$). The verifier reconstructs $(\Lambda_x, k_x)$ exclusively from the committed $W_{x-1}$, so the cryptographic binding remains well-defined throughout.

\begin{algorithm}[!htb]
    \linespread{1.1}
    \footnotesize
    \caption{\textsc{PoLO} verification}
    \begin{algorithmic}[1]
    \Require $\mathbb{P} \in \{W_1,W_2,...,W_S\}$ (\textcolor{violet}{\textsc{PoLO} proof}), 
    $\mu$/$id$ (\textcolor{violet}{secret nonce and identity information}), 
    $\mathbb{D}^t $ (\textcolor{violet}{public test dataset}), $\eta_{G}^{\text{ver}}$(\textcolor{violet}{verification acceptance threshold for watermark similarity}). 
    \Ensure ``\textbf{Fail}'' or ``\textbf{Success}''.
    \State Choose $i \in \{S-1, S-2,..., 1\}$ \Comment{$\mathcal{V}$ selects the shard at which to terminate verification}

    \State $\mathcal{V}$ receives $\mathbb{P} \in \{W_1,W_2,...,W_S\}$, and $\mu$/$id$ from $\mathcal{P}$

    
    \For{$x \leftarrow S,S-1,...,i$} \Comment{verification loop}
        \State $\hat{{Acc}}_{{main}}=\textsf{Test}(W_x, \mathbb{D}^t)$ \Comment{test the model main task accuracy} 
        
        \State $\Lambda_{x}, k_{x}, Y_{x} = \textsf{WMGen}(\mathcal{H}_{x-1}), \textsf{KeyGen}(\mathcal{H}_{x-1}),$
        \State$ \textsf{WMPosition}(\mathcal{H}_{x-1})$,
        where $\mathcal{H}_{x-1}=\mathbb{H}(W_{x-1}, x, \mu, id_\mathcal{P})$ \Comment{calculate the watermark from the previous model by $\mathbb{H}(\cdot)$}
            
        \State $\hat{\Lambda}_x = \mathbb{C}(W_{x},Y_x,k_x)$ \Comment{extract the current watermark}
        
        \State $\eta_x = 1-\frac{ \sum_{i}^{n} (\hat{\Lambda}_x[i]\ne \Lambda_x[i])}{n}$ \Comment{calculate the watermarks similarity}
        
        \State \textbf{if} $\hat{Acc}_{main}$ not meet expectation \textbf{or} $\eta_x < \eta_G^{\text{ver}}$ \textbf{then}
        \State \hspace{4mm} \textbf{RETURN} ``\textbf{Fail}'', \textbf{break}
    \EndFor  \label{alg_4:polo verification}
    \State \textbf{RETURN} ``\textbf{Success}''
    \end{algorithmic}
    \label{algo:polo_verification}
    \end{algorithm}

\subsection{Algorithm of the Frameworks}\label{appendix_algo}

The proof generation process and verification process of \textsc{PoLO} ($\S$\ref{subsec: training_watermark}--$\S$\ref{subsec: verification_watermark}) can be found from \textbf{Algorithms}~\ref{algo:polo_generation}\&\ref{algo:polo_verification}.

\smallskip
\noindent\textbf{Proof generation.} Algorithm~\ref{algo:polo_generation} details the procedure for \textsc{PoLO} proof generation. The algorithm starts with an initial model 
$W_0$ and computes the first watermark $\Lambda_1$ and embedding key $k_1$ using a hash function $\mathbb{H}(\cdot)$, which incorporates the prover’s identity $id_\mathcal{P}$, a secret verifier-provided nonce $\mu$, and the shard index $x$.

Training proceeds in a loop over epochs, where in each iteration, the model is updated through standard training combined with watermark embedding. 
The watermark $\Lambda_x$ is embedded into selected weights of $W_{t_x}$ at each epoch $t\in T$ using a secret embedding key $k_x$ and a selection matrix $Y$. The watermark detection rate $\eta$ is dynamically computed after each epoch to assess whether the embedded watermark has reached the desired robustness threshold $\eta_G$. Once satisfied, differential privacy protection is selectively applied to a subset of the non-watermarked weights using another selection matrix $Z$, and the resulting model $W_x$ is stored as the final checkpoint of shard $s_x$.
A new watermark $\Lambda_{x+1}$ is computed from the obfuscation-protected model $W_x$ using chained hashing, cryptographically linking it to the previous shard. This process repeats until both the maximum epoch $t_{max}$ and target watermark quality $\eta_G$ are met.


In our implementation, we set $\eta_G^{\text{emb}}=0.99$, ensuring strong watermark robustness while maintaining training efficiency. Note that $\eta_G^{\text{emb}}$ is the shard-completion threshold during training and is distinct from the verification acceptance threshold $\eta_G^{\text{ver}}=0.70$ used in the verification algorithm (cf.\ Algorithm~\ref{algo:polo_verification} and $\S$\ref{sec:experiment}).
The final output of the algorithm is a \textsc{PoLO} proof $\mathbb{P}$. This guarantees traceable and tamper-resistant proof of training and model ownership.

\begin{figure*}[!t]
    \centering
    \subfigure[\textbf{Honest Training - } EfficientNet\_CIFAR10\_256]{
    \begin{minipage}[t]{0.48\textwidth}
    \centering
    \includegraphics[width=3.3in]{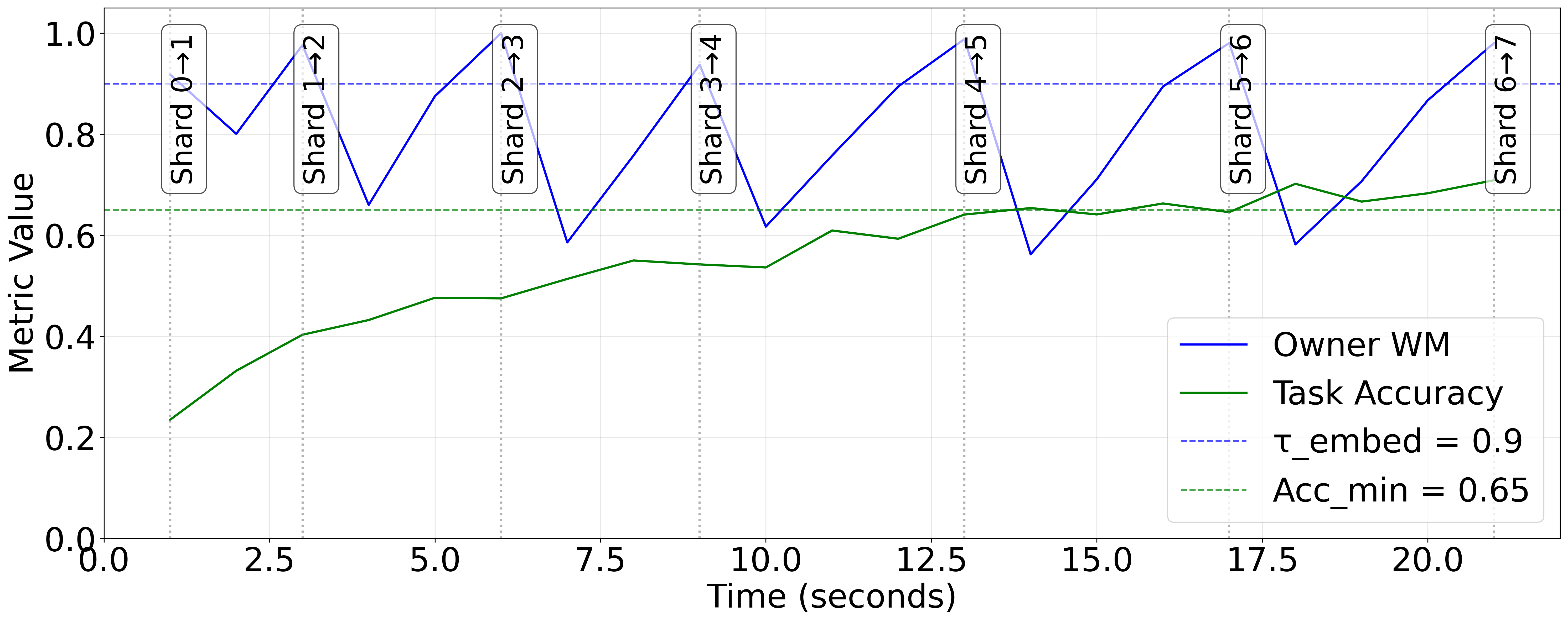}
    \end{minipage}
    \label{efficientnet_cifar10_honest}
    }
    \subfigure[\textbf{(OFA) Multi-shard Takeover} - EfficientNet\_CIFAR10\_256]{
    \begin{minipage}[t]{0.48\textwidth}
    \centering
    \includegraphics[width=3.3in]{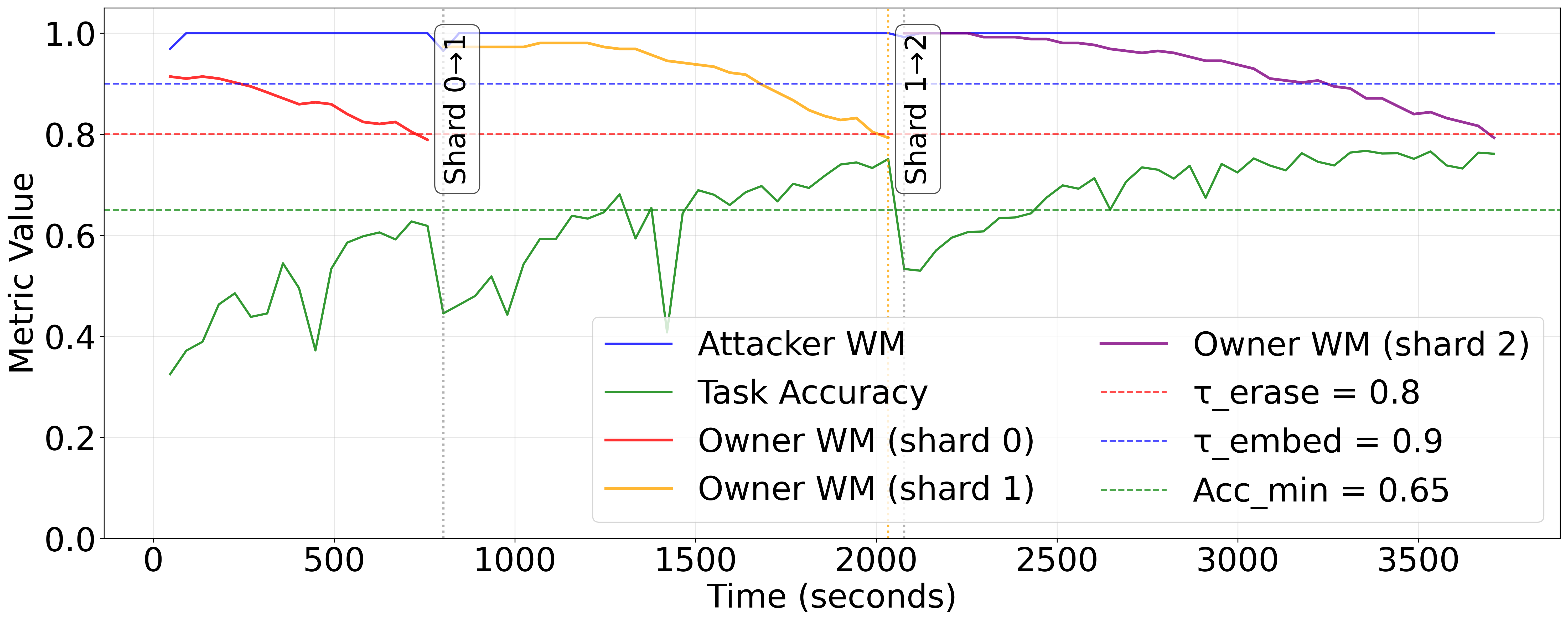}
    \end{minipage}
    \label{efficientnet_cifar10_attack}
    }

    \subfigure[\textbf{Honest Training - } ResNet18\_CIFAR100\_256]{
    \begin{minipage}[t]{0.48\textwidth}
    \centering
    \includegraphics[width=3.3in]{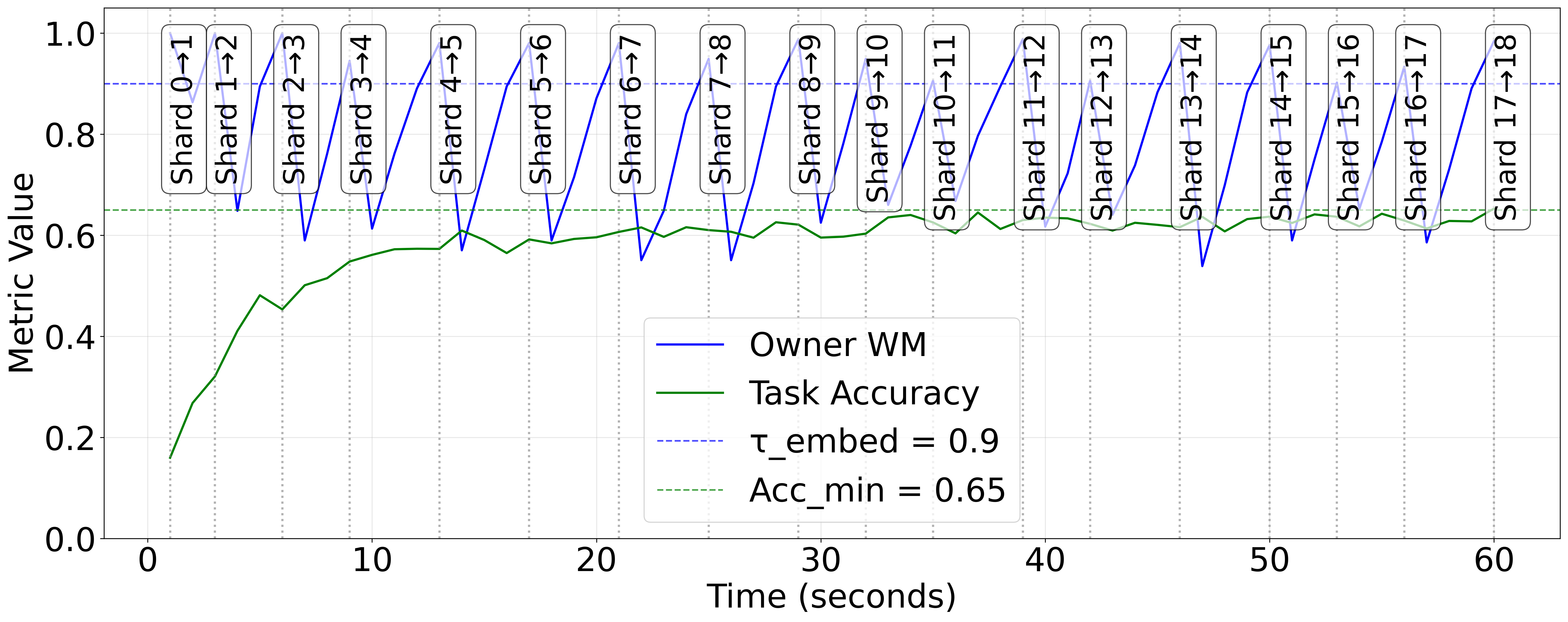}
    \end{minipage}
    \label{resnet18_cifar100_honest}
    }
    \subfigure[\textbf{(OFA) Multi-shard Takeover} - ResNet18\_CIFAR100\_256]{
    \begin{minipage}[t]{0.48\textwidth}
    \centering
    \includegraphics[width=3.3in]{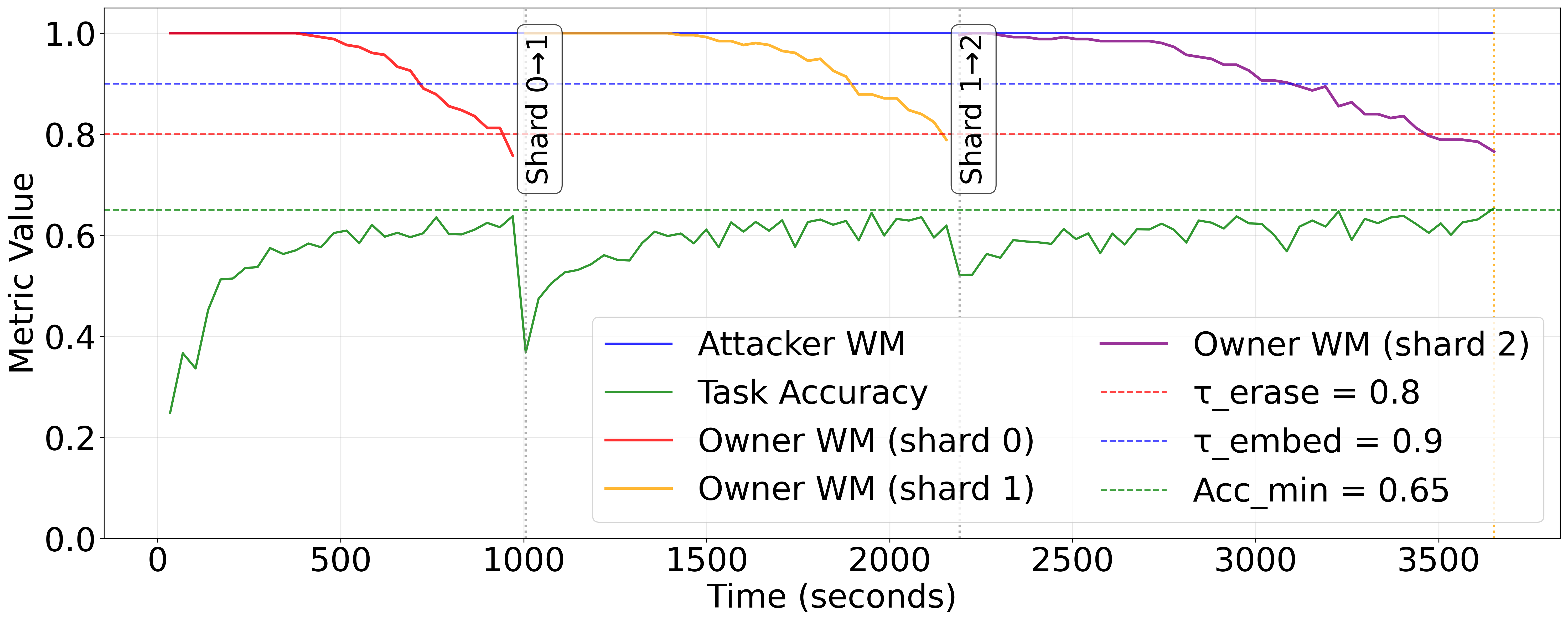}
    \end{minipage}
    \label{resnet18_cifar100_attack}
    }
\vspace{0.05in}
\caption{
Honest training showing owner watermark embedding and accuracy progression across shards. Multi-shard attack showing sequential takeover attempts with per-shard owner watermark erasure, attacker watermark embedding, and maintained accuracy.
}
\label{attack-observation}
\vspace{0.1in}
\end{figure*}

\smallskip 
\noindent\textbf{Proof verification.} Algorithm~\ref{algo:polo_verification} gives \textsc{PoLO}’s verification procedure, which jointly validates training effort and ownership by checking both the chained watermarks and the model accuracy $Acc_{main}$. The verifier $\mathcal{V}$ works at the shard level: it chooses a starting shard $i \in {1,\dots,S-1}$ from the submitted proof $\mathbb{P}$, which contains the obfuscation-protected checkpoints ${W_1,\dots,W_S}$ and the information needed to reconstruct the per–shard watermarks $\Lambda_x$ and keys $k_x$.

\begin{packeditemize}
    
\item\textbf{Main task accuracy validation.} The model $W_x$ is evaluated on the public test dataset $\mathbb{D}^t$ to measure the main task accuracy $\hat{{Acc}}_{{main}}$. If the $\hat{{Acc}}_{{main}}$ does not meet the expectation, the verification fails, and the proof is rejected.

\item\textbf{Watermark consistency verification. }
$\mathcal{V}$ reconstructs the expected watermark $\Lambda_x$ and key $k_x$ from the previous checkpoint $W_{x-1}$ using the hash function $\mathbb{H}(\cdot)$, then extracts the current watermark $\hat{\Lambda}_x$ from $W_x$ using $k_x$ and the selection matrix $Y$. It computes the similarity score $\eta_x$ via Hamming similarity; if $\eta_x < \eta_G$, verification fails.
The verifier proceeds backward from shard $S{-}1$ to $1$, with each watermark cryptographically linked to the previous one. This chaining prevents partial forgeries or selective tampering. If all checks pass, verification returns “\textbf{Success},” ensuring \textsc{PoLO}’s integrity and efficiency.

\end{packeditemize}

\subsection{Multi-shard Takeover: OFA Analysis}
\label{attack-curve}

To further examine the practicality of ownership forgery under \textsc{PoLO}, we provide a supplementary analysis of a shard-wise instantiation of the OFA attack.
Unlike conventional single-shot fine-tuning attack that targets only the final model, the OFA attack is a multi-shard takeover that applies fine-tuning sequentially across shard checkpoints $\{W_x\}$, aiming to progressively erase the owner’s chained watermarks $\{\Lambda_x\}$ and embed attacker-controlled watermarks while preserving main-task accuracy.
This setting directly stress-tests whether OFA remains viable once ownership claims are cryptographically bound to the full training trajectory.

We consider an attacker who has access to all obfuscation-protected checkpoint models $W_x$ but does not possess the legitimate watermark keys and embedding position information $(k_x, Y)$.
At each shard, the attacker fine-tunes the model to jointly suppress the owner’s watermark and inject a forged watermark $\Lambda_x'$.
The attack objective is formalized as the following optimization problem:
\begin{equation}
\label{eq:attack_loss}
l_{\text{ofa}} = l_{w} + \lambda_{\Lambda'} \, l_{\Lambda'} + \lambda_{\text{blind}} \, l_{\text{pert}},
\end{equation} 
where $l_w$ denotes the main task loss, $l_{\Lambda'}$ enforces the attacker’s target watermark, and $l_{\text{pert}}$ penalizes excessive deviation from the shard-initial parameters to maintain model fidelity.
Specifically,
\begin{equation}
\label{eq:attack_loss_p}
l_{\text{pert}} = \sum_{w \in W_x'} \mathbb{E}\!\left[ \left\| w - w^{(0)} \right\|_2^2 \right],
\end{equation}
where $w^{(0)}$ denotes the checkpoint parameters at the beginning of the shard $W_x$, while $W_x'$ represents all trainable weights in the attack process.
Here, we set $\lambda_{\text{blind}} = 0.1$, following standard OFA practice to balance watermark suppression and task accuracy preservation.

Fig.\ref{attack-observation} visualizes both honest training and multi-shard takeover OFA on EfficientNet with CIFAR-10 and ResNet18 with CIFAR-100.
Under honest training, the owner’s watermark detection rate $\eta$ exhibits a clear shard-wise embedding pattern while the main-task accuracy increases steadily.
In contrast, during a multi-shard takeover, the attacker must repeatedly fine-tune each shard checkpoint for extended durations in order to reduce the legitimate watermark below the erasure threshold and embed a forged one.
Because multi-shard takeover incurs substantially higher wall-clock cost than honest training, we only snapshot the first three shards in the attack plots for clarity.

More importantly, the attack fails to achieve a favorable trade-off between watermark erasure and task fidelity.
As shown in Figs.\ref{efficientnet_cifar10_attack} and \ref{resnet18_cifar100_attack}, suppressing the owner’s legitimate watermark inevitably requires aggressive fine-tuning that either degrades the main-task accuracy or demands prolonged optimization comparable to, or exceeding, honest shard training.
Across all settings, attempts to maintain main task accuracy above the threshold leave detectable remnants of the owner’s watermark, while successful erasure of the watermark leads to noticeable accuracy degradation that violates the verifier’s acceptance criteria.

These observations reinforce the core security intuition of \textsc{PoLO}.
Because each shard watermark $\Lambda_x$ is deterministically derived from the preceding checkpoint, multi-shard takeover forces the attacker to reconstruct the training process shard by shard.
Any attempt to shortcut this process either fails watermark verification or violates the main task performance constraints.
Consequently, multi-shard OFA becomes economically irrational in the EMM setting, requiring cost comparable to or exceeding honest training while offering no reliable path to successful ownership forgery.



\begin{table*}[!ht]
\centering
\caption{Visual comparison of reconstructed samples under different Gaussian noise levels ($\sigma$) for CIFAR-10 and CIFAR-100.}
\label{table-cluster}
\renewcommand{\arraystretch}{1.2} 
\setlength{\tabcolsep}{8pt} 
\resizebox{\textwidth}{!}{ 
\begin{threeparttable}
{\Large 
\begin{tabular}{ccc}
\toprule
       & \textbf{CIFAR-10} & \textbf{CIFAR-100} \\ 
\midrule
\textbf{Target} & \includegraphics[width=\linewidth]{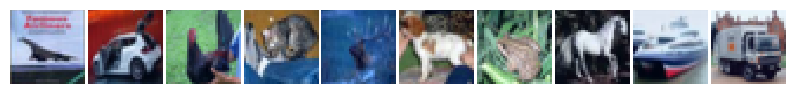}     
       & \includegraphics[width=\linewidth]{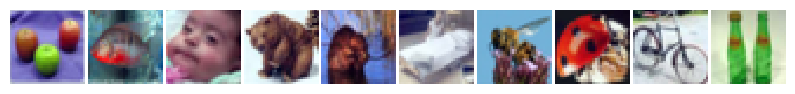} \\ 
\midrule
\textbf{Baseline} & \includegraphics[width=\linewidth]{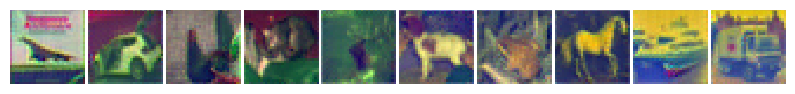}     
          & \includegraphics[width=\linewidth]{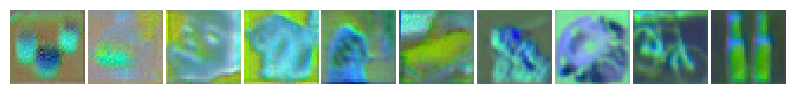} \\ 
\midrule
\textbf{$\sigma$=1} & \includegraphics[width=\linewidth]{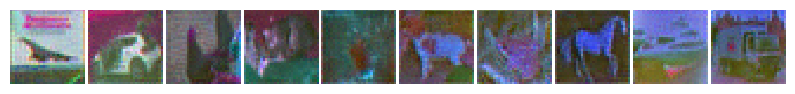}     
           & \includegraphics[width=\linewidth]{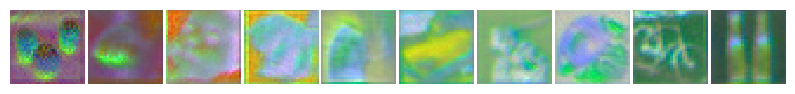} \\ 
\midrule
\textbf{$\sigma$=4} & \includegraphics[width=\linewidth]{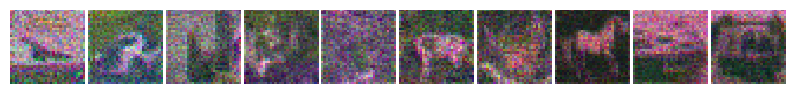}     
           & \includegraphics[width=\linewidth]{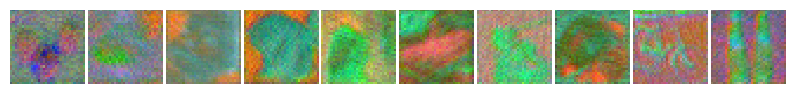} \\ 
\midrule
\textbf{$\sigma$=8} & \includegraphics[width=\linewidth]{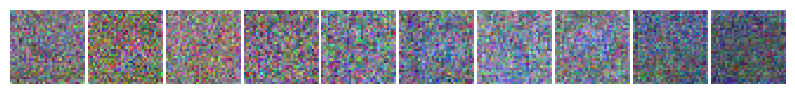}     
           & \includegraphics[width=\linewidth]{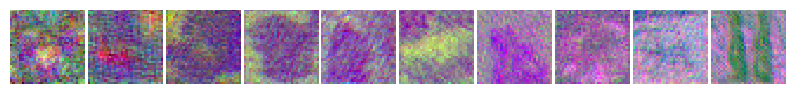} \\ 
\bottomrule
\end{tabular}
}

\begin{tablenotes}
{\large
    \item $\bullet$ Reconstructions from unprotected checkpoints exhibit relatively high similarity with the private data ($SSIM$ = 0.49 on CIFAR-10 and 0.25 on CIFAR-100). When Gaussian noise is injected at release time, reconstruction quality degrades substantially: on CIFAR-10, $SSIM$ decreases to 0.39, 0.26, and 0.06; on CIFAR-100, it decreases to 0.18, 0.14, and 0.09 for $\sigma$=1,4,8, respectively.
}
\end{tablenotes}
\end{threeparttable}
}
\end{table*}


\subsection{Data Inference Attackers}
\label{data_inference_attack}

To demonstrate that Gaussian noise injection can effectively protect model weights against reconstruction attempts, we have evaluated a white-box, RecoNN-based reconstruction membership inference attack (MIA). 
In this setting, the adversary is assumed to know the training pipeline and architecture, has access to the released checkpoint, and holds a reference dataset from the same distribution, but cannot directly observe the private training data. The adversary’s goal is to reconstruct the private training samples from the released checkpoint. 
Following a shadow-model \& reconstructor approach, shadow models are first trained on the reference dataset, and a lightweight convolutional decoder is then trained to map model weights and outputs to image reconstructions. At test time, the released checkpoint is fed into the reconstructor to generate guesses of the private training samples.

Structural Similarity Index ($SSIM$) is adopted to quantify the similarity between reconstructed and original images. 
As shown in Tab.\ref{table-cluster}, visual comparisons further confirm that increasing the noise strength renders reconstructed samples almost indistinguishable from random noise. These results demonstrate that Gaussian perturbation of released checkpoints provides an effective defense against data reconstruction MIAs.

\subsection{Representative Replay-PoL Attack}
\label{appendix_reply-pol_attack}

Tab.~\ref{tab:pol-attack-applicability} shows, in one sentence, that the classic attacks against replay-based PoL are no longer applicable once verification is changed from gradient/trajectory replay to shard receipt + chained watermark extraction. Concretely, earlier attacks rely on verifier behaviors such as replaying training trajectories, checking L2 consistency between checkpoints, spot-checking only a subset of training steps, or coupling verification with dataset evidence; \textsc{PoLO} does none of these. Instead, it verifies whether the current shard checkpoint contains the correct watermark derived from the previous shard and whether the model still meets the task-accuracy requirement. So the point of Tab.~\ref{tab:pol-attack-applicability} is not that POLO experimentally ``outperforms'' those attacks, but that it removes their required attack surface by design: if the verifier no longer replays trajectories, then adversarial trajectory, checkpoint interpolation, step-subset abuse, and data-forging style replay attacks simply have no protocol entry point in \textsc{PoLO}.

\subsection{Configuration of Shard}
\label{appendix_config_shard}

To examine how learning rate $LR$ and watermark detection threshold $\eta_G$ influence shard formation, we conducted experiments on two benchmark datasets: CIFAR-10 with ResNet18, and AG News with TextCNN. In particular, we varied $LR$ and $\eta_G$, and measured their impact on shard count, shard rate 
$|s|$, and the main task accuracy $Acc_{main}$.

The results are reported in Tab.~\ref{tab:merged_lr_eta}. We observe that increasing the $LR$ consistently raises the $|s|$; for instance, on CIFAR-10, $|s|$ grows from 0.36 at $LR$=0.01 to 0.78 at 
$LR$=0.2, and on AG News, from 0.63 at $LR$=0.0005 to 0.89 at $LR$=0.01. In contrast, increasing $\eta_G$ reduces $|s|$, as stricter requirements slow down watermark embedding and result in fewer validated shards. For example, on CIFAR-10, $|s|$ drops from 0.86 at $\eta_G$=0.85 to 0.64 at $\eta_G$=0.99. Importantly, across all $LR$ and $\eta_G$ configurations, $Acc_{main}$ ($\approx$0.90–0.92), indicating that PoLO preserves task performance while allowing shard dynamics to be flexibly tuned.



\subsection{\textsc{PoLO} for LLM}
\label{appendix_PoLO_LLM}

PoLO is architecture- and stage-agnostic, and can be applied during alignment/fine-tuning, including PEFT (LoRA). It adds a parameter-level watermark loss to the usual objective,
$ L_{\text{total}} = L_{\text{task}} + \lambda_{\text{wm}}\, L_{\text{wm}} $.
Using a secret key, we index a small subset $S$ of trainable parameters in selected attention/FFN projections with LoRA; $S$ is taken from the adaptor matrices in those modules. For a secret bitstring $\{b_i\}$, we assign signed targets $t_i \in \{+\beta,\,-\beta\}$ 
and set $ L_{\text{wm}}=\frac{1}{|S|}\sum_{j\in S}\left\|\theta_j - t_{\pi(j)}\right\|_2^2 $,
where $\theta_j$ are the selected parameters and $\pi$ maps indices in $S$ to bits. Training incurs negligible additional overhead, and verification is checkpoint-only: read the same indices (adapter or merged weights), threshold signs $\operatorname{sgn}(\theta_j)$, compute the Hamming match-rate, and accept if it exceeds $\tau$.

To validate applicability of \textsc{PoLO} to LLMs, we conduct \emph{alignment-stage} experiments on \texttt{microsoft/DialoGPT-medium-345M}, \\
\texttt{EleutherAI/gpt-neo-1.3B}, and \texttt{LLaMA-3.1-8B}. All runs use a MacBook Pro (2024, Mac16,6) with an Apple M4 Max, 128 GB unified memory, 16 CPU cores (12 performance + 4 efficiency), and 40 GPU cores (Metal 3). For alignment, we draw $\sim$1,000 samples from \texttt{wikitext-2-raw-v1} and \texttt{allenai/wildchat}, so alignment completes in roughly 8 epochs on average.

In this setting, the alignment stage is treated as the \emph{honest training} performed by the model owner, into which \textsc{PoLO} is natively integrated for proof generation. This is conceptually different from the \emph{forgery} setting (OFA), where an attacker performs fine-tuning to remove the owner’s watermark and re-embed their own. The latter is evaluated separately in our attack experiments (Tab.~\ref{tab_gen-ver} and Fig.~\ref{attack}), which integrate \textsc{PoLO} into full training and naturally require many more epochs than this shorter alignment stage.

The results, reported in Tab.\ref{LLM}, show that the additional cost of proof generation introduced by PoLO is minimal: 4.31 and 3.67 minutes for DialoGPT-medium, and 0.53 and 0.38 minutes for GPT-Neo-1.3B. In contrast, verification time is dramatically reduced compared to the baseline PoL: for DialoGPT-medium, 0.26–0.28 minutes versus 14.56 minutes; for GPT-Neo-1.3B, 1.83–1.92 minutes versus 85.33 minutes. For LLaMA-3.1-8B, PoLO adds only about 1.2–2.8 minutes of proof generation (2.29/3.91 vs.\ 1.14 minutes) while cutting verification from roughly 77 minutes to under 1 minute (0.89/0.88 minutes). These findings confirm that PoLO can be effectively integrated into LLM alignment with negligible training overhead while offering substantial improvements in verification efficiency.
\begin{table}[t]
\centering
\caption{Comparison between PoLO and baseline LLM alignment/fine-tuning in terms of time consumption for proof generation (+training) and verification.}
\label{LLM}
\renewcommand{\arraystretch}{1.05}
\resizebox{\linewidth}{!}{%
\begin{tabular}{c|ccc|ccc}
\toprule
\multicolumn{1}{c}{\multirow{2}{*}{ \textbf{Model} }} &  \multirow{2}{*}{ \makecell{\textbf{Watermark} \\ \textbf{size} (bit)}} &  \multirow{2}{*}{ \textbf{Loss}} &  \multirow{2}{*}{ \makecell{\textbf{$|s|$} \\ (\%) } } & \multicolumn{3}{c}{\textbf{Time} (min)} \\

\multicolumn{1}{c}{} & & & & Train.  & Gener. & Verif. \\
\midrule
\multicolumn{7}{c}{\textbf{DialoGPT-medium-345M, wikitext-2-raw-v}} \\
\midrule
 GD-PoL &\cellcolor{yellow!15}  - & \cellcolor{yellow!15} 3.5893 & \cellcolor{yellow!15}  - &\cellcolor{yellow!15}  2.19 & \cellcolor{yellow!15}  0.03 & \cellcolor{yellow!15}  2.17 \\
 \multirow{2}{*}{PoLO}&  \cellcolor{blue!8}  512  & \cellcolor{blue!8} 3.5737 & \cellcolor{blue!8} 27.27 & \cellcolor{blue!8} 2.91  & \cellcolor{blue!8} 0.63 & \cellcolor{blue!8} 0.04 \\
 & \cellcolor{blue!8}  1024 &\cellcolor{blue!8} 3.5765 &\cellcolor{blue!8} 25.00 & \cellcolor{blue!8} 2.94 &\cellcolor{blue!8} 0.67 & \cellcolor{blue!8} 0.04 \\
 
 \midrule
 \multicolumn{7}{c}{\textbf{GPT-Neo-1.3B, allenai/wildchat}} \\
 \midrule
 
 GD-PoL & \cellcolor{yellow!15}  - & \cellcolor{yellow!15} 1.5017 & \cellcolor{yellow!15}  - & \cellcolor{yellow!15}  14.18 &  \cellcolor{yellow!15}  0.21 & \cellcolor{yellow!15}  14.77 \\
 \multirow{2}{*}{PoLO} & \cellcolor{blue!8}  512  & \cellcolor{blue!8}  1.5004 & \cellcolor{blue!8}  42.86 & \cellcolor{blue!8}  14.37  & \cellcolor{blue!8}  0.53 & \cellcolor{blue!8}  0.13 \\
 & \cellcolor{blue!8}  1024 & \cellcolor{blue!8}  1.5021 & \cellcolor{blue!8}  42.86 & \cellcolor{blue!8}  14.98  & \cellcolor{blue!8}  0.38 & \cellcolor{blue!8}  0.13 \\

\midrule
 \multicolumn{7}{c}{\textbf{LLaMA-3.1-8B, allenai/wildchat}} \\
 \midrule
 
 GD-PoL & \cellcolor{yellow!15}  - & \cellcolor{yellow!15} 1.2511 & \cellcolor{yellow!15}  - & \cellcolor{yellow!15}  76.85 &  \cellcolor{yellow!15}  1.14 & \cellcolor{yellow!15}  77.06 \\
 \multirow{2}{*}{PoLO} & \cellcolor{blue!8}  512  & \cellcolor{blue!8}  1.1951 & \cellcolor{blue!8}  37.50 & \cellcolor{blue!8}  81.18  & \cellcolor{blue!8}  2.29 & \cellcolor{blue!8}  0.89 \\
 & \cellcolor{blue!8}  1024 & \cellcolor{blue!8}  1.2273 & \cellcolor{blue!8}  33.33 & \cellcolor{blue!8}  85.30  & \cellcolor{blue!8}  3.91 & \cellcolor{blue!8}  0.88 \\
 
\bottomrule
\end{tabular}
}
\end{table}


\subsection{Optional: Differential Privacy (DP) Strengthening for \textsc{PoLO}}
\label{appendix_DP}

By default, \textsc{PoLO} obfuscates released checkpoints with Gaussian noise to safeguard the published model parameters.
\textbf{Appendix}~\ref{data_inference_attack} shows that this mechanism effectively thwarts MIA that attempt to reconstruct training data from released weights. 

If deployments opt for high privacy level guarantees, we provide an optional variant that delivers formal $(\varepsilon,\delta)$-DP at release time.
In this \textsc{PoLO} variant, we globally clip checkpoints and then add a parameter-level Gaussian mechanism before release. Formally, for shard $s$,
\begin{equation}
\label{eq:release-map}
\theta_s(D)\;=\;\theta_{s-1}\;+\;\mathrm{Clip}\big(u_s(D),\,G_s\big),
\end{equation}
where $\theta_s$ and $\theta_{s-1}$ is the final parameters in shard $s$ and shard $s-1$ respectively, and $u_s(D)$ is the total parameter update accumulated within shard $s$ on dataset $D$, and
\begin{equation}
\label{eq:clip}
\mathrm{Clip}(u, G_s)\;=\;u\cdot \min\!\left(1,\,\frac{G_s}{\|u\|_2}\right).
\end{equation}

\begin{lem}[Global $\ell_2$-sensitivity]
\label{lem_global_sensitivity}
For any neighboring datasets $D\sim D'$ (differing in one sample), the global $\ell_2$-sensitivity $\Delta_2$ for $\theta_s$ is
\begin{equation}
\Delta_2(\theta_s)\;:=\;\sup_{D\sim D'}\big\|\theta_s(D)-\theta_s(D')\big\|_2
\;\le\; 2G_s .
\end{equation}
\end{lem}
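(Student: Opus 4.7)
The plan is to reduce the claim to a one-line triangle-inequality argument on the clipped per-shard update, using the fact that the only data-dependent quantity entering the release map at shard $s$ is $u_s(D)$, and that $\mathrm{Clip}(\cdot,G_s)$ is a norm-bounded operator. First, I would substitute the release map \eqref{eq:release-map} into the difference $\theta_s(D)-\theta_s(D')$. Because $\theta_{s-1}$ is the previously released checkpoint used as the deterministic initialization of shard $s$, it is independent of whether the differing sample lies in $D$ or $D'$ \emph{at this shard}, and therefore cancels:
\begin{equation*}
\theta_s(D)-\theta_s(D') \;=\; \mathrm{Clip}\!\big(u_s(D),G_s\big)\;-\;\mathrm{Clip}\!\big(u_s(D'),G_s\big).
\end{equation*}

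Second, I would invoke the defining property of the clipping operator in \eqref{eq:clip}, namely $\|\mathrm{Clip}(u,G_s)\|_2 \le G_s$ for every $u$, and apply the triangle inequality:
\begin{equation*}
\bigl\|\theta_s(D)-\theta_s(D')\bigr\|_2 \;\le\; \bigl\|\mathrm{Clip}(u_s(D),G_s)\bigr\|_2 + \bigl\|\mathrm{Clip}(u_s(D'),G_s)\bigr\|_2 \;\le\; 2G_s.
\end{equation*}
Taking the supremum over neighboring pairs $D\sim D'$ yields $\Delta_2(\theta_s)\le 2G_s$, as claimed. Neither step requires any assumption on the optimizer producing $u_s$, which is desirable because \textsc{PoLO}'s update accumulates both the task-loss gradients and the watermark-regularizer gradients within a shard.

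The main obstacle is justifying the first step rigorously, since $\theta_{s-1}$ is itself a function of the data from earlier shards and could in principle differ between $D$ and $D'$. I would address this by adopting the standard per-release sensitivity framing: the mechanism at shard $s$ takes the previously released (already noised) $\theta_{s-1}$ as an input, so sensitivity is computed conditional on that release, and end-to-end DP across shards follows from sequential composition together with post-processing. An equivalent, and arguably cleaner, framing is to consider neighboring datasets that differ only in a sample used during shard $s$; for samples used exclusively in earlier shards, post-processing of the previously published checkpoint already controls their contribution. Under either convention the cancellation is clean, and the remaining triangle-inequality calculation is essentially mechanical.
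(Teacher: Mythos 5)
Your proof is correct and follows essentially the same route as the paper's: substitute the release map so that $\theta_{s-1}$ cancels, then bound each clipped update by $G_s$ and apply the triangle inequality. Your extra discussion of why $\theta_{s-1}$ can be treated as fixed matches the paper's own remark that the bound holds for any fixed $\theta_{s-1}$ and composes adaptively across shards.
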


\begin{proof}
By \eqref{eq:release-map},
\begin{equation}
\big\|\theta_s(D)\!-\!\theta_s(D')\big\|_2 \!=\!\big\|\mathrm{Clip}(u_s(D),G_s)\!-\!\mathrm{Clip}(u_s(D'),G_s)\big\|_2.
\end{equation}
The global clip guarantees $\|\mathrm{Clip}(u,G_s)\|_2\le G_s$ for all $u$; hence, by the triangle inequality,
\begin{equation}
\begin{aligned}
    & \big\|\mathrm{Clip}(u_s(D),G_s)-\mathrm{Clip}(u_s(D'),G_s)\big\|_2 \\
    & \le \|\mathrm{Clip}(u_s(D),G_s)\|_2+\|\mathrm{Clip}(u_s(D'),G_s)\|_2
\le 2G_s.
\end{aligned}
\end{equation}
\end{proof}
The bound in Lemma 1 holds for any fixed $\theta_{s-1}$, so it applies in the usual adaptive composition setting.

\begin{thm}[DP via the Gaussian mechanism]
\label{thm:dp}
Let $\xi_s\sim\mathcal{N}(0,\sigma_s^2 I)$ and release $\mathcal{M}_s(D)=\theta_s(D)+\xi_s$, where $\mathcal{M}(D)$ denotes the Gaussian mechanism applied to dataset $D$ (i.e., the released checkpoint obtained by adding noise $\xi$ to the model parameters $\theta(D)$).
If $ \sigma_s\;\ge\;\frac{2G_s\sqrt{2\ln(1.25/\delta_s)}}{\varepsilon_s}$,
then $\mathcal{M}_s$ satisfies $(\varepsilon_s,\delta_s)$-DP. Releasing $\{\mathcal{M}_s\}_{s=1}^S$ composes by standard DP composition (basic or advanced), yielding a global $(\varepsilon,\delta)$ budget.
For each s, the mechanism $\mathcal{M}_s$ is $(\varepsilon_s,\delta_s)$-DP conditional on all previous outputs $\mathcal{M}_1(D),\dots,\mathcal{M}_{s-1}(D)$; thus, by standard adaptive composition the sequence $\{\mathcal{M}_s\}_{s=1}^S$ is $(\sum_s\varepsilon_s,\sum_s\delta_s)$-DP (or tighter under advanced composition).
\end{thm}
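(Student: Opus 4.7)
The plan is to reduce the theorem to two standard results in differential privacy: the classical Gaussian mechanism and adaptive composition. The per-shard claim follows directly from combining Lemma~\ref{lem_global_sensitivity} with the Gaussian mechanism, while the composition claim follows by treating each release as an adaptive query whose per-shard sensitivity is already controlled.

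First, for a single shard $s$, I would invoke Lemma~\ref{lem_global_sensitivity} to bound the global $\ell_2$-sensitivity of the release map $\theta_s(\cdot)$ by $2G_s$. Plugging this into the textbook Gaussian mechanism guarantee (Dwork and Roth, Appendix A) immediately gives that $\mathcal{M}_s(D)=\theta_s(D)+\xi_s$ with $\xi_s\sim\mathcal{N}(0,\sigma_s^2 I)$ is $(\varepsilon_s,\delta_s)$-DP whenever $\sigma_s \geq \frac{\Delta_2\sqrt{2\ln(1.25/\delta_s)}}{\varepsilon_s}$. Substituting the bound $\Delta_2\le 2G_s$ from Lemma~\ref{lem_global_sensitivity} yields exactly the noise-scale condition stated in the theorem, and monotonicity of the Gaussian mechanism in $\sigma$ handles the ``$\ge$'' case.

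For the composition claim, I would argue by induction on $s=1,\ldots,S$ that the joint release $(\mathcal{M}_1(D),\ldots,\mathcal{M}_s(D))$ is DP with the stated aggregate parameters. The key observation is that $\theta_{s-1}$ is determined by the previously released noisy checkpoints (via post-processing of the chain), so conditional on $\mathcal{M}_1(D),\ldots,\mathcal{M}_{s-1}(D)$ the anchor $\theta_{s-1}$ is fixed. Because the bound in Lemma~\ref{lem_global_sensitivity} holds uniformly for \emph{any} fixed $\theta_{s-1}$, the conditional mechanism $\mathcal{M}_s$ remains $(\varepsilon_s,\delta_s)$-DP on neighboring datasets. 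The standard adaptive (sequential) composition theorem then gives a global guarantee of $\bigl(\sum_s\varepsilon_s,\sum_s\delta_s\bigr)$-DP, and the advanced composition theorem sharpens this to $\bigl(\sqrt{2S\ln(1/\delta')}\,\bar\varepsilon + S\bar\varepsilon(e^{\bar\varepsilon}-1),\,S\bar\delta+\delta'\bigr)$ when the per-shard parameters are uniform.

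The main obstacle I anticipate is making the adaptive dependence on prior releases fully rigorous, because $\theta_{s-1}$ is itself a function of $D$ through the previous update $u_{s-1}(D)$. The resolution is to note that, within shard $s$, the update $u_s(D)$ is computed from $D$ treating $\theta_{s-1}$ as a fixed initialization, so the sensitivity bound $2G_s$ does not depend on how $\theta_{s-1}$ was produced; together with the post-processing invariance of DP, this justifies the use of adaptive composition without any additional privacy loss beyond the per-shard accounting. A secondary subtlety is the correct notion of neighboring datasets (adding or removing one sample versus replacing): the factor of $2$ in Lemma~\ref{lem_global_sensitivity} corresponds to the replace-one notion, and I would make this convention explicit at the start of the proof so that the sensitivity bound is tight under the chosen neighboring relation.
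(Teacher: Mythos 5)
Your proposal is correct and follows essentially the same route as the paper: both reduce the per-shard claim to the classical Dwork--Roth Gaussian-mechanism calibration combined with the sensitivity bound $\Delta_2 \le 2G_s$ from Lemma~\ref{lem_global_sensitivity}, and both handle the chain by noting that the sensitivity bound holds for any fixed $\theta_{s-1}$ so that standard adaptive composition applies. The only difference is presentational --- the paper unfolds the privacy-loss random variable and Gaussian tail bound explicitly before citing the classical calibration, whereas you invoke the textbook guarantee directly; your extra care about the replace-one neighboring convention and the conditional fixing of $\theta_{s-1}$ matches the paper's own remarks.
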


\begin{table*}[t]
\centering
\caption{Analytical applicability of representative replay-PoL attacks under \textsc{PoLO}'s verifier surface.}
\label{tab:pol-attack-applicability}
\renewcommand{\arraystretch}{1.05}
\resizebox{\textwidth}{!}{
\begin{threeparttable}
\begin{tabular}{l|l|l|l|l}
\midrule
\multicolumn{1}{c}{\cellcolor{yellow!15}\textbf{PoL Attack}} &
\multicolumn{1}{c}{\cellcolor{yellow!15}\textbf{Core Exploit}} &
\multicolumn{1}{c}{\cellcolor{yellow!15}\textbf{Required by PoL Verifier}} &
\multicolumn{1}{c}{\cellcolor{yellow!15}\textbf{Required by \textsc{PoLO} Verifier?}} &
\multicolumn{1}{c}{\cellcolor{yellow!15}\textbf{Applicable in \textsc{PoLO}?}} \\
\midrule

\makecell[l]{Zhang et al. (S\&P'22)~\cite{pol_attack1}:\\ Adversarial trajectory} &
\makecell[l]{$\epsilon$-tolerance abuse\\ in gradient replay} &
\makecell[l]{Dataset +\\ trajectory replay} &
Neither &
\makecell[l]{\textbf{No} -- no trajectory\\ replay surface to forge} \\

\cmidrule{1-5}

\makecell[l]{Fang et al. (EuroS\&P'23)~\cite{pol_attack2}:\\ Checkpoint interpolation} &
\makecell[l]{Interpolated checkpoints\\ pass L2 trajectory tests} &
\makecell[l]{Trajectory L2\\ distance checks} &
\makecell[l]{Receipt + watermark\\ chain consistency} &
\makecell[l]{\textbf{No} -- interpolated checkpoints\\ lack valid chained watermark} \\

\makecell[l]{ Step-subset verification abuse} &
\makecell[l]{Sparse verifier\\ replay spot-checks} &
\makecell[l]{Step-level replay\\ sampling protocol} &
\makecell[l]{Shard-level extraction\\ + receipt checks} &
\makecell[l]{\textbf{No} -- no step-sampling\\ replay protocol to exploit} \\

\cmidrule{1-5}

\makecell[l]{Thudi et al. (USENIX'22)~\cite{pol_forge}:\\ Data forging} &
\makecell[l]{Mini-batch replacement\\ in replay evidence} &
\makecell[l]{Dataset included\\ in proof workflow} &
\makecell[l]{Dataset decoupled\\ from replay verification} &
\makecell[l]{\textbf{No} -- verifier does not\\ replay dataset-coupled steps} \\

\cmidrule{1-5}
\end{tabular}
\begin{tablenotes}
\small
\item This is an analytical attack-surface comparison (not a new experiment): \textsc{PoLO} replaces replay-gradient verification with shard-level receipt and watermark-chain verification, removing the exploited verifier requirements in these replay-PoL attacks.
\end{tablenotes}
\end{threeparttable}
}
\end{table*}

\begin{proof}
Let $D\sim D'$ be neighboring datasets. 
Consider the Gaussian mechanism $\mathcal{M}_s(D)=\theta_s(D)+\xi_s$ with $\xi_s\sim\mathcal{N}(0,\sigma_s^2 I)$. For any measurable set $S$,
\begin{equation}
\begin{aligned}
    &\Pr[\mathcal{M}_s(D)\in S]=\int_S \phi_\sigma(x-\theta_s(D))\,dx,\qquad \\
    &\Pr[\mathcal{M}_s(D')\in S]=\int_S \phi_\sigma(x-\theta_s(D'))\,dx,
\end{aligned}
\end{equation}
where $\phi_\sigma$ denotes the Gaussian density with covariance $\sigma_s^2 I$. For any $x\in\mathbb{R}^d$, the likelihood ratio is
\begin{equation}
\begin{aligned}
    & \frac{\phi_\sigma(x-\theta_s(D))}{\phi_\sigma(x-\theta_s(D'))}= \\
    &\exp\!\left(\frac{\langle x\!-\!\theta_s(D'),\,\theta_s(D)\!-\!\theta_s(D')\rangle}{\sigma_s^2}
\!-\!\frac{\|\theta_s(D)\!-\!\theta_s(D')\|_2^2}{2\sigma_s^2}\right).
\end{aligned}
\end{equation}
Let $\Delta=\|\theta_s(D)-\theta_s(D')\|_2$ and $v=(\theta_s(D)-\theta_s(D'))/\Delta$. Under $x\sim \mathcal{N}(\theta_s(D'),\sigma_s^2 I)$, the scalar
\begin{equation}
Z \;:=\; \frac{\langle x-\theta_s(D'),\,v\rangle}{\sigma_s} \;\sim\; \mathcal{N}(0,1).
\end{equation}
Hence, the privacy loss random variable $L$ satisfies
\begin{equation}
L \;=\; \log\frac{\phi_\sigma(x-\theta_s(D))}{\phi_\sigma(x-\theta_s(D'))}
\;=\; \frac{\Delta}{\sigma_s} Z \;-\; \frac{\Delta^2}{2\sigma_s^2}.
\end{equation}
For any $\varepsilon_s>0$,
\begin{equation}
\begin{aligned}
    \Pr[L>\varepsilon_s]
    &=\Pr\!\left[ Z \;>\; \frac{\varepsilon_s}{\Delta/\sigma_s} + \frac{\Delta}{2\sigma_s} \right] \\
    &\;\le\; \exp\!\left(-\frac{1}{2}\left(\frac{\varepsilon_s}{\Delta/\sigma_s} + \frac{\Delta}{2\sigma_s}\right)^2\right),
\end{aligned}
\end{equation}
using the standard Gaussian tail bound. The classical calibration for the Gaussian mechanism~\cite{dwork2014algorithmic} states that choosing
\begin{equation}
\sigma_s \;\ge\; \frac{\sqrt{2\ln(1.25/\delta_s)}\;\Delta}{\varepsilon_s}
\label{eq:gauss-cal}
\end{equation}
ensures $\Pr[L>\varepsilon_s]\le \delta_s$ and thus satisfies $(\varepsilon_s,\delta_s)$-DP. 

By monotonicity in $\Delta$ and Lemma~\ref{lem_global_sensitivity}, we can take $\Delta \le \Delta_2(\theta_s)\le 2G_s$ to obtain the sufficient condition
\begin{equation}
\sigma_s \;\ge\;  \frac{2G_s\sqrt{2\ln(1.25/\delta_s)}}{\varepsilon_s}.
\end{equation}

Therefore, $\mathcal{M}_s$ is $(\varepsilon_s,\delta_s)$-DP. Standard (basic/advanced) composition then yields overall DP across shards.
\end{proof}

By clipping the per–shard update and adding Gaussian noise, we obtain a $(\varepsilon,\delta)$-DP variant of \textsc{PoLO}. This bound is independent of the number of epochs/iterations inside the shard and is only applied to the non-watermark block while keeping the watermark block data-independent. 
In the DP variant, the watermark parameters are generated and embedded using only public randomness and previous DP releases, and are not updated via gradients computed from private training data. So the watermark block is a post‑processing of DP outputs and does not affect the $(\varepsilon,\delta)$ guarantee.


\subsection{Security Q\&A in \textsc{PoLO}}
\label{appendix_security_privacy}


We address common threat scenarios against \textsc{PoLO} and explain how each is mitigated by our design.

\vspace{3pt}
\textcolor{violet}{\textbf{\ding{172} External malicious attackers.}}
\vspace{3pt}

\textit{Q1: What if an attacker fine-tunes the final released model and claims it as theirs? (OFA strategy)}  

\textit{A1:} \textsc{PoLO} embeds watermarks across chained shards, so a valid claim requires proving ownership of all preceding shards. Since the attacker does not know the embedding positions, they must fine-tune across many layers to both insert a forged watermark and erase the original. This process is as costly or even more costly than legitimate training, making the attack economically impractical.

\vspace{3pt}
\textit{Q2: What if an attacker blindly inserts watermark into model and claims ownership? (WMA strategy)}  

\textit{A2:} Randomly embedding a watermark without preserving model integrity can degrade task performance, causing the model to fail accuracy checks during verification.

\vspace{3pt}
\textit{Q3: What if the attacker tries to guess the embedding key $k_x$ that aligns with the watermark $\Lambda_x$ in shard $s_x$ to forge a \textsc{PoLO} proof?}  

\textit{A3:} Both the watermark and embedding key are deterministically derived from a cryptographic hash over the previous shard, identity information, and a verifier-provided secret nonce $\mu$. This nonce is unknown to the attacker during training and only revealed at verification. Without knowing $\mu$, the attacker cannot precompute the correct key or watermark, rendering attacks ineffective.

\vspace{3pt}
\textit{Q4: What if the attacker just wants to remove the watermark without embedding their own?}  

\textit{A4:}
In our EMM setting, only models with a valid, registered ownership watermark (incumbent or approved adjunct) are treated as valuable assets. A watermark-free model is automatically illegitimate: it cannot be listed, licensed, or rewarded, even if its accuracy is high. Thus, if an attacker clones a legitimate model and only strips its watermark, they merely destroy the model’s monetisable status while the original watermark-protected model remains valid. Since this yields no economic gain, pure removal attacks are classified as irrational, destruction-only behaviour and are excluded from our security target; we focus instead on rational forgery, where the attacker must both erase the owner’s watermark and successfully re-embed their own.

\smallskip
\textit{Q5: Can an attacker infer training data from public checkpoints using membership inference attacks?}  

\textit{A5:} For mitigation, PoLO applies Gaussian noise to each shard’s final model by perturbing a randomised subset of non-watermarked weights, preserving both watermark integrity and model performance while hindering data reconstruction.

\smallskip
\textit{Q6: What if an attacker gains access to the embedding key $k_S$ and selection matrix $Y$ for the final (released) model?}  

\textit{A6:} Even with full knowledge of $k_S$ and $Y$, an attacker can at most forge ownership of the final checkpoint, but cannot recover or claim earlier watermarks, as each shard overwrites fixed positions. This blocks proof of training effort, which is cryptographically chained across shards. This design allows \textsc{PoLO} to treat effort proof as auxiliary ownership evidence and supports clean ownership transfer by appending a new shard with a fresh watermark.

\vspace{3pt}
\textcolor{violet}{\textbf{\ding{173} Rational (self-interested) provers.}
}
\vspace{3pt}

\textit{Q7: What if a prover tries to forge $k_x$ to finalize a shard without performing actual training, inflating the number of claimed shards for higher rewards?}  

\textit{A7:} Since $k_x$ is deterministically derived from a hash involving a verifier-provided secret nonce $\mu$ (alongside previous weights and identity), the prover cannot arbitrarily select $k_x$ without knowing or controlling $\mu$. This linkage ensures that a valid watermark can only be embedded through legitimate training progression.

\vspace{3pt}
\textit{Q8: What if the prover embeds the watermark in an “inactive” layer that has little impact on performance, allowing extraction without meaningful training?}

\textit{A8:} To prevent this, provers are not allowed to choose embedding positions or secret nonces arbitrarily. Instead, the verifier, assumed to be honest but curious, generates the nonce during registration, and it is verifiably included in the hash computation. This setup mirrors a regulated environment, like obtaining a business license before training, ensuring all work is tied to an approved watermarking path.

\vspace{3pt}
\textcolor{violet}{\ding{174} \textbf{Honest-but-curious verifier.}}
\vspace{3pt}

\textit{Q9: Can a verifier infer sensitive training data from the checkpoints provided during verification?}  

\textit{A9:} No. The final model of each shard is protected via obfuscation, which adds calibrated noise to non-watermarked weights. This ensures that even an honest-but-curious verifier is hard to execute membership inference attacks or extract sensitive training data.

\vspace{3pt}
\textcolor{violet}{\textbf{\ding{175} Watermark types.}}
\vspace{3pt}

\textit{Q10: Why \textsc{PoLO} uses embedded watermarks instead of backdoor-based watermarks?}  

\textit{A10:} \textsc{PoLO} uses embedded watermarking directly in model weights rather than backdoor triggers, ensuring that watermark detection reflects actual training effort. Backdoor-based schemes allow attackers to implant triggers into pre-trained models without meaningful training, enabling cheap forgery and breaking PoL guarantees. In contrast, embedded watermarking requires real weight updates to produce detectable signals, tying watermark validity to genuine training. This integration makes forgery computationally impractical without comparable effort, preserving security\&accountability.

\begin{table}[!t]
\centering
\caption{Effect of learning rate ($LR$) and reference threshold ($\eta_G$) on shard rate and main-task accuracy. Values are averaged.}
\label{tab:merged_lr_eta}
\setlength{\tabcolsep}{5pt}
\resizebox{\linewidth}{!}{
\renewcommand{\arraystretch}{1.05}
\begin{tabular}{c|c|ccc|ccc}
\toprule

 \multicolumn{2}{c}{} & \multicolumn{3}{c}{\textbf{ResNet18, CIFAR-10}} & \multicolumn{3}{c}{\textbf{TextCNN, AG News}} \\

\midrule

\multicolumn{1}{c}{}  & \textbf{Metric} & \textbf{0.01} & \textbf{0.1} & \textbf{0.2} & \textbf{0.0005} &  \textbf{0.005} & \textbf{0.01} \\

\cmidrule{2-8}

\multirow{4}{*}{$LR$} & epochs & 120 & 113 & 101 & 105 & 101 & 101  \\
& shards &  44 &  73 &  79 &  67 &  72 &  90  \\
& $|s|$ (\%) &  \cellcolor{magenta!11} 36.67 &\cellcolor{magenta!11}  64.61 & \cellcolor{magenta!11} 78.21 &\cellcolor{magenta!11} 63.81 &\cellcolor{magenta!11} 71.28 & \cellcolor{magenta!11} 88.51  \\
& $Acc_{\text{main}}$ (\%) &  \cellcolor{teal!21}  91.88 & \cellcolor{teal!21}   91.66 & \cellcolor{teal!21}  92.11 &\cellcolor{teal!21}  90.53 & \cellcolor{teal!21}  90.98 & \cellcolor{teal!21}  88.51 \\

\midrule

\multicolumn{1}{c}{}  & \textbf{Metric} & \textbf{0.85} & \textbf{0.95} & \textbf{0.99} & \textbf{0.85} & \textbf{0.95} & \textbf{0.99} \\

\cmidrule{2-8}

\multirow{4}{*}{$\eta_G$}  & epochs & 105 & 101 & 113 & 100 & 102 & 101 \\
& shards &  91 &  78 &  73 &  95 &  94 &  72 \\
& $|s|$ (\%) & \cellcolor{magenta!11} 86.67 & \cellcolor{magenta!11} 77.22 & \cellcolor{magenta!11} 64.61  & \cellcolor{magenta!11} 95.00 &\cellcolor{magenta!11} 92.25 & \cellcolor{magenta!11} 71.28 \\
& $Acc_{\text{main}}$ (\%) & \cellcolor{teal!21}  92.21  &  \cellcolor{teal!21} 91.89 &  \cellcolor{teal!21} 92.02  &\cellcolor{teal!21}  90.76 & \cellcolor{teal!21} 90.31  & \cellcolor{teal!21} 90.98 \\
\bottomrule
\end{tabular}
}
\end{table}

\end{document}